\documentclass[12pt, draftclsnofoot, onecolumn]{IEEEtran}
\usepackage[utf8]{inputenc}
\usepackage[english]{babel}
\usepackage{cite}
\IEEEoverridecommandlockouts
\usepackage[pdftex]{graphicx}
\usepackage[font=small,labelfont=bf]{caption}
\captionsetup[figure]{labelfont={bf},labelformat={default},labelsep=period,name={Fig.}}
\usepackage{subcaption}
\usepackage{enumitem} 
\usepackage{cite}
\usepackage{float}
\usepackage{amsfonts}%
\usepackage{amsmath}%
\usepackage{fixmath}%
\setcounter{MaxMatrixCols}{30}%
\usepackage{amssymb}%
\usepackage{graphicx}
\usepackage{accents}
\usepackage{comment}
\usepackage{url}
\usepackage{booktabs}
\usepackage[table,xcdraw]{xcolor}
\usepackage[bottom]{footmisc}
\captionsetup[figure]{name={Fig.}}
\usepackage{setspace}
\usepackage{mathtools}
\usepackage{fixltx2e}
\usepackage{stfloats}
%  \usepackage[caption=false,font=footnotesize]{subfig}
%\fi
\usepackage{tabularx}
\usepackage{amsthm}
\usepackage{mathrsfs}
\usepackage{amsbsy}

\usepackage{algorithmic}[2]
\usepackage[linesnumbered,boxed,ruled,vlined]{algorithm2e}
\newtheorem{thm}{Theorem}
\newtheorem{lem}[thm]{Lemma}

\newtheorem{defn}{Definition}

\usepackage{hyperref}

\setcounter{totalnumber}{50}
\setcounter{topnumber}{50}
\setcounter{bottomnumber}{50}
\allowdisplaybreaks

\begin{document}
	\title{Can Dynamic TDD Enabled Half-Duplex Cell-Free Massive MIMO Outperform Full-Duplex Cellular Massive MIMO?}
	\author{Anubhab Chowdhury, Ribhu Chopra, and Chandra R. Murthy
		\thanks{Anubhab Chowdhury and Chandra R. Murthy are  with  the  Department   of   ECE,   Indian   Institute   of   Science,   Bengaluru 560012,   India.   (e-mails:\{anubhabc, cmurthy\}@iisc.ac.in.)}
	\thanks{Ribhu Chopra is with the Department of EEE, Indian Institute of Technology Guwahati, Assam 781039, India (e-mail: ribhufec@iitg.ac.in.)}}

	\maketitle
	\begin{abstract}
	We consider a dynamic time division duplex~(DTDD) enabled cell-free massive multiple-input multiple-output~(CF-mMIMO) system, where each half-duplex~(HD) access point~(AP) is scheduled to operate in the uplink~(UL) or downlink~(DL) mode based on the data demands of the user equipments~(UEs), \textcolor{black}{with the goal of maximizing the sum UL-DL spectral efficiency (SE).} 
	\textcolor{black}{We develop a new, low complexity, greedy algorithm for the combinatorial AP scheduling problem, with an optimality guarantee theoretically established via showing that a lower bound of the sum UL-DL SE is sub-modular.}
	We also consider pilot sequence reuse among the UEs to limit the channel estimation overhead. In CF systems, all the APs estimate the channel from every UE, making pilot allocation problem different from the cellular case. We develop a novel algorithm that iteratively minimizes the maximum pilot contamination across the UEs.  We compare the performance of our solutions, both theoretically and via simulations, against a full duplex~(FD) multi-cell mMIMO system. Our results show that, due to the joint processing of the signals at the central processing unit, CF-mMIMO with dynamic HD AP-scheduling significantly outperforms cellular FD-mMIMO in terms of the sum SE and $90\%$ likely SE. Thus, DTDD enabled HD CF-mMIMO is a promising alternative to cellular FD-mMIMO, without the cost of hardware for self-interference suppression.
	\end{abstract}
	\begin{IEEEkeywords}
	Cell-free massive MIMO, dynamic TDD, sub-modular optimization, pilot contamination.
\end{IEEEkeywords}

\IEEEpeerreviewmaketitle

    \section{Introduction}
    Cell free massive multiple-input multiple-output~(CF-mMIMO) refers to a network architecture where multiple access-points~(APs) coherently and simultaneously serve a number of user equipments~(UEs) distributed over a large geographical area~\cite{cell_small, Nayebi,EGL2,EB_Stochastic}. Recently, CF-mMIMO has emerged as a promising candidate technology for the physical layer of next generation wireless communication systems~\cite{6G}. It has
    been shown that under appropriate conditions~\cite{EB_Stochastic}, CF-mMIMO inherits many of the advantages offered by cellular massive MIMO such as channel hardening and favorable propagation.  
    However, in their current form, CF-mMIMO systems are designed to work in the time division duplexed (TDD) mode, hence serving either only uplink~(UL) or only downlink~(DL) traffic at any given point in time.
While enabling full duplex~(FD) capabilities at the APs can simultaneously cater to the UL and DL data demands, the performance of such systems is limited by the residual self-interference~(SI) power at each AP~\cite{Sabharwal}.
\vspace*{-0.1cm}
\subsection{Motivation}  

In the context of cellular mMIMO, dynamic TDD (DTDD) has recently been explored to cater to heterogeneous UL-DL data demands from the UEs. This technique entails adaptive and independent splitting of the transmission frame into UL and DL slots by the different base stations (BSs) according to the UL-DL traffic demands from the UEs in each cell~\cite{DTDD_TWC}.
While this improves the overall spectral and time resource utilization  across cells, it does not fully cater to heterogeneous data demands within the cells. That is, a UE with UL data  demand will still have  to wait for a slot where its serving BS is operating in the UL mode in order to complete its transmission, and similarly for a UE with DL data demand. On the other hand, in a CF-mMIMO system, since the UEs are not associated with a particular AP, if the APs can dynamically select the slots where they operate in UL and DL modes, any UE with a specific data demand can find some nearby APs operating in the corresponding mode in the same slot. Further, the joint processing of the signals at the CPU can mitigate the cross-link interferences that arise in a CF system. Due to this, a CF-mMIMO system with DTDD can potentially match or even exceed the performance of an FD-capable cellular system, while using half-duplex (HD)  hardware at the APs. Therefore, the use of DTDD in conjunction with CF-mMIMO is the focus of this work. 
\vspace*{-0.1cm}
    \subsection{Related Work}
    DTDD is a well accepted technique; it  has been included in cellular communication standards such as $3$GPP LTE Release $12$~\cite{3GPP_config} and 5G NR~\cite{5GNR,5GNR_crosslink} to accommodate heterogeneous traffic loads.
Traffic-dependent UL-DL slot adaptation schemes have been shown to reduce the overall system latency~\cite{next_gen_TDD} and improve the spectral efficiency (SE)~\cite{DTDD_testbed,textbed_Sir} compared to TDD-based conventional cellular and CF mMIMO systems. However, the performance of DTDD is limited by two types of cross link interferences~(CLIs), namely, the interference from the DL BSs to the UL BSs and the interference from the UL UEs to the DL UEs. The CLI can  be mitigated via intra cell-cooperation, power control and beamforming design, UE scheduling, etc. An excellent survey on the methods for CLI mitigation in cellular mMIMO can be found in~\cite{survey}. 
    
    On the other hand, FD technology can also serve UL and DL UEs simultaneously and has the potential to double the system capacity. Note that, in a cellular FD mMIMO system, similar CLIs exist as in DTDD based systems. However, in addition, each BS suffers from its own residual SI. In fact, the transmit RF-chain noise, oscillator phase noise, and related device imperfections get amplified while propagating through the SI channel and limit the FD system performance~\cite{Sabharwal}. Also, the benefits of an FD cellular system considerably degrades under asymmetric traffic load~\cite{FD_hybrid}. In contrast, DTDD obviates the need for expensive and potentially power-hungry hardware as well as digital signal processing costs associated with SI mitigation. Numerical experiments have shown that the throughput of the cellular FD-system  degrades relative to cellular DTDD as the asymmetry between the UL and the DL traffic increases~\cite{FD_DTDD_1}. 
    
    The current cellular deployments of DTDD require inter-cell cooperation, i.e., the neighboring cells are required to exchange information~(such as estimated channel statistics or the per cell traffic load) for optimal UL-DL slot scheduling or interference mitigation. Although such techniques are attractive in theory, the sub-problems  of BS/UE scheduling, power control, cell clustering~\cite{5GNR_cluster}, and joint beamformer design~\cite{Beam_DTDD} are prohibitively complex to implement in a practical system. Moreover, the performance loss of the cell-edge UEs due to out-of-cell as well as CLI is a serious issue with any cellular architecture. 
    
In contrast to cellular mMIMO, in a CF system, all the UEs in a given geographical area are served by all the available APs by jointly processing the signals to/from the  UEs at a central processing unit (CPU). At the cost of a larger front-haul bandwidth, the CPU can utilize the knowledge of locally estimated channels from each AP to suppress the CLIs without inter-AP cooperation or extra signaling overhead~\cite{net_work}. Due to this, the quality of service (QoS) delivered is nearly uniform across all the UEs~\cite{EGL2}. The advantages offered by DTDD along with the inherent benefits of the CF-architecture can be exploited to further enhance the system throughput under asymmetric traffic load. We not only dispense with the SI cancellation hardware at each AP; the computational burden and signaling overhead involved in CLI mitigation of a cellular mMIMO system is also considerably reduced at the CPU. 
    
    In the context of DTDD enabled CF-mMIMO systems, the authors in~\cite{net_work} presented a UE scheduling algorithm to alleviate the CLI from UL UEs to DL UEs. Recently, in~\cite{Lee_DTDD}, the authors proposed a so-called beamforming training based scheme, where the estimates of the effective DL channels are exploited to reduce inter-AP CLI.  All the previous works assume a fixed UL and DL configuration across the APs, and focus primarily on CLI mitigation methods. However, unless the transmission and reception mode of each AP is dynamically adapted based on  the traffic demands of the UEs, the benefits of DTDD cannot be fully exploited. Therefore, to enable DTDD, we need to split the time resources optimally at each of the APs. However, scheduling the APs via exhaustive search over all possible AP configurations is prohibitively complex.  Motivated by this, we formulate the problem of optimally scheduling APs in the UL or DL modes to maximize the sum UL-DL SE in a DTDD enabled CF-mMIMO system and propose a scalable solution by exploiting a sub-modularity property of the sum UL-DL SE.
    
   We note that, in CF-mMIMO systems, a natural UE grouping by the serving BSs does not exist, unlike a cellular mMIMO system. Also, in  cellular mMIMO, only the serving BS aims to estimate the channel from a given UE, while in a CF-mMIMO system, all the APs in the vicinity of a given UE need to obtain good channel estimates. Therefore, in order to mitigate pilot contamination, one needs to revisit the problem of pilot allocation across UEs in CF-mMIMO systems. For example, physically proximal UEs should not use the same pilot sequences.  
   Therefore, in this paper, we also address the problem of pilot allocation in CF-mMIMO systems along with the problem of AP-scheduling.

\subsection{Contributions:}

In this paper, we investigate how to facilitate DTDD in a CF-mMIMO system with HD-APs. DTDD allows us to partition the time-slots at each AP into UL and DL slots according to the UL and the DL traffic demands at the UEs. The scheduling of APs based on the data demands and analyzing the resulting network throughput performance is the main goal of this work. 
Our main contributions are as follows:
    \begin{enumerate}
    	\item We formulate the AP-scheduling problem as one of maximizing the sum UL-DL SE given the traffic demands from the UEs and considering  matched filter precoding~(MFP) in the DL and maximal ratio combining~(MRC) in the UL based on the locally estimated channels.  This problem turns out to be NP-hard, and hence the computational complexity of a brute force  search based solution grows exponentially with the number of APs. We first argue that the achievable sum UL-DL SE is a monotonic nondecreasing function of the set of scheduled APs. Then, we observe that the dependence of the sum UL-DL SE on the scheduled AP-set is non-linear in nature and therefore proving sub-modularity becomes mathematically intractable. To circumvent that, we derive the following results:
    	\begin{enumerate}
    		\item  We lower bound the sum UL-DL SE and prove that problem of maximizing the lower bound is equivalent to the problem of maximizing product of the SINRs.
    		\item We prove that product of the SINRs of all UEs is a sub-modular set function of the APs scheduled in the system. (See Theorem~\ref{thm:submod}.)
    	\end{enumerate} 
     \item This allows us to develop a greedy algorithm for dynamic AP-scheduling, where,
    	at each step, the transmission mode of the AP that maximizes the incremental SE is
    	added to the already scheduled AP-subset. The lower bound on the sum UL-DL SE achieved by the solution obtained via the greedy algorithm is guaranteed to be within
    	a $(1-\frac{1}{e})$-fraction of its global optimal value.~(See Algorithm~\ref{algo:GA_submod}.)
    	We note that the computational complexity of the greedy algorithm is linear in the number of APs. 
    	\item \textcolor{black}{We also analyze the UL and DL SE considering a minimum mean square error~(MMSE) based combiner in the UL and regularized zero forcing~(RZF) precoder in the DL, and demonstrate the performance improvement obtained compared to MRC and MFP.}
    	\item We then shift focus to the problem of allocating pilot sequences across the UEs. In a CF system, optimal pilot allocation is also a combinatorial optimization problem, and unless the pilots are judiciously assigned across UEs, pilot contamination can lead to poor channel estimates at multiple APs. Therefore, we develop an iterative pilot allocation algorithm based on locally estimated channel statistics~(see Algorithm~\ref{GA_pilot}) at the APs. This algorithm  does not require extra signaling overhead in the form of inter-AP coordination.~(See Sec.~\ref{sec: pilot_allocation}.)

    \end{enumerate}

\textcolor{black}{Our experimental results show that the greedy algorithm procures a sum UL-DL SE that matches with exhaustive search based AP-scheduling, and that the algorithm is robust to both inter-UE and inter-AP CLI.
    	Furthermore, DTDD CF-mMIMO substantially enhances the system  performance compared to  static TDD based CF as well as cellular systems. 
	Interestingly, the DTDD based CF-system outperforms an FD cellular mMIMO system under both MRC \& MFP as well as MMSE \& RZF combiner and precoder employed at the APs/BSs. For example, a CF-DTDD system with $(M=16,N=64)$ even outperforms the cellular FD-system having twice the antenna density, i.e., $(L=16, N_t=N_r=64)$. If we increase the number of APs with half the antenna density compared to the FD (see the curve corresponding to $(M=64, N=16)$ in Fig.~\ref{fig:CDF_sum_SE}), the sum UL-DL SE offered by HD CF-DTDD improves, significantly outperforming the cellular FD system. }

We conclude that, due to the benefits offered by joint signal processing at the CPU, HD CF-mMIMO with dynamic AP-scheduling  offers improved sum SE as well as $90\%$-likely SE compared to static TDD based CF systems and even the cellular FD mMIMO system. Therefore, DTDD enabled CF-mMIMO system with appropriately scheduled APs is a promising solution to meet the heterogeneous traffic loads in next generation wireless systems.

\emph{Notations:} 
    Matrices, vectors, and sets are denoted by bold upper-case, bold lower-case, and calligraphic letters, respectively. $(\cdot)^T$, $(\cdot)^H$, $(\cdot)^*$, and $\text{tr}(\cdot)$ represent transposition, hermitian, complex conjugation, and trace operations, respectively.  $| \cdot |$, $\setminus$, $'$, and $\cup$ denote the cardinality, set-subtraction, complement, and union of sets, respectively. $\mathbb{E}[\cdot]$ and ${\tt var}\{\cdot\}$ denote the mean and variance of a random variable, respectively. $\mathbf{x}\sim\mathcal{CN}(\mathbf{0},\boldsymbol{\Sigma}_{N})$ indicates that $\mathbf{x}$ is a zero mean circularly symmetric complex Gaussian (CSCG) random vector with  covariance matrix $\boldsymbol{\Sigma}_N\in\mathbb{C}^{N\times N}$. \textcolor{black}{
    Other frequently used symbols  are cataloged in Table~\ref{Table: Symbols}.}
    
\begin{table}[]
	\centering
	\caption{\textcolor{black}{Symbols}}
	\label{Table: Symbols}
	\resizebox{\textwidth}{!}{%
		\begin{tabular}{@{}|
				>{\columncolor[HTML]{EFEFEF}}l |l|
				>{\columncolor[HTML]{EFEFEF}}l |l|@{}}
			\toprule
			\cellcolor[HTML]{CEF5CE}{\color[HTML]{000000} \textbf{Symbol}} &
			\cellcolor[HTML]{CEF5CE}\textbf{Definition} &
			\cellcolor[HTML]{CEF5CE}\textbf{Symbol} &
			\cellcolor[HTML]{CEF5CE}{\color[HTML]{000000} \textbf{Definition}} \\ \midrule
			$\mathbf{A}\in\mathbb{C}^{m\times n}$ &
			A matrix with $m$ rows and $n$ columns &
			$\mathbf{a}\in\{0,1\}^{n}$ &
			An $n$ length vector with each element being either $0$ or $1$ \\ \midrule
			$\mathbf{I}_{N}$ &
			Identity matrix of dimension $N\times N$ &
			$|\mathcal{A}|=n$ &
			$\mathcal{A}$ is an index set with cardinality $n$ \\ \midrule
			\begin{tabular}[c]{@{}l@{}}$\mathcal{A}(m)$\end{tabular} &
			\begin{tabular}[c]{@{}l@{}} $m$th element of the index set $\mathcal{A}$\end{tabular} &
			$\{\boldsymbol{\phi}_1,\boldsymbol{\phi}_2,\ldots,\boldsymbol{\phi}_{\tau_p}\}$&
			\begin{tabular}[c]{@{}l@{}} The set of $\tau_{p}$ orthonormal pilot sequences\end{tabular} 
			\\ \midrule
			\begin{tabular}[c]{@{}l@{}}$\mathcal{I}_{p}$\end{tabular} &
			\begin{tabular}[c]{@{}l@{}}The set of UEs that use the $p$th pilot sequence\end{tabular}&
			$\beta_{mk}$ &
			Pathloss coefficient between the $m$th AP and $k$th UE
			\\ \midrule
			$\mathbf{h}_{mk}$ &
			Fast fading channel between the $m$th AP and $k$th UE &
			$\alpha_{mk}^2$&
		The variance of the $k$th UE's estimated channel at the $m$th AP\\ \midrule
			$\mathcal{E}_{p,k}$ &
			Pilot power of the $k$th UE 
			 &
			$\mathcal{E}_{u,k}$ and $\mathcal{E}_{d,j}$&				\begin{tabular}[c]{@{}l@{}}
			UL data power of the $k$th UE and DL data power of the $j$th AP\end{tabular}
		 \\ \midrule
			$\mathcal{U}$ &
			Set of all UEs, with  $|\mathcal{U}|=K$&
			$\mathcal{A}$ &
			Set of all AP indices \\ \midrule
			$\mathcal{U}_{u}$ and $\mathcal{U}_{d}$ &
			Set of UL and DL UEs, respectively &
			$\kappa_{jk}$ &
			DL power control coefficient for the $k$th UE at the $j$th AP  \\ \midrule
			$\mathcal{A}_{u}$ and $\mathcal{A}_{d}$&\begin{tabular}[c]{@{}l@{}}
			Set of UL and DL APs, respectively\end{tabular} &
			$M$ and $N$&
			Number of APs and antennas per AP, respectively \\ \midrule
			$\epsilon_{nk}$ &
			\begin{tabular}[c]{@{}l@{}}The variance of the channel between the $n$th UL UE\\  and the $k$th DL UE\end{tabular} &
			$\zeta_{mj}$ &
			\begin{tabular}[c]{@{}l@{}}The variance of the residual interference channel between \\ the $j$th DL AP and the $m$th UL AP\end{tabular} \\ \bottomrule
		\end{tabular}%
	}
\vspace*{-.7cm}
\end{table}

	\section{System Model and Problem Statement}

	We consider a CF-mMIMO system with $M$ HD-APs jointly and coherently serving $K$ single-antenna UEs. Each AP is equipped with $N$ antennas and is connected to the CPU via an infinite capacity front-haul link. Time is divided into slots, and in any given slot, each AP can operate 
	either in the UL mode or in the DL mode. 
	We assume that the UL/DL traffic demands of the UEs are known at the CPU; its task is to decide the mode of operation of each AP based on the traffic demands in its vicinity.
	
	The channel from $k$th UE to the $m$th AP is modeled as $\mathbf{f}_{mk} =\sqrt{\beta_{mk}}\mathbf{h}_{mk} \in \mathbb{C}^N$, where $\beta_{mk} > 0$ denotes the large scale fading and path loss coefficient, and \textcolor{black}{are known} to the APs and the CPU. \textcolor{black}{Note that $\beta_{mk}$ remains unchanged over several channel coherence intervals\cite{cell_small,Nayebi,EGL2}.} The fast fading components, $\mathbf{h}_{mk}  \sim \mathcal{CN}(\mathbf{0},\mathbf{I}_N) \in \mathbb{C}^N$, are the independent and identically distributed~(i.i.d.) and \textcolor{black}{are estimated at  the APs (and the CPU) using pilot signals. Under a quasi-static fading model, $\mathbf{h}_{mk}$ remains constant over one coherence interval, and takes independent values from the same distribution in subsequent coherence intervals~\cite{cell_small,Nayebi,EGL2,Clustered_CF,Pilot_sum_rate,random_vs_structured,making_cf_emil}. } Furthermore, the foregoing analysis can be extended to the case of spatially correlated channels with some effort, but the equations become more cumbersome and do not offer significant additional insights.

	Due to simultaneous UL and DL data transmissions, the APs transmitting in the DL cause interference to the APs receiving the UL data, which is the source of inter-AP interference. 
    However, since the channel state information (CSI) of the inter AP channels available at the CPU may be erroneous, residual inter AP interference may exist even after interference cancellation. In this paper, we account for the effect of residual interference due to imperfect AP-to-AP interference cancelation in our analysis. \textcolor{black}{In the literature, the residual SI is modeled as Gaussian distributed independent additive noise~\cite{Asymptotic,net_work,FD_CF_ICC,my_FD_SPAWC,AliC}; we use the same approach. We model the residual interference channel between $j$th DL AP and the $m$th UL AP by $\mathbf{{G}}_{mj}\in \mathbb{C}^{N\times N}$, with its elements being i.i.d. $\mathcal{CN}(0,\zeta_{mj})$, where $\zeta_{mj}$ depends on the inter AP path loss and channel estimation error variance.}
	Similarly, we let $\mathtt{g}_{nk}$ denote the channel between $n$th UL UE and the $k$th DL UE, and we model $\mathtt{g}_{nk} \sim \mathcal{CN}(0,\epsilon_{nk})$ and  independent across all UEs~\cite{my_FD_SPAWC,Asymptotic}. 
	\vspace{-0.1cm}
	
	\subsection{Problem Statement}\label{sub_sec_problem}
		In this work, we investigate how to enable DTDD in a CF mMIMO system with HD APs. 
		Let $\mathcal{A}$ be the set of AP indices, with $M=|\mathcal{A}|$. Let the indices of the APs scheduled in the UL and DL modes be contained in the index sets $\mathcal{A}_{u}$ and $\mathcal{A}_{d}$, respectively. 
		We aim to  maximize the achievable sum UL-DL SE $R_{\mathrm{sum}}(\mathcal{A}_{u}\cup\mathcal{A}_{d})$ over all possible choices of $\mathcal{A}_{u}$ and $\mathcal{A}_{d}$ by solving:		\vspace{-0.1cm}
		\begin{align}\label{eq:problem_statement}
			&\textstyle{\max_{\mathcal{A}_{u},\mathcal{A}_{d}}\quad R_{\mathrm{sum}}(\mathcal{A}_{u}\cup\mathcal{A}_{d})}\notag\\&\textstyle{\mathrm{s.t.}\quad \mathcal{A}_{u},\mathcal{A}_{d}\subset\mathcal{A}, \quad \mathcal{A}_{u}\cap\mathcal{A}_{d}=\emptyset, \quad  \mathcal{A}_{u}\cup\mathcal{A}_{d}=\mathcal{A}.}
		\end{align}
In the above, the condition $\mathcal{A}_{u}\cap\mathcal{A}_{d}=\emptyset$ arises because of the half-duplex constraint at the APs. Evidently, an exhaustive search can be performed across all $2^{M}$ possible configurations, but this becomes computationally expensive as $M$ gets large. We develop a low complexity AP scheduling algorithm in Sec.~\ref{sec:Submodular}.
			
Another aspect that we address in the current work relates to the problem of channel estimation. As mentioned earlier, in CF-mMIMO, all APs are potentially interested in estimating the channel from all UEs, as the signals are jointly processed at the CPU. We consider the reuse of a set of orthonormal pilot sequences across the UEs in order to limit the pilot overhead. However, in this case, it is necessary to assign pilots to UEs such that, at any given AP, the UEs in its vicinity are assigned orthogonal pilots, as far as possible. We address this problem in Sec.~\ref{sec: pilot_allocation}. 
			
We next discuss the CF-mMIMO signal model within each slot and the channel estimation at the APs, which is our point of departure in this work.

	\section{Signaling Model and Channel Estimation}\label{sec: channel_estimation}
	We assume that each slot consists of $\tau$ symbols or channel uses, of which the first $\tau_{p}$ are used for the UL channel estimation. In these $\tau_p$ symbols, all the UEs transmit pilot sequences to the APs. 
	Due to the large number of UEs being served, the training overhead associated with allotting orthogonal pilot sequences to all $K$ UEs could be inordinately high, as it requires $\tau_{p}\ge K$. Hence, we use a set of $\tau_p$ orthonormal pilot sequences denoted by $\{\boldsymbol{\phi}_1,\boldsymbol{\phi}_2,\ldots,\boldsymbol{\phi}_{\tau_p}\}$, where $\tau_{p}$ could be less than $K$, and $\boldsymbol{\phi}_p\in\mathbb{C}^{\tau_p},  p=1,\ldots,\tau_{p}$. 
	These $\tau_{p}$ sequences are allotted to the $K$ UEs, and we denote the indices of the UEs employing the $p$th pilot sequence  by the set $\mathcal{I}_p$. Note that the cardinality of $\mathcal{I}_p$ indicates the repetition factor of the $p$th pilot sequence, such that $\sum_{p=1}^{\tau_{p}}|\mathcal{I}_p|=K$.  
	Let $\mathcal{E}_{p,k}$ be the power of the pilot signal by the $k$th UE. Then, the received pilot signal matrix at the $m$th AP becomes
	$\textstyle{\mathbf{Y}_{p,m}=\sum\nolimits_{p=1}^{\tau_{p}}\sum\nolimits_{k\in \mathcal{I}_p}\sqrt{\tau_{p}\mathcal{E}_{p,k}}\mathbf{f}_{mk}\boldsymbol{\phi}_{p}^{T}+\mathbf{W}_{p,m}\in\mathbb{C}^{N\times \tau_{p}}},$
where $\mathbf{W}_{p,m}$ is the receiver noise matrix whose columns are distributed as $\mathcal{CN}(\mathbf{0},N_{0}\mathbf{I}_N)$. Post-multiplying $\mathbf{Y}_{p,m}$
by $\boldsymbol{\phi}_{p}^{*}$, the processed signal at the $m$th AP becomes
	\begin{equation}\label{eq:pilot_receieve}
	\textstyle{\mathbf{\dot{y}}_{p,m}=\sqrt{\mathcal{E}_{p,k}\tau_{p}}\mathbf{f}_{mk}+\sum\nolimits_{n\in \mathcal{I}_{p}\setminus k}\sqrt{\mathcal{E}_{p,n}\tau_{p}}\mathbf{f}_{mn}+\dot{\mathbf{w}}_{p,m}},
	\end{equation}
	with $\dot{\mathbf{w}}_{p,m}=\dot{\mathbf{W}}_{p,m}\boldsymbol{\phi}_{p}^{*}~\sim\mathcal{CN}(\mathbf{0},N_{0}\mathbf{I}_N)$. We can now estimate the channel $\mathbf{f}_{mk}$ using $\mathbf{\dot{y}}_{p,m}$. 
	We have the following result from \cite{my_MMSE_SPAWC}. 
The linear minimum mean squared error~(LMMSE) estimate of $\mathbf{f}_{mk}$, denoted as $\hat{\mathbf{f}}_{mk}$, can be evaluated as $\hat{\mathbf{f}}_{mk}={\mathbb{E}[\mathbf{f}_{mk}\mathbf{\dot{y}}_{p,m}^{H}]}({\mathbb{E}[\mathbf{\dot{y}}_{p,m}\mathbf{\dot{y}}_{p,m}^{H}]})^{-1}\mathbf{\dot{y}}_{p,m}$, which becomes $\textstyle{\hat{\mathbf{f}}_{mk}=\frac{\sqrt{\tau_{p}\mathcal{E}_{p,k}}\beta_{mk}}{\tau_{p}\mathcal{E}_{p,k}\beta_{mk}+\tau_{p}\sum\nolimits_{n\in \mathcal{I}_{p}\backslash k}{\mathcal{E}_{p,n}\beta_{mn}}+N_{0}}\mathbf{\dot{y}}_{p,m}}$,
and $\hat{\mathbf{f}}_{mk}{\sim}\mathcal{CN}(\mathbf{0},\alpha_{mk}^2\mathbf{I}_{N})$, with $\alpha_{mk}^2=c_{mk}\tau_{p}\mathcal{E}_{p,k}\beta_{mk}^2$, and $c_{mk}\triangleq({\tau_{p}\mathcal{E}_{p,k}\beta_{mk}+\tau_{p}\sum\nolimits_{n\in \mathcal{I}_{p}\backslash k}{\mathcal{E}_{p,n}\beta_{mn}}+N_{0}})^{-1}$.
The estimation error, denoted by $\tilde{\mathbf{f}}_{mk} \triangleq {\mathbf{f}}_{mk} - \hat{\mathbf{f}}_{mk}$, is distributed as $\mathcal{CN}(\mathbf{0},\bar{\alpha}_{mk}^{2}\mathbf{I}_{N})$, with $\bar{\alpha}_{mk}\triangleq\sqrt{\beta_{mk}-\alpha_{mk}^2}$. 

    \subsection{Data Transmission}
Let $\mathcal{U}_{u}, \mathcal{U}_{d}, \mathcal{A}_{u}$, and $\mathcal{A}_{d}$ denote the sets containing the indices of UEs demanding UL data, UEs demanding DL data, UL APs, and DL APs, respectively. Now, let the $k$th UL UE send the symbol $s_{u,k}$ with power  $\mathcal{E}_{u,k}$. The data symbol of each UE is assumed to be zero mean, unit variance, and independent of the data symbols sent by the other UEs. Then, the UL signal received  at the $m$th AP~($m\in{\mathcal{A}_{u}}$) can be expressed as
    \begin{align}\label{eq:UL_receieved}
    \textstyle{\mathbf{y}_{u,m}=\sqrt{\mathcal{E}_{u,k}}\mathbf{f}_{mk}s_{u,k}+\sum\nolimits_{\substack{n\in\mathcal{U}_{u}\backslash k}}{\textstyle\sqrt{\mathcal{E}_{u,n}}}\mathbf{f}_{mn}s_{u,n}+\sum\nolimits_{j\in \mathcal{A}_{d}} \mathbf{G}_{mj}\mathbf{x}_{d,j}+\mathbf{w}_{u,m}\in \mathbb{C}^{N}},
    \end{align}
    where $\mathbf{w}_{u,m} \sim \mathcal{CN}(\mathbf{0},N_{0}\mathbf{I}_N)$ is the additive noise, and $\mathbf{x}_{d,j}=\sqrt{\mathcal{E}_{d,j}}\mathbf{P}_{j}\text{diag}(\boldsymbol{\kappa}_{j})\mathbf{s}_{d}$ is the transmitted DL data vector, 
    with  $\mathcal{E}_{d,j}$ being the total power, $\mathbf{P}_{j}\in\mathbb{C}^{N\times K_{d}}$ being the precoding matrix, and $\boldsymbol{\kappa}_{j}$ being the vector of power control coefficients, all at the $j$th DL AP. Note that  $\kappa_{jn}$, i.e., the $n$th element of $\boldsymbol{\kappa}_{j}$, indicates the fraction of power dedicated by the $j$th AP to the the $n$th DL UE ($n\in\mathcal{U}_{d}$). Typically, $\kappa_{jn}$ is designed such that $\mathbb{E}\left[\|\mathbf{x}_{d,j}\|^{2}\right]\leq\mathcal{E}_{d,j}\Rightarrow\sum_{n\in\mathcal{U}_{d}}\kappa_{jn}\textcolor{black}{\mathbb{E}\|\mathbf{p}_{jn}\|^{2}}\leq 1$~\cite{cell_small}.
     
    Let $\mathbf{v}_{mk}$ denote the locally available combining vector at the $m$th AP for the $k$th UE's UL data stream. The $k$th component of the accumulated signal  received at the CPU, ${r}_{u,k} \triangleq \sum\nolimits_{m\in\mathcal{A}_{u}}\mathbf{v}_{mk}^{H}\mathbf{y}_{u,m}$, can be expanded as
    \begin{align}\label{eq: CPU_uplink_receieve}
    &\textstyle{{r}_{u,k}=\sum\nolimits_{m\in\mathcal{A}_{u}}\sqrt{\mathcal{E}_{u,k}}\mathbf{v}_{mk}^{H}\mathbf{f}_{mk}s_{u,k}+\sum\nolimits_{m\in\mathcal{A}_{u}}\sum\nolimits_{n\in\mathcal{U}_{u}\backslash k}\sqrt{\mathcal{E}_{u,n}}\mathbf{v}_{mk}^{H}\mathbf{f}_{mn}s_{u,n}}\notag\\&\textstyle{+\sum\nolimits_{j\in\mathcal{A}_{d}}\sqrt{\mathcal{E}_{d,j}}\sum\nolimits_{n\in\mathcal{U}_{d}}\kappa_{jn}\sum\nolimits_{m\in\mathcal{A}_{u}}\mathbf{v}_{mk}^{H} \mathbf{G}_{mj}\mathbf{p}_{jn}s_{d,n}+\sum\nolimits_{m\in\mathcal{A}_{u}}\mathbf{v}_{mk}^{H}\mathbf{w}_{u,mk}},
    \end{align}
    where $\mathbf{p}_{jn}$ is the $n$th column of $\mathbf{P}_{j}$.
    Similarly, assuming perfect channel reciprocity, the signal received by the $n$th~($n\in\mathcal{U}_{d}$) DL UE can be expressed as
    \begin{align}\label{eq: CPU_DL_receieve}
     \hspace*{-.48cm}\textstyle{r_{d,n}=\sum\limits_{j\in\mathcal{A}_{d}}\hspace*{-.1cm}\kappa_{jn}\sqrt{{\mathcal{E}_{d,j}}}\mathbf{f}_{jn}^{T}\mathbf{p}_{jn}s_{d,n}+\sum\limits_{k\in\mathcal{U}_{u}}\sqrt{\mathcal{E}_{u,n}}\mathtt{g}_{nk}s_{u,k}+\sum\limits_{j\in\mathcal{A}_{d}}\sum\limits_{\substack{ q\in\mathcal{U}_{d}\backslash n}}\kappa_{jq}\sqrt{{\mathcal{E}_{d,j}}}\mathbf{f}_{jn}^{T}\mathbf{p}_{jq}s_{d,q}+w_{d,n}},
    \end{align}
     where $w_{d,n}\sim\mathcal{CN}(0,N_{0})$ is the additive noise at the $n$th DL UE. 
     \begin{figure} 
     	\begin{subfigure}{0.45\textwidth}
     		\centering
     		\includegraphics[width=0.7\textwidth,height=2.5cm]{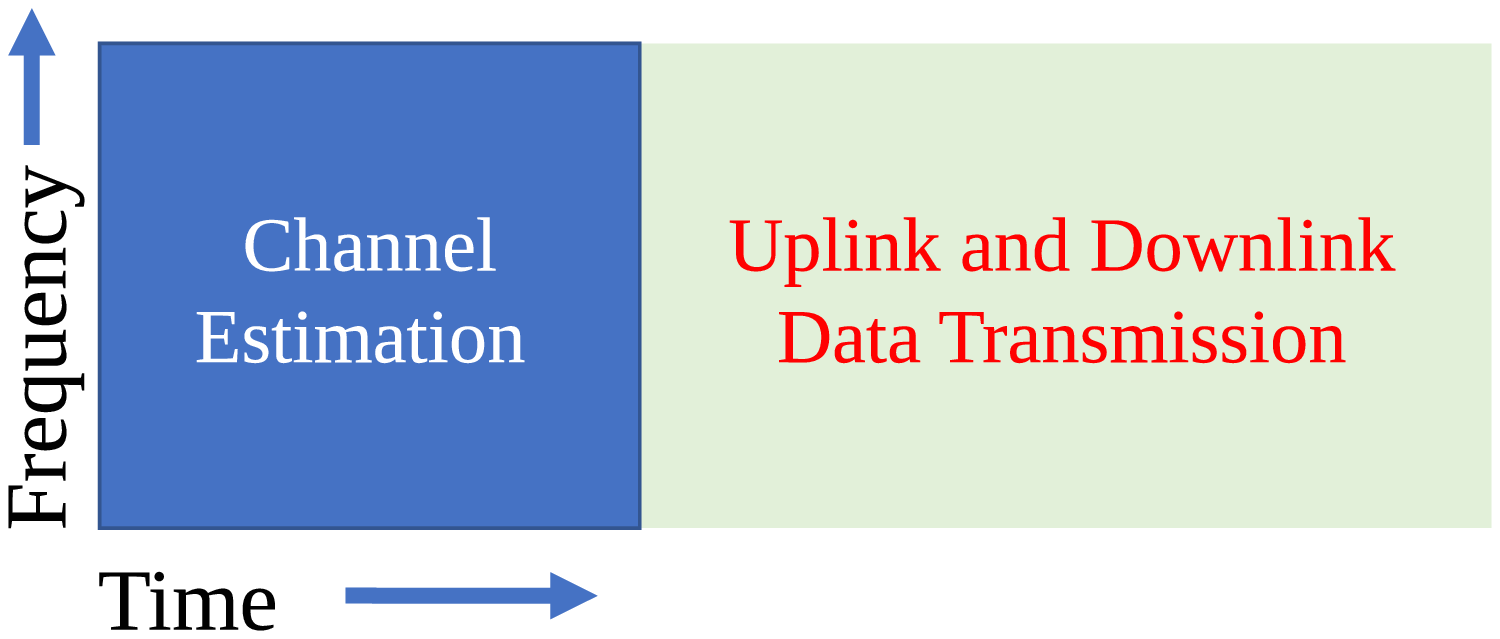}  
     		 \caption{Frame structure of DTDD based CF system}\label{fig:DTDD_frame}  	
     	\end{subfigure}\hfill
     	\begin{subfigure}{0.45\textwidth}
     		\centering
     	\includegraphics[width=0.7\textwidth,height=2.5cm]{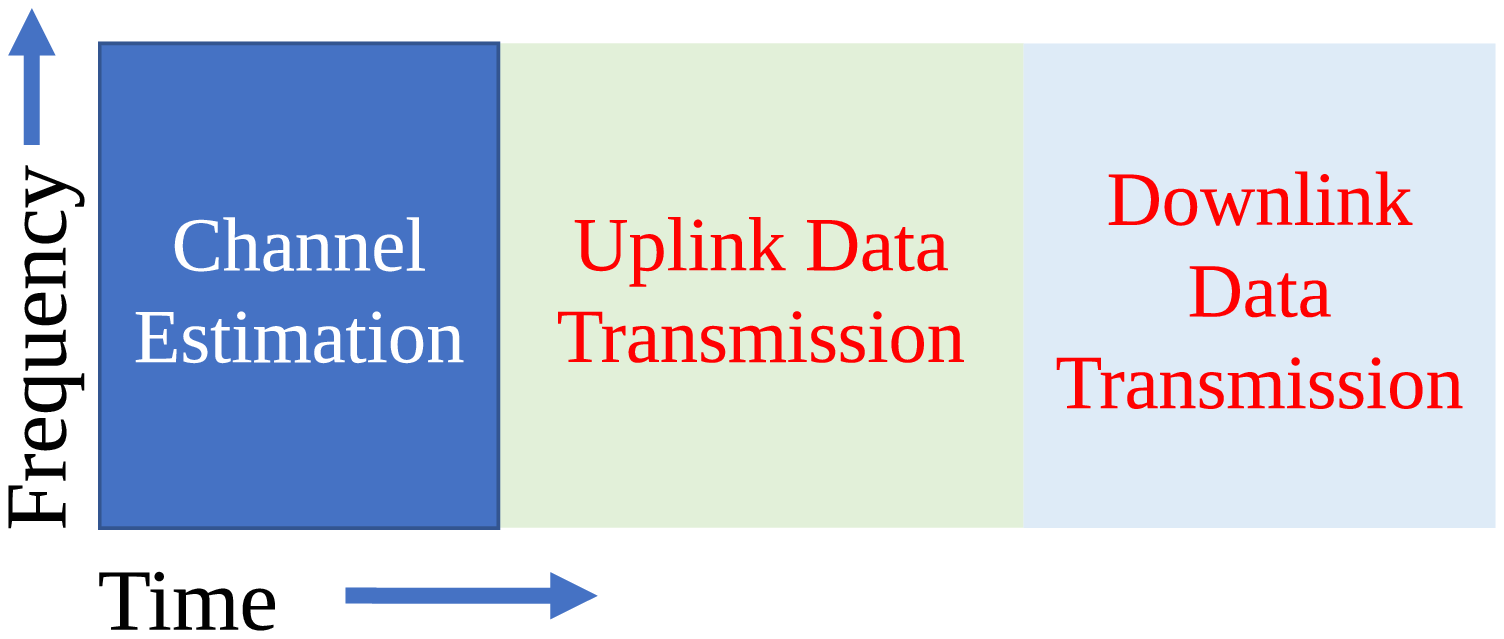}   \caption{Frame structure of TDD based CF system}\label{fig:TDD_frame}  	
     \end{subfigure}
 \caption{
 	\textcolor{black}{DTDD utilizes same time frequency resources for simultaneous UL and DL data transmission by different HD UEs/APs, unlike TDD, where time is partitioned between the UL and DL UEs.}}\label{fig:DTDD_TDD_frame}
 \vspace*{-1cm}
     \end{figure}
 
     \textcolor{black}{
     We illustrate the frame structure described above, and contrast it with the frame structure in a TDD based CF system, in Fig.~\ref{fig:DTDD_TDD_frame}. 
 } We can now derive the achievable UL and DL SEs for the DTDD enabled CF-mMIMO system. 
        \section{Spectral Efficiency Analysis: MRC \& MFP}\label{sec:SE_analysis}
   
     In this section, we derive the achievable SEs considering MFP in the DL and MRC in the UL.  Here, we first consider MFP and MRC for ease of exposition, and also because it suffices to elucidate the main point of this work, namely, the benefits obtainable by enabling DTDD in a CF-mMIMO system. In several other works, for example in~\cite{cell_small,uplink_maxmin, imrpove_cf}, MRC and MFP have been extensively used for the tractable and interpretable analysis.  For deriving the UL and DL SE, we employ the \emph{use-and-then-forget} capacity bounding technique whose effectiveness in CF-mMIMO systems has been well established~\cite{cell_small, uplink_maxmin, EGL2}.  
     
     Now, with $\mathbf{v}_{mk}=\hat{\mathbf{f}}_{mk}$ in~\eqref{eq: CPU_uplink_receieve}, the $k$th UE's of the combined  signal at the CPU becomes
  \begin{align}\label{eq: MRC_uplink}
  &\textstyle{{r}_{u,k}=\sqrt{\mathcal{E}_{u,k}}\mathbb{E}\big[\sum\nolimits_{m\in\mathcal{A}_{u}}\hat{\mathbf{f}}_{mk}^{H}{\mathbf{f}}_{mk}\big]s_{u,k}+\sqrt{\mathcal{E}_{u,k}}\big\{\sum\nolimits_{m\in\mathcal{A}_{u}}\hat{\mathbf{f}}_{mk}^{H}{\mathbf{f}}_{mk}-\mathbb{E}\big[\sum\nolimits_{m\in\mathcal{A}_{u}}\hat{\mathbf{f}}_{mk}^{H}{\mathbf{f}}_{mk}\big]\big\}s_{u,k}}\notag\\&\textstyle{+\sum\nolimits_{m\in\mathcal{A}_{u}}\hat{\mathbf{f}}_{mk}^{H}\big\{\sum\nolimits_{\substack{n\in \mathcal{I}_{p}\backslash k}}\sqrt{\mathcal{E}_{u,n}}\mathbf{f}_{mn}s_{u,n}+\sum\nolimits_{{q\in \mathcal{U}_u\backslash\mathcal{I}_{p}}}\sqrt{\mathcal{E}_{u,q}}\mathbf{f}_{mq}s_{u,q}\big\}}\notag\\&\textstyle{+\sum\nolimits_{m\in\mathcal{A}_{u}}\sum\nolimits_{j\in\mathcal{A}_{d}}\sum\nolimits_{n\in\mathcal{U}_{d}}\kappa_{jn}\sqrt{\mathcal{E}_{d,j}} \hat{\mathbf{f}}_{mk}^{H}\mathbf{G}_{mj}\hat{\mathbf{f}}_{jn}^{*}s_{d,n}+\sum\nolimits_{m\in\mathcal{A}_{u}}\hat{\mathbf{f}}_{mk}^{H}\mathbf{w}_{u,mk}.}
  \end{align}

The first and second terms of~\eqref{eq: MRC_uplink} respectively represent the expected effective array gain and UL beamforming uncertainty, and are uncorrelated with each other. Similarly, the first term is uncorrelated with all the other terms of~\eqref{eq: MRC_uplink}. Invoking the worst case noise theorem~\cite{massivemimobook}, 
the effective SINR of the $k$th UE's data stream, denoted by $\eta_{u,k}$, can be written as 
    	\begin{align}\label{eq:R_up_ergodic}
    	&\textstyle{\eta_{u,k}={{\mathcal{E}_{u,k}}\big[\big|\mathbb{E}\big[\sum_{m\in\mathcal{A}_{u}}\hat{\mathbf{f}}_{mk}^{H}{\mathbf{f}}_{mk}\big]\big|^{2}\Big]}\times\big({\mathcal{E}_{u,k}}{\tt var}\big\{\sum\limits_{m\in\mathcal{A}_{u}}\hat{\mathbf{f}}_{mk}^{H}{\mathbf{f}}_{mk}\big\}+\hspace{-.2cm}\sum\limits_{\substack{n\in \mathcal{I}_{p}\backslash k}}{\mathcal{E}_{u,n}}\mathbb{E}\big[\big|\sum\limits_{m\in\mathcal{A}_{u}}\hat{\mathbf{f}}_{mk}^{H}\mathbf{f}_{mn}\big|^{2}\big]+}\notag\\&\hspace{-.15cm}\textstyle{\sum\limits_{\substack{q\in\mathcal{U}_u \backslash\mathcal{I}_{p}}}\hspace{-.4cm}{\mathcal{E}_{u,n}}\mathbb{E}\big[\big|\!\!\sum\limits_{m\in\mathcal{A}_{u}}\!\!\hat{\mathbf{f}}_{mk}^{H}\mathbf{f}_{mq}\big|^{2}\Big]\hspace{-.1cm}+\hspace{-.15cm}\sum\limits_{n\in\mathcal{U}_{d}}\mathbb{E}\big[\big|\sum\limits_{m\in\mathcal{A}_{u}}\sum\limits_{j\in\mathcal{A}_{u}}\hspace{-.15cm}\sqrt{\mathcal{E}_{d,j}}\kappa_{jn} \hat{\mathbf{f}}_{mk}^{H}\mathbf{G}_{mj}\hat{\mathbf{f}}_{jn}^{*}\big|^{2}\big]\hspace{-.1cm}+\hspace{-.1cm}N_{0}\hspace{-.15cm}\sum\limits_{m\in\mathcal{A}_{u}}\hspace{-.15cm}\mathbb{E}\big\|\hat{\mathbf{f}}_{mk}^{H}\big\|^{2}\big)^{-1}}\hspace{-.1cm}.
    	\end{align}
We simplify the above expression in the following theorem. 
   \begin{thm}\label{thm: rate_UL_thm}
   	The achievable UL SE for the $k$th UE can be expressed as $\mathcal{R}_{u,k}=\log_{2}(1+\eta_{u,k})$, where $\eta_{u,k}$ is the UL SINR which is given by
   	\begin{align}\label{eq:rate_final_uplink}
   	\eta_{u,k}=\dfrac{N\mathcal{E}_{u,k}\left(\sum_{m\in\mathcal{A}_{u}}\alpha_{mk}^2\right)^{2}}{{\tt NCoh}_{u,k}+{\tt Coh}_{u,k}+{\tt IAP}_{u,k}+N_{0}\sum_{m\in\mathcal{A}_{u}}\alpha_{mk}^2},
   	\end{align}
   where $\alpha^2_{mk}$ is as defined after \eqref{eq:pilot_receieve}, $ {\tt NCoh}_{u,k}$ represents the non-coherent inter UE interference, $ {\tt Coh}_{u,k}$ represents the coherent inter UE interference due to pilot contamination, ${\tt IAP}_{u,k}$ represents the inter AP interference, and $N_{0}\sum_{m\in\mathcal{A}_{u}}\alpha_{mk}^2$ corresponds to the effect of AWGN in the UL. These are expressed as   
   	\begin{subequations}
\begin{align}
&\textstyle{{\tt NCoh}_{ u,k}=\sum\nolimits_{\substack{n\in\mathcal{U}_{u}}}\mathcal{E}_{u,n}\sum\nolimits_{m\in\mathcal{A}_{u}}\alpha_{mk}^2\beta_{mn}}\label{eq:NCoh}
\\&\textstyle{{\tt Coh}_{u,k}=N\sum\nolimits_{n\in \mathcal{I}_{p}\backslash k}\mathcal{E}_{u,n}\big(\sum\nolimits_{m\in{\mathcal{A}}_{u}}\alpha_{mk}^2\textstyle{\sqrt{\frac{\mathcal{E}_{p,n}}{\mathcal{E}_{p,k}}}\frac{\beta_{mn}}{\beta_{mk}}}\big)^{2}}\label{eq:Coh}
\\&\textstyle{{\tt IAP}_{u,k}=N\sum\nolimits_{m\in\mathcal{A}_{u}}\sum\nolimits_{j\in\mathcal{A}_{d}}\sum\nolimits_{n\in\mathcal{U}_{d}}\kappa_{jn}^2\zeta_{mj}\alpha_{mk}^2\alpha_{jn}^2\mathcal{E}_{d,j}}\label{eq:IIAP}.
\end{align}
   	\end{subequations}
   \end{thm}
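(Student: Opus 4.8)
The plan is to establish \eqref{eq:rate_final_uplink} by directly evaluating the five second-moment terms that appear in the SINR expression \eqref{eq:R_up_ergodic}, using the LMMSE statistics derived after \eqref{eq:pilot_receieve}: the estimate $\hat{\mathbf{f}}_{mk}\sim\mathcal{CN}(\mathbf{0},\alpha_{mk}^2\mathbf{I}_N)$, the error $\tilde{\mathbf{f}}_{mk}\sim\mathcal{CN}(\mathbf{0},\bar{\alpha}_{mk}^2\mathbf{I}_N)$ independent of the estimate, and the decomposition $\alpha_{mk}^2+\bar{\alpha}_{mk}^2=\beta_{mk}$. Since the channels and the pilot/data noise are independent across APs, every sum over $m\in\mathcal{A}_u$ of zero-mean quantities has vanishing cross-correlations, so second moments of sums collapse to sums of second moments. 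I would first handle the numerator (array gain): writing $\mathbf{f}_{mk}=\hat{\mathbf{f}}_{mk}+\tilde{\mathbf{f}}_{mk}$ and invoking estimate/error orthogonality gives $\mathbb{E}[\hat{\mathbf{f}}_{mk}^H\mathbf{f}_{mk}]=\mathbb{E}\|\hat{\mathbf{f}}_{mk}\|^2=N\alpha_{mk}^2$, so the numerator of \eqref{eq:R_up_ergodic} equals $\mathcal{E}_{u,k}N^2(\sum_{m\in\mathcal{A}_u}\alpha_{mk}^2)^2$.

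Next I would evaluate the denominator term by term. For the beamforming-uncertainty term I compute ${\tt var}(\hat{\mathbf{f}}_{mk}^H\mathbf{f}_{mk})$ as ${\tt var}(\|\hat{\mathbf{f}}_{mk}\|^2)=N\alpha_{mk}^4$ plus ${\tt var}(\hat{\mathbf{f}}_{mk}^H\tilde{\mathbf{f}}_{mk})=N\alpha_{mk}^2\bar{\alpha}_{mk}^2$, which sum to $N\alpha_{mk}^2\beta_{mk}$; this is exactly the $n=k$ contribution of \eqref{eq:NCoh}. For an interferer $q$ on a different pilot, $\hat{\mathbf{f}}_{mk}$ and $\mathbf{f}_{mq}$ are independent, so $\mathbb{E}|\hat{\mathbf{f}}_{mk}^H\mathbf{f}_{mq}|^2=N\alpha_{mk}^2\beta_{mq}$, again of non-coherent form. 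The delicate case is an interferer $n\in\mathcal{I}_p\setminus k$ sharing $k$'s pilot: the crucial observation is that the normalizing constant $c_{mk}$ defined after \eqref{eq:pilot_receieve} depends only on the pilot group, so $c_{mn}=c_{mk}$ and hence $\hat{\mathbf{f}}_{mn}=\frac{\sqrt{\mathcal{E}_{p,n}}\beta_{mn}}{\sqrt{\mathcal{E}_{p,k}}\beta_{mk}}\hat{\mathbf{f}}_{mk}$, i.e.\ the two estimates are collinear. Decomposing $\mathbb{E}|\sum_m\hat{\mathbf{f}}_{mk}^H\mathbf{f}_{mn}|^2=|\sum_m\mathbb{E}[\hat{\mathbf{f}}_{mk}^H\mathbf{f}_{mn}]|^2+{\tt var}(\sum_m\hat{\mathbf{f}}_{mk}^H\mathbf{f}_{mn})$, the mean part yields the coherent pilot-contamination term \eqref{eq:Coh} (a squared sum, hence scaling as $N^2$), while the variance part reduces to $\sum_m N\alpha_{mk}^2\beta_{mn}$ after using $\frac{\alpha_{mn}^2}{\alpha_{mk}^2}\alpha_{mk}^4=\alpha_{mk}^2\alpha_{mn}^2$ and $\alpha_{mn}^2+\bar{\alpha}_{mn}^2=\beta_{mn}$. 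Collecting the $n=k$ term, the different-pilot terms, and the variance parts of the same-pilot terms, the index $n$ ranges over all of $\mathcal{U}_u$ and reproduces \eqref{eq:NCoh} as a single clean sum.

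For the inter-AP term I would condition on the estimates and exploit that the residual SI channel $\mathbf{G}_{mj}$ has i.i.d.\ $\mathcal{CN}(0,\zeta_{mj})$ entries, independent of the channels and independent across the disjoint pairs $m\in\mathcal{A}_u,\,j\in\mathcal{A}_d$; this annihilates all cross-terms and gives $\mathbb{E}|\hat{\mathbf{f}}_{mk}^H\mathbf{G}_{mj}\hat{\mathbf{f}}_{jn}^*|^2=\zeta_{mj}\,\mathbb{E}\|\hat{\mathbf{f}}_{mk}\|^2\,\mathbb{E}\|\hat{\mathbf{f}}_{jn}\|^2=N^2\zeta_{mj}\alpha_{mk}^2\alpha_{jn}^2$, which summed over $j$ and $n\in\mathcal{U}_d$ yields \eqref{eq:IIAP}. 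The AWGN term is immediate, $N_0\sum_m\mathbb{E}\|\hat{\mathbf{f}}_{mk}\|^2=N N_0\sum_m\alpha_{mk}^2$. Finally I would cancel the common factor $N$ between numerator and denominator to arrive at the stated form \eqref{eq:rate_final_uplink}.

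I expect the pilot-sharing (coherent) computation to be the main obstacle: one must recognize the collinearity $\hat{\mathbf{f}}_{mn}\propto\hat{\mathbf{f}}_{mk}$ forced by the shared pilot in \eqref{eq:pilot_receieve}, carefully split its second moment into a coherent mean-square part and a non-coherent variance part, and verify that the variance part merges seamlessly with the genuinely non-coherent contributions so that \eqref{eq:NCoh} can be written as one sum over $\mathcal{U}_u$. The remaining terms are routine Gaussian fourth-moment bookkeeping.
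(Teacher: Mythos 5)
Your proposal is correct and arrives at every term of \eqref{eq:rate_final_uplink}, but it handles the one genuinely delicate computation---the interference from pilot-sharing UEs---by a different route than the paper. The paper's Appendix~\ref{appen: rate_uplink_thm} substitutes the explicit LMMSE formula $\hat{\mathbf{f}}_{mk}=c_{mk}\sqrt{\tau_p\mathcal{E}_{p,k}}\beta_{mk}\,\dot{\mathbf{y}}_{p,m}$ back into $\mathbb{E}\big|\sum_{m}\hat{\mathbf{f}}_{mk}^{H}\mathbf{f}_{mn}\big|^{2}$ and expands the resulting products of raw channel vectors and pilot noise term by term (see \eqref{eq:MUI_interm_1}--\eqref{eq:MUI_interm_3}), eventually invoking the algebraic identity $c_{mk}\sum_{n'\in\mathcal{I}_p}\tau_p\mathcal{E}_{p,n'}\beta_{mn'}=1-N_{0}c_{mk}$ to recombine the pieces into the coherent $N^{2}(\cdot)^{2}$ contribution of \eqref{eq:Coh} plus a residual non-coherent part. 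You instead exploit the statistical structure: since $c_{mn}=c_{mk}$ for $n\in\mathcal{I}_p$, the estimates are collinear, $\hat{\mathbf{f}}_{mn}=\sqrt{\mathcal{E}_{p,n}/\mathcal{E}_{p,k}}\,(\beta_{mn}/\beta_{mk})\,\hat{\mathbf{f}}_{mk}$, so you can split $\mathbb{E}|\cdot|^{2}$ into $|\mathbb{E}[\cdot]|^{2}+{\tt var}\{\cdot\}$ using estimate/error orthogonality; the identity $\big(\mathcal{E}_{p,n}\beta_{mn}^{2}/(\mathcal{E}_{p,k}\beta_{mk}^{2})\big)\alpha_{mk}^{4}=\alpha_{mk}^{2}\alpha_{mn}^{2}$ together with $\alpha_{mn}^{2}+\bar{\alpha}_{mn}^{2}=\beta_{mn}$ then collapses the variance part to $N\alpha_{mk}^{2}\beta_{mn}$, which merges with the $n=k$ and different-pilot contributions to yield \eqref{eq:NCoh} as a single sum over $\mathcal{U}_{u}$, exactly as in \eqref{eq:Coh_1}. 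The two derivations are equivalent; yours is more structural and avoids the explicit fourth-moment bookkeeping, while the paper's expansion makes visible where the pilot-noise power $N_{0}$ enters and cancels inside the contamination residue. The remaining steps (array gain, beamforming-uncertainty variance $N\alpha_{mk}^{2}\beta_{mk}$, the inter-AP term via independence of $\mathbf{G}_{mj}$ from the estimates giving $N^{2}\zeta_{mj}\alpha_{mk}^{2}\alpha_{jn}^{2}$, the AWGN term, and the final cancellation of the common factor $N$) coincide with the paper's.
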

\begin{proof}
	See Appendix~\ref{appen: rate_uplink_thm}.
\end{proof}

We now consider the DL case.  Letting $\mathbf{p}_{jn}=\hat{\mathbf{f}}_{jn}^{*}$, the signal received by the $n$th~($n\in\mathcal{U}_{d}$) DL UE can be expressed as
 \begin{align}\label{eq: user_DL_receieve}
&\hspace{-.3cm}\textstyle{r_{d,n}=\sum\nolimits_{j\in\mathcal{A}_{d}}\kappa_{jn}\sqrt{{\mathcal{E}_{d,j}}}\mathbb{E}\big[\mathbf{f}_{jn}^{T}\hat{\mathbf{f}}_{jn}^{*}\big]s_{d,n}+\sum\nolimits_{j\in\mathcal{A}_{d}}\kappa_{jn}\sqrt{{\mathcal{E}_{d,j}}}\left\{\mathbf{f}_{jn}^{T}\hat{\mathbf{f}}_{jn}^{*}-\mathbb{E}\big[\mathbf{f}_{jn}^{T}\hat{\mathbf{f}}_{jn}^{*}\big]\right\}s_{d,n}}\notag\\&\hspace{-.3cm}\textstyle{+\!\sum\limits_{j\in\mathcal{A}_{d}}\!\big\{\!\sum\limits_{\substack{ q\in \mathcal{I}_{p}\backslash n}}\!\!\sqrt{{\mathcal{E}_{d,j}}}\kappa_{jq}\mathbf{f}_{jn}^{T}\hat{\mathbf{f}}_{jq}^{*}s_{d,q}\!+\!\!\sum\limits_{{q\in\mathcal{U}_d\backslash \mathcal{I}_{p}}}\!\!\sqrt{{\mathcal{E}_{d,j}}}\kappa_{jq}\mathbf{f}_{jn}^{T}\hat{\mathbf{f}}_{jq}^{*}s_{d,q}\big\}\!+\!\sum\limits_{k\in\mathcal{U}_{u}}\!\sqrt{\mathcal{E}_{u,n}}\mathtt{g}_{nk}s_{u,k}\!+\!w_{d,n}.}
\end{align}
We present the DL SE in the following theorem.
\begin{thm}\label{thm:rate_final_dllink}
   	The achievable DL SE for the $n$th UE can be expressed as $\mathcal{R}_{d,n}=\log_{2}(1+\eta_{d,n})$, with DL SINR of the $n$th UE, $\eta_{d,n}$, expressed as
\begin{equation}\label{eq:rate_final_dllink}
\eta_{d,n}=\frac{N^2\left(\sum_{j\in\mathcal{A}_{d}}\kappa_{jn}\sqrt{\mathcal{E}_{d,j}}{\alpha_{jn}^2}\right)^{2}}{{\tt NCoh}_{d,n}+{\tt Coh}_{d,n}+{\tt IU}_{d,n}+N_{0}},
\end{equation}
where  $\alpha^2_{jn}$ is as defined after \eqref{eq:pilot_receieve}, ${\tt NCoh}_{d,n}$, ${\tt Coh}_{d,n}$, and ${\tt IU}_{d,n}$ represent the DL non-coherent interference, coherent interference, and the UE to UE CLI, respectively. These are expressed as
\begin{subequations}
	\begin{align}
	&\textstyle{{\tt NCoh}_{d,n}=N\sum\nolimits_{q\in\mathcal{U}_{d}}\sum\nolimits_{j\in\mathcal{A}_{d}}\mathcal{E}_{d,j}\kappa_{jq}^2\beta_{jn}\alpha_{jq}^2}\label{eq:dl_ncoh}
	\\&\textstyle{{\tt Coh}_{d,n}=N^{2}\sum\nolimits_{\substack{q\in \mathcal{I}_{p}\backslash n }}\big(\sum\nolimits_{j\in\mathcal{A}_{d}}\sqrt{\mathcal{E}_{d,j}}\kappa_{jq}\alpha_{jq}^2\textstyle{\sqrt{\frac{\mathcal{E}_{p,n}}{\mathcal{E}_{p,q}}}\frac{\beta_{jn}}{\beta_{jq}}}\big)^{2}}\label{eq:dl_coh}
	\\&\textstyle{{\tt IU}_{d,n}=\sum\nolimits_{\textcolor{black}{k\in\mathcal{U}_{u}}}\mathcal{E}_{u,n}\epsilon_{nk}}\label{eq:dl_UEUE}
	.\end{align}
\end{subequations}	
\end{thm}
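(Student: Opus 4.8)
The plan is to derive \eqref{eq:rate_final_dllink} by applying the \emph{use-and-then-forget} bounding technique to the decomposition in \eqref{eq: user_DL_receieve}, treating the first (desired) term as the useful signal and all remaining terms as uncorrelated effective noise. First I would form the effective SINR as the ratio of the squared magnitude of the expected desired signal coefficient to the sum of the variance of the desired term plus the powers of the interference and noise terms. Since $\mathbf{p}_{jn}=\hat{\mathbf{f}}_{jn}^{*}$ and $\hat{\mathbf{f}}_{jn}\sim\mathcal{CN}(\mathbf{0},\alpha_{jn}^{2}\mathbf{I}_N)$, the numerator requires computing $\mathbb{E}[\mathbf{f}_{jn}^{T}\hat{\mathbf{f}}_{jn}^{*}]$. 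Using the orthogonality principle, $\mathbf{f}_{jn}=\hat{\mathbf{f}}_{jn}+\tilde{\mathbf{f}}_{jn}$ with $\tilde{\mathbf{f}}_{jn}\perp\hat{\mathbf{f}}_{jn}$, so $\mathbb{E}[\mathbf{f}_{jn}^{T}\hat{\mathbf{f}}_{jn}^{*}]=\mathbb{E}[\hat{\mathbf{f}}_{jn}^{T}\hat{\mathbf{f}}_{jn}^{*}]=N\alpha_{jn}^2$. Summing coherently over $j\in\mathcal{A}_{d}$ with the weights $\kappa_{jn}\sqrt{\mathcal{E}_{d,j}}$ and squaring gives exactly the numerator $N^2(\sum_{j\in\mathcal{A}_{d}}\kappa_{jn}\sqrt{\mathcal{E}_{d,j}}\alpha_{jn}^2)^2$.

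Next I would treat the denominator term by term. For ${\tt NCoh}_{d,n}$, the non-coherent interference collects the beamforming-gain uncertainty of the desired stream together with the power leaked from other DL streams $q$ whose channels are statistically independent of $\mathbf{f}_{jn}$; here the key step is that across distinct APs the terms are independent and zero-mean, so the cross-AP expectations vanish and only the diagonal $\mathbb{E}|\mathbf{f}_{jn}^{T}\hat{\mathbf{f}}_{jq}^{*}|^2$ terms survive. Evaluating these second moments for independent complex Gaussian vectors yields $N\beta_{jn}\alpha_{jq}^2$, producing \eqref{eq:dl_ncoh}. For the coherent interference ${\tt Coh}_{d,n}$, the essential observation is that UEs sharing the same pilot ($q\in\mathcal{I}_p\setminus n$) have estimated channels $\hat{\mathbf{f}}_{jq}$ that are scaled copies of $\hat{\mathbf{f}}_{jn}$ — specifically $\hat{\mathbf{f}}_{jq}=\sqrt{\mathcal{E}_{p,q}/\mathcal{E}_{p,n}}\,(\beta_{jq}/\beta_{jn})\hat{\mathbf{f}}_{jn}$ from the LMMSE expression — so these contributions add \emph{coherently} across APs and scale as $N^2$, giving \eqref{eq:dl_coh} after careful bookkeeping of the pilot-power and path-loss ratios.

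For ${\tt IU}_{d,n}$, the UE-to-UE cross-link interference, I would use $\mathtt{g}_{nk}\sim\mathcal{CN}(0,\epsilon_{nk})$ independent across UEs, so the power is simply $\sum_{k\in\mathcal{U}_u}\mathcal{E}_{u,n}\epsilon_{nk}$, matching \eqref{eq:dl_UEUE}; the additive noise contributes $N_0$ since $w_{d,n}$ is unit-normalized CSCG with variance $N_0$. Collecting the numerator over the denominator gives \eqref{eq:rate_final_dllink}, and the achievable rate $\mathcal{R}_{d,n}=\log_2(1+\eta_{d,n})$ follows from the standard worst-case-noise / use-and-then-forget argument already invoked for the UL theorem.

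I expect the main obstacle to be the careful evaluation of the second-moment terms $\mathbb{E}|\mathbf{f}_{jn}^{T}\hat{\mathbf{f}}_{jq}^{*}|^2$ and the coherent combining of pilot-sharing UEs: one must correctly separate the same-pilot group (which combines coherently, scaling as $N^2$, and must be cleanly isolated so it does not get double-counted in the non-coherent term) from the independent-pilot group, and track the exact LMMSE scaling factors $\sqrt{\mathcal{E}_{p,n}/\mathcal{E}_{p,q}}\,(\beta_{jn}/\beta_{jq})$ that appear in \eqref{eq:dl_coh}. The bookkeeping is parallel to the UL proof of Theorem~\ref{thm: rate_UL_thm}, so I would reuse that machinery and relegate the detailed moment computations to the appendix.
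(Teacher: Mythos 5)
Your proposal is correct and follows essentially the same route as the paper's Appendix B proof: the use-and-then-forget decomposition of \eqref{eq: user_DL_receieve}, the numerator via $\mathbb{E}[\mathbf{f}_{jn}^{T}\hat{\mathbf{f}}_{jn}^{*}]=N\alpha_{jn}^{2}$, the separation of pilot-sharing UEs (whose LMMSE estimates are proportional to $\hat{\mathbf{f}}_{jn}$, yielding the $N^{2}$-scaling coherent term) from independent UEs (yielding the $N$-scaling terms $N\beta_{jn}\alpha_{jq}^{2}$ absorbed into ${\tt NCoh}_{d,n}$ together with the beamforming uncertainty), and the direct evaluation of the UE-to-UE CLI power. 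The paper carries out the same bookkeeping by substituting the explicit expression of $\hat{\mathbf{f}}_{jq}$ in terms of the processed pilot observation and expanding the second moment, which is equivalent to your proportionality argument.
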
    
\begin{proof}
	See Appendix~\ref{appen: rate_dllink_thm}.
\end{proof}

Now, the overall sum UL-DL SE of the system can be expressed as
  \begin{align}\label{eq:sum_rate}
 \mathcal{R}_\text{sum}=\frac{\tau-\tau_{p}}{\tau}\big[\textstyle{\sum\nolimits_{k\in\mathcal{U}_{u}}}\mathcal{R}_{u,k}+\textstyle{\sum\nolimits_{n\in\mathcal{U}_{d}}}\mathcal{R}_{d,n}\big].
 \end{align}

We note that, from Theorems~\ref{thm: rate_UL_thm} and~\ref{thm:rate_final_dllink}, the gain and various interference terms involved in $\mathcal{R}_{u,k}$ and $\mathcal{R}_{d,k}$ are dependent on $\mathcal{{A}}_{u}$ and $\mathcal{{A}}_{d}$. 
Therefore, we obtain different values of $\mathcal{R}_{\text{sum}}$ for different choices of $\mathcal{{A}}_{u}$ and $\mathcal{{A}}_{d}$. To characterize this dependence, from this point onward, we write the achievable sum UL-DL SE as $\mathcal{R}_{\text{sum}}(\mathcal{A}_{x})$, where $\mathcal{A}_{x}\triangleq (\mathcal{A}_{u},\mathcal{A}_{d})$. Note that, as $\mathcal{U}_u$ and $\mathcal{U}_{d}$ are given, we omit their dependence on $\mathcal{R}_{\text{sum}}$. 
Now, the brute-force approach of listing out all the $2^{|\mathcal{{A}}_{u}\cup \mathcal{{A}}_{d}|}$ possible AP schedules and computing their achievable sum UL-DL SE using \eqref{eq:sum_rate} makes the complexity of finding an optimal AP schedule exponential in the number of APs. We present a low complexity solution in the next section.

\section{Sum Rate Optimization}\label{sec:Submodular}
We recall that the problem of finding the optimal AP schedule, namely, determining which APs should operate in the UL and which APs should operate in the DL, based on the local data demands from the UEs, is a combinatorially complex optimization problem. In this section, we circumvent this by developing a greedy AP-scheduling scheme based on sub-modularity. At each step of the procedure, we select which AP to schedule and whether the scheduled AP should operate in the UL or DL mode, such that the incremental gain in $\mathcal{R}_{\text{sum}}$ is maximized. This process is repeated until the last AP is scheduled, thereby solving the problem in polynomial time. Such a greedy approach to SE maximization has been previously proposed in the antenna selection literature, based on the monotonicity of the cost function~\cite{Han}. However, to provide concrete guarantees on the performance of the greedy search, we need to show that the cost function is a sub-modular set function of the scheduled APs. In this case, the greedy algorithm is guaranteed to yield a solution that achieves at least $(1-1/e)$-fraction of the optimal value of the cost function. For the sake of completeness, we formally define the monotonicity and sub-modularity properties as follows.
 \begin{defn}\cite{sub_mod}\label{defn:sub_mod}
	 Let $\mathcal{S}$ be a finite set, and let $2^\mathcal{S}$ denote its power set. A function $f : 2^{\mathcal{S}}\rightarrow\mathbb{R}$, with $f(\emptyset)=0$, is said to be \textbf{monotone nondecreasing} if for every $\mathcal{A}\subseteq
	\mathcal{B}\subseteq\mathcal{S}$,  $f(\mathcal{A})\leq f(\mathcal{B})$, and is said to be \textbf{sub-modular} if for every 
	$\{j\}\in\mathcal{S\backslash B}$,  $f(\mathcal{A}\cup\{j\}) -f(\mathcal{A})\geq f(\mathcal{B}\cup\{j\})-f(\mathcal{B}).$
\end{defn}

We first focus on the monotonicity of the sum UL-DL SE.
Let $\mathcal{A}$ be the indices of the APs in the network, where each AP (i.e., each index) can be  scheduled either in the UL or DL. Further, let $\mathcal{A}_s=\mathcal{A}_u\cup\mathcal{A}_d$ denote the index set of the APs that have been previously scheduled, and $\mathcal{A}_s'=\mathcal{A}\backslash\mathcal{A}_s$ be the index set of unscheduled APs. We need to show that adding an element from $\mathcal{A}_s'$ to $\mathcal{A}_s$ does not decrease $\mathcal{R}_{\text{sum}}$. 
Now, for any AP $m\in\mathcal{A}_s'$, let $\mathcal{A}_t \triangleq \mathcal{A}_s\cup\{m\}$. We note that when the $m$th AP is added to the set of UL APs, $\mathcal{A}_u$, it does not introduce any new interference, and hence the sum rate can only improve.  However, if the $m$th AP is added to $\mathcal{A}_d$, then it has the option to transmit with zero power. If it chooses to transmit at zero power, it is as if the AP was not added at all, so the sum rate obtained is the same as that obtained without it. However, if the AP optimally chooses a nonzero transmit power in order to maximize the sum rate, the sum rate can be potentially improved. Hence, the sum rate with the new AP added can only be  greater than or equal to the sum rate obtained without the AP, and $\mathcal{R}_\text{sum}(\mathcal{A}_{s})\leq \mathcal{R}_\text{sum}(\mathcal{A}_{t})$ with $\mathcal{A}_{s}\subseteq \mathcal{A}_{t}$. This shows that the sum rate is a monotone nondecreasing set function.

We now focus on the proof of sub-modularity. First, we observe that due to pilot contamination and CLIs, $\mathcal{R}_{\text{sum}}$ is a non-separable function of the scheduled AP sets $\mathcal{A}_{u}$ and $\mathcal{A}_{d}$. For example, if  the $\{j\}$th AP, $j\notin\mathcal{A}_{u}\cup\mathcal{A}_{d}$, is scheduled in the UL mode, we can write the gain and the coherent interference terms in~\eqref{eq:rate_final_uplink} and~\eqref{eq:Coh}  as
\begin{subequations}
	\begin{align}
		&\textstyle{\big(\sum\nolimits_{m\in\mathcal{A}_{u}\cup\{j\}}\alpha_{mk}^2\big)^{2}=\big(\sum\nolimits_{m\in\mathcal{A}_{u}}\alpha_{mk}^2\big)^{2}+\alpha_{jk}^{4}+2\alpha_{jk}^2\sum\nolimits_{m\in\mathcal{A}_{u}}\alpha_{mk}^2},\label{eq:Gain_NL}	
		\\&
		\textstyle{\sum\nolimits_{n\in \mathcal{I}_{p}\backslash k}\mathcal{E}_{u,n}\big(\sum\nolimits_{m\in{\mathcal{A}}_{u}\cup\{j\}}\alpha_{mk}^2\sqrt{\frac{\mathcal{E}_{p,n}}{\mathcal{E}_{p,k}}}\frac{\beta_{mn}}{\beta_{mk}}\big)^{2}=\sum\nolimits_{n\in \mathcal{I}_{p}\backslash k}\mathcal{E}_{u,n}\big[\big(\sum\nolimits_{m\in{\mathcal{A}}_{u}}\alpha_{mk}^2\sqrt{\frac{\mathcal{E}_{p,n}}{\mathcal{E}_{p,k}}}\frac{\beta_{mn}}{\beta_{mk}}\big)^{2}}\notag\\&\hspace*{4cm}+\textstyle{\alpha_{jk}^4{\frac{\mathcal{E}_{p,n}}{\mathcal{E}_{p,k}}}\frac{\beta_{jn}^2}{\beta_{jk}^2}+2\alpha_{jk}^2\sqrt{\frac{\mathcal{E}_{p,n}}{\mathcal{E}_{p,k}}}\frac{\beta_{jn}}{\beta_{jk}}\sum\nolimits_{m\in{\mathcal{A}}_{u}}\alpha_{mk}^2\sqrt{\frac{\mathcal{E}_{p,n}}{\mathcal{E}_{p,k}}}\frac{\beta_{mn}}{\beta_{mk}}\big]\label{eq:CInt_NL}},
	\end{align}
\end{subequations}
 respectively. We note that in~\eqref{eq:Gain_NL} and~\eqref{eq:CInt_NL}, the first two terms in the right hand side correspond to the gain and coherent interferences due to set $\mathcal{A}_{u}$ and scheduled $\{j\}$th UL AP, respectively. However, due to the nonlinearity and the cross terms, the UL SINR cannot be written as a separable function of the set of scheduled APs. Thus, $\eta_{u,k}(\mathcal{A}_{u}\cup\{j\})\neq\eta_{u,k}(\mathcal{A}_{u})+\eta_{u,k}(\{j\})$. Similar observations hold in DL. Furthermore, in our system, the UL SINRs and the DL SINRs are coupled with the DL transmitted signals via the AP-to-AP CLI and UL transmitted signals via UE-to-UE CLI, respectively, which makes the SINRs dependent on the power control coefficients. Therefore, our problem becomes challenging compared to previous works in antenna selection and UE scheduling literature which have considered either linear cost functions with respect to the maximization sets~\cite{Asymp} or perfect CSI at the APs~\cite{RV}. 
 
In several studies, the authors rely on reasonable system approximations, such as, high SNR~\cite{TWC_beam}, or  the
SE under asymptotic antenna density~\cite{Asymp}, which lead to tractable analytical expressions. Such approximate cost function based analysis is known as \emph{sub-modular relaxation}~\cite{TWC_beam}. In this work, we note that as the number of antennas at each AP, $N$, goes to infinity, the non-coherent interferences becomes negligible compared to the gain and coherent interferences
as observed in Theorem~\ref{thm: rate_UL_thm} and Theorem~\ref{thm:rate_final_dllink}. Also, in a CF-system, the CPU can potentially cancel the AP-AP CLI with the global knowledge of the DL data streams. Therefore, to make the analysis tractable, we bound both the UL and DL rates and formulate an equivalent optimization problem based on the product SINR. 
Note that, as $N\rightarrow\infty$, we can show that
	\begin{align}
		&\textstyle{\mathcal{R}_{u,k}\geq\log_{2}\frac{\mathcal{E}_{u,k}(\sum_{m\in\mathcal{A}_{u}}\alpha_{mk}^2)^{2}}{\sum\limits_{n\in \mathcal{I}_{p}\backslash k}\!\!\!\mathcal{E}_{u,n}(\sum\limits_{m\in\mathcal{A}_{u}} \!\!\!\! \alpha_{mk}^2\sqrt{\frac{\mathcal{E}_{p,n}}{\mathcal{E}_{p,k}}}\frac{\beta_{mn}}{\beta_{mk}})^{2}}\geq\log_{2}\frac{(\sqrt{\mathcal{E}_{u,k}}{\sum_{m\in\mathcal{A}_{u}}}\alpha_{mk}^2)^{2}}{(\sum\limits_{n\in \mathcal{I}_{p}\backslash k}\sqrt{\mathcal{E}_{u,n}}\sum\limits_{m\in\mathcal{A}_{u}}\alpha_{mk}^2\sqrt{\frac{\mathcal{E}_{p,n}}{\mathcal{E}_{p,k}}}\frac{\beta_{mn}}{\beta_{mk}})^{2}}},\label{eq:approx_high_SINR_ul} \\
		&\textstyle{\mathcal{R}_{d,n}\geq\log_{2}\frac{(\sum_{j\in\mathcal{A}_{d}}\kappa_{jn}{\sqrt{\mathcal{E}_{j,n}}\alpha_{jn}^2})^{2}}{\sum\limits_{\substack{q\in \mathcal{I}_{p}\backslash n }}\!\!(\sum\limits_{j\in\mathcal{A}_{d}}\!\!\!\!\sqrt{\mathcal{E}_{d,j}}\kappa_{jq}\alpha_{jq}^2\sqrt{\frac{\mathcal{E}_{p,n}}{\mathcal{E}_{p,q}}}\frac{\beta_{jn}}{\beta_{jq}})^{2}}\geq\log_{2}\frac{(\sum\nolimits_{j\in\mathcal{A}_{d}}\kappa_{jn}{\sqrt{\mathcal{E}_{j,n}}}{\alpha_{jn}^2})^{2}}{(\sum\limits_{\substack{q\in \mathcal{I}_{p}\backslash n }}\sum\limits_{j\in\mathcal{A}_{d}}\!\!\!\sqrt{\mathcal{E}_{d,j}}\kappa_{jq}\alpha_{jq}^2\sqrt{\frac{\mathcal{E}_{p,n}}{\mathcal{E}_{p,q}}}\frac{\beta_{jn}}{\beta_{jq}})^{2}}}\label{eq:approx_high_SINR_dl} .
		\end{align}
The latter lower bounds in~\eqref{eq:approx_high_SINR_ul} and~\eqref{eq:approx_high_SINR_dl} follow as we have only added more interference terms in the denominators.
Let $\mathcal{R}'_{u,k}=\log_{2}\frac{(\sqrt{\mathcal{E}_{u,k}}\sum_{m\in\mathcal{A}_{u}}\alpha_{mk}^2)^{2}}{(\sum\nolimits_{n\in \mathcal{I}_{p}\backslash k}\sqrt{\mathcal{E}_{u,n}}\sum\nolimits_{m\in\mathcal{A}_{u}}\alpha_{mk}^2\sqrt{\frac{\mathcal{E}_{p,n}}{\mathcal{E}_{p,k}}}\frac{\beta_{mn}}{\beta_{mk}})^{2}}=2\log_{2}\frac{\sum_{m\in\mathcal{A}_{u}}{\tt G}_{u,mk}}{\sum_{m\in\mathcal{A}_{u}}{\tt I}_{u,mk}}$ and $\mathcal{R}'_{d,n}=\log_{2}\frac{(\sum_{j\in\mathcal{A}_{d}}\kappa_{jn}\sqrt{\mathcal{E}_{j,n}}{\alpha_{jn}^2})^{2}}{(\sum\nolimits_{\substack{q\in \mathcal{I}_{p}\backslash n }}\sum_{j\in\mathcal{A}_{d}}\sqrt{\mathcal{E}_{d,j}}\kappa_{jq}\alpha_{jq}^2\sqrt{\frac{\mathcal{E}_{p,n}}{\mathcal{E}_{p,q}}}\frac{\beta_{jn}}{\beta_{jq}})^{2}}=2\log_{2}\frac{\sum_{j\in\mathcal{A}_{d}}{\tt G}_{d,jn}}{\sum_{j\in\mathcal{A}_{d}}{\tt I}_{d,jn}}$,
with the respective terms being defined as $\textstyle{{\tt G}_{u,mk}\triangleq \sqrt{\mathcal{E}_{u,k}}\alpha_{mk}^2}$, $\textstyle{{\tt I}_{u,mk}\triangleq\sum\nolimits_{n\in \mathcal{I}_{p}\backslash k}\sqrt{\mathcal{E}_{u,n}}\alpha_{mk}^2\sqrt{\frac{\mathcal{E}_{p,n}}{\mathcal{E}_{p,k}}}\frac{\beta_{mn}}{\beta_{mk}}}$, $\textstyle{{\tt G}_{d,jn}\triangleq \kappa_{jn}\sqrt{\mathcal{E}_{j,n}}{\alpha_{jn}^2}}$, and $\textstyle{{\tt I}_{d,jn}\triangleq\sum\nolimits_{\substack{q\in \mathcal{I}_{p}\backslash n }}\sqrt{\mathcal{E}_{d,j}}\kappa_{jq}\alpha_{jq}^2\sqrt{\frac{\mathcal{E}_{p,n}}{\mathcal{E}_{p,q}}}\frac{\beta_{jn}}{\beta_{jq}}}$. Now, given the set of APs $\mathcal{A}_{s}$, our problem is to optimally decide the partition $\mathcal{A}_{u}$ and $\mathcal{A}_{d}$ such that the sum UL-DL SE, i.e. $\mathcal{R}'_\text{sum}\triangleq[
\sum\nolimits_{k\in\mathcal{U}_{u}}\mathcal{R}'_{u,k}+\sum\nolimits_{n\in\mathcal{U}_{d}}\mathcal{R}'_{d,n}]$, is maximized. For notational simplicity, we rewrite our problem as follows
\begin{align}\label{eq:simplified_cost}
\textstyle{\max_{\mathcal{A}_{s}}\mathcal{R}'_{\text{sum}}=\max_{\mathcal{A}_{s}} \sum_{k=1}^{K}2\log_{2}\frac{\sum_{m\in\mathcal{A}_{s}}{\tt G}_{mk}(\mathcal{A}_{s})}{\sum_{m\in\mathcal{A}_{s}}{\tt I}_{mk}(\mathcal{A}_{s})}{(a)\atop\equiv }\max_{\mathcal{A}_{s}} {\prod_{k=1}^{K}\frac{\sum_{m\in\mathcal{A}_{s}}{\tt G}_{mk}(\mathcal{A}_{s})}{\sum_{m\in\mathcal{A}_{s}}{\tt I}_{mk}(\mathcal{A}_{s})}}},
\end{align}
where the $k$th UE can be either UL or DL and $m$th AP is either scheduled in UL or in the DL. Here, we explicitly write the gain and interferences as a function of $\mathcal{A}_{s}$. The equivalence in $(a)$ follows from the monotonicity of $\log_{2}(.)$.
\begin{thm}\label{thm:submod}
	The product SINR, $\textstyle{f_{mk}(\mathcal{A}_{x})=\prod\limits_{k=1}^{K}\frac{\sum_{m\in\mathcal{A}_{x}}{\tt G}_{mk}(\mathcal{A}_{x})}{\sum_{m\in\mathcal{A}_{x}}{\tt I}_{mk}(\mathcal{A}_{x})}}$, is a sub-modular function of the number of scheduled APs in the system. That is, if $\mathcal{A}_{s}$ and $\mathcal{A}_{t}$ are  index sets of active APs, with $\mathcal{A}_{s}\subseteq \mathcal{A}_{t}$, and if $\{j\}\notin \mathcal{A}_{t}$, then $	\textstyle{f_{mk}(\mathcal{A}_{s}\cup\{j\})-f_{mk}(\mathcal{A}_{s})\geq f_{mk}(\mathcal{A}_{t}\cup\{j\})-f_{mk}(\mathcal{A}_{t})}.$
\end{thm}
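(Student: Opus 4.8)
The plan is to verify the diminishing-returns inequality of Definition~\ref{defn:sub_mod} directly for $f_{mk}$, exploiting the additive way in which a newly scheduled AP enters the gain and coherent-interference sums. First I would fix the element $\{j\}$ to be added together with the mode in which it is scheduled, say the UL. Because the lower bounds in \eqref{eq:approx_high_SINR_ul}--\eqref{eq:approx_high_SINR_dl} retain only the array gain and the pilot-contamination (coherent) interference and drop the inter-AP CLI, each UL factor $g_k\triangleq\frac{\sum_{m}{\tt G}_{u,mk}}{\sum_{m}{\tt I}_{u,mk}}$ depends only on the UL-scheduled APs and each DL factor only on the DL-scheduled APs. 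Hence appending $\{j\}$ in the UL mode perturbs only the factors $k\in\mathcal{U}_u$, and I can write $f_{mk}=P_uP_d$ with $P_d$ independent of $\{j\}$, reducing the task to the marginal behaviour of $P_u=\prod_{k\in\mathcal{U}_u}g_k$.

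The second step is to make the dependence on $\mathcal{A}_x$ explicit using the incremental identities \eqref{eq:Gain_NL} and \eqref{eq:CInt_NL}. When $\{j\}$ is appended to $\mathcal{A}_u$, the squared array gain and the coherent interference of factor $k$ each pick up a self term ($\alpha_{jk}^4$, resp.\ its interference analogue) and a cross term proportional to $\alpha_{jk}^2\sum_{m\in\mathcal{A}_u}\alpha_{mk}^2$. This lets me express $g_k(\mathcal{A}_u\cup\{j\})$ as a linear-fractional update of $g_k(\mathcal{A}_u)$, and, after multiplying over $k\in\mathcal{U}_u$, write the marginal $f_{mk}(\mathcal{A}_s\cup\{j\})-f_{mk}(\mathcal{A}_s)$ in closed form. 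I would then establish the required inequality by a telescoping/inductive argument: enumerate the APs of $\mathcal{A}_t\setminus\mathcal{A}_s$ and insert them one at a time, showing that each single insertion cannot increase the marginal gain contributed by $\{j\}$. For one insertion the claim should reduce to a scalar inequality in which the relative contribution of $\{j\}$, essentially of the form $\alpha_{jk}^2/\sum_{m\in\mathcal{A}_u}\alpha_{mk}^2$, shrinks as the accumulated array gain $\sum_{m\in\mathcal{A}_u}\alpha_{mk}^2$ grows -- the concrete manifestation of diminishing returns here.

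The hard part will be the denominators. Unlike the antenna-selection settings of~\cite{Asymp,RV}, the maximised quantity is a ratio of gain to coherent interference rather than a pure gain, so each $g_k$ is neither modular nor, on its own, monotone: whether appending an AP raises or lowers $g_k$ depends on whether that AP's local gain-to-interference ratio exceeds the current aggregate ratio, so the individual factors need not be sub-modular. The result must therefore come from the product over all UEs together with the structural fact that ${\tt G}_{u,mk}$ and ${\tt I}_{u,mk}$ share the common scaling $\alpha_{mk}^2$ -- so gain and interference have identical support and grow in tandem -- and that all coefficients are nonnegative. I thus expect the decisive step to be controlling the cross-terms of the interference denominators during the one-AP-at-a-time induction, i.e.\ showing their growth is dominated by the matching growth of the array-gain numerators, since this is exactly where the coupling across UEs and the non-linearity noted after \eqref{eq:CInt_NL} could break the inequality. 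A complementary route worth attempting in parallel is to pass to $\log f_{mk}=\sum_k[\log(\sum_m{\tt G}_{mk})-\log(\sum_m{\tt I}_{mk})]$ and use concavity of $\log$ composed with the modular gain and interference sums; the gain terms are then sub-modular and the entire obstacle is isolated in the (supermodular) interference terms, whose net effect one must bound.
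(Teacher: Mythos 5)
Your plan diverges substantially from the paper's argument and, as written, has a genuine gap at its core. The paper's proof (Appendix~\ref{proof:submod_thm}) performs no per-UE or per-AP incremental analysis at all: it starts from the hypothesis that $\mathcal{A}_{s}$ was chosen to maximize the product SINR, deduces $\prod_{k}\frac{\sum_{m\in\mathcal{A}_{s}}{\tt G}_{mk}}{\sum_{m\in\mathcal{A}_{s}}{\tt I}_{mk}}\ge\prod_{k}\frac{{\tt G}_{jk}(\{j\})}{{\tt I}_{jk}(\{j\})}$, combines this with the monotonicity $f(\mathcal{A}_{t})\ge f(\mathcal{A}_{s})$ through a mediant-type manipulation of the global products $\prod_{k}\sum_{m}{\tt G}_{mk}$ and $\prod_{k}\sum_{m}{\tt I}_{mk}$, and closes by comparing the interference products. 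Your route---factor $f$ into UL and DL parts, express the single-AP update of each factor as a linear-fractional map via \eqref{eq:Gain_NL}--\eqref{eq:CInt_NL}, then telescope over $\mathcal{A}_{t}\setminus\mathcal{A}_{s}$---never actually arrives at the inequality it needs: the single-insertion ``scalar inequality'' is only described as what you \emph{expect}, and you yourself note that each factor $g_{k}$ is a non-monotone ratio, so the shrinking of $\alpha_{jk}^{2}/\sum_{m}\alpha_{mk}^{2}$ does not by itself control the denominators. That unproven step \emph{is} the theorem.

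Moreover, the factorization $f=P_{u}P_{d}$ you use to localize the perturbation exposes a concrete obstruction to your telescoping step in the cross-mode case. Take $\mathcal{A}_{t}=\mathcal{A}_{s}\cup\{i\}$ with $i$ a DL AP and $j$ a UL AP. The two marginals of $j$ are then $P_{d}(\mathcal{A}_{s,d})\,\Delta_{u}$ and $P_{d}(\mathcal{A}_{s,d}\cup\{i\})\,\Delta_{u}$ with the \emph{same} UL increment $\Delta_{u}=P_{u}(\mathcal{A}_{s,u}\cup\{j\})-P_{u}(\mathcal{A}_{s,u})$, so the diminishing-returns inequality of Definition~\ref{defn:sub_mod} for this pair reduces to $\left(P_{d}(\mathcal{A}_{s,d})-P_{d}(\mathcal{A}_{s,d}\cup\{i\})\right)\Delta_{u}\ge 0$. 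Whenever $\Delta_{u}>0$ this forces $P_{d}$ to be non-increasing under the addition of a DL AP, which is exactly the opposite of the monotonicity that both your argument and the paper's rely on elsewhere. So the one-AP-at-a-time induction cannot go through in the multiplicatively separated form you describe; you would need either to abandon the $P_{u}P_{d}$ separation or to supply a genuinely new idea for insertions in the opposite mode. Your complementary $\log$-based route has the same unresolved core: it isolates the obstacle in the supermodular interference terms but offers no bound on them.
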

\begin{proof}
	See Appendix~\ref{proof:submod_thm}.
\end{proof}

\begin{algorithm}[!t]
	\SetAlgoLined
	\textbf{Inputs}: 	 $\mathcal{A}$: the set of all AP indices\;
	\textbf{Initialize} $\mathcal{A}_u=\mathcal{A}_d=\emptyset$,
	 $\mathcal{A}_s=\mathcal{A}_u\cup\mathcal{A}_d$\;
	
	\While{ $\mathcal{A}_s'\neq\emptyset$ }
		{
		$i^{\star}_{u}=\arg \max\limits_{i\in \mathcal{A}_s'} \mathcal{R}'_\text{sum} (\mathcal{{A}}_u\cup\{i\})$\;
		$i^{\star}_{d}=\arg \max\limits_{i\in \mathcal{A}_s'} \mathcal{R}'_\text{sum} (\mathcal{{A}}_d\cup\{i\})$\;				
		\If{$\mathcal{R}'_\text{sum} (\mathcal{{A}}_u\cup\{i_u^{\star}\})\ge\mathcal{R}'_\text{sum} (\mathcal{{A}}_d\cup\{i_d^{\star}\})$}{\textbf{Update}~~$\mathcal{A}_{u}=\mathcal{A}_{u}\cup\{i_u^{\star}\}$\; 
		\Else{\textbf{Update}  $\mathcal{A}_d=\mathcal{A}_d\cup \{i_d^{\star}\}\;$}
	}

		$\mathcal{A}_s=\mathcal{A}_u\cup\mathcal{A}_d$\;
		
	}
    \textbf{Return}~~{$\mathcal{{A}}_u$ and $\mathcal{A}_d$}\;
	\caption{Greedy algorithm for AP scheduling}\label{algo:GA_submod}
\end{algorithm}

Now, exploiting the sub-modularity of $\mathcal{R}'_\text{sum}$, we can develop a greedy algorithm for scheduling the APs as detailed in Algorithm~\ref{algo:GA_submod}.
It follows that $\mathcal{R}'_\text{sum}(\mathcal{\dot{A}})\geq \left(1-\frac{1}{e}\right)\mathcal{R}'_\text{sum}(\mathcal{A}^{\star})$, where $\mathcal{A}^{\star}$ is the index set containing the optimal AP configuration that maximizes the cost function $\mathcal{R}'_\text{sum}$, and $\mathcal{\dot{A}}$ denotes the AP configuration returned by  Algorithm~\ref{algo:GA_submod}. 
We validate the effectiveness of the greedy algorithm via numerical simulations in Fig.~\ref{fig: valid_submod_search}. 
The simulation parameters are detailed in Sec.~\ref{sec:simulation}. For the brute force based search, we have considered our original cost function $\mathcal{R}_\text{sum}$ as expressed in~\eqref{eq:sum_rate} over all possible AP-schedules to find the optimal SE. We also use the AP schedule generated by Algorithm~\ref{algo:GA_submod} and evaluate $\mathcal{R}_\text{sum}$ using~\eqref{eq:sum_rate}.  We observe that the sum UL-DL SE obtained via exhaustive search over all $2^M$ UL-DL AP-configurations and considering the effects of CLIs matches closely with that obtained via Algorithm~\ref{algo:GA_submod} based on sub-modularity of the lower-bounded cost function. 

In Fig.~\ref{fig:theory_vs_simulation}, we plot the $90\%$-likely sum UL-DL SE vs. UL and DL data SNR to validate the theoretical expressions of SE derived in Theorems~\ref{thm: rate_UL_thm} and~\ref{thm:rate_final_dllink}. For the simulation, we consider $10,000$ Monte Carlo channel instantiations and UE locations; the other parameters used can be found in Sec.~\ref{sec:simulation}. The theoretical curve is obtained by averaging the $90\%$ likely sum SE obtained from \eqref{eq:sum_rate} over the UE locations. The simulation corroborates well with our derived results, verifying the accuracy of the expression for $\mathcal{R}_{\text{sum}}$  derived above.

\begin{figure}[!t]
\begin{subfigure}[b]{0.49\textwidth}
	\centering
	\includegraphics[width=0.8\textwidth]{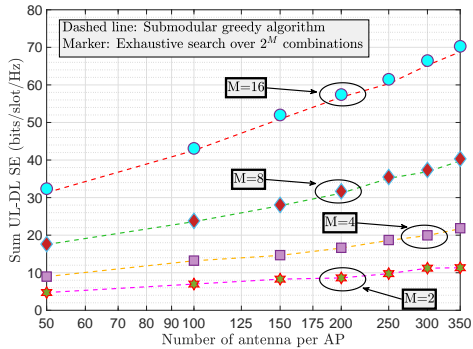}
\caption{Sum UL-DL SE~(bits/slot/Hz) vs number of AP-antennas for different numbers of APs. This plot shows the effectiveness of the sub-modular algorithm.}\label{fig: valid_submod_search}
\end{subfigure}
\hfill
\begin{subfigure}[b]{0.49\textwidth}
	\centering
	\includegraphics[width=0.8\textwidth]{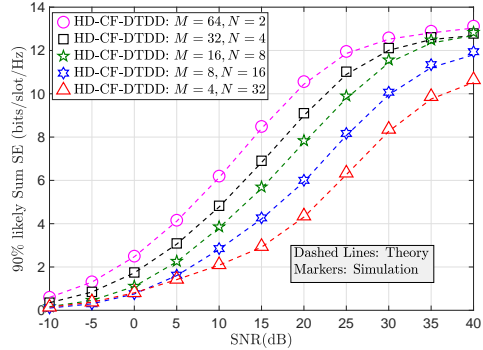}
	\caption{The $90\%$-likely sum UL-DL SE vs. data SNR with $K=100$. This figure validates the derived theoretical expressions of the sum SE with Monte Carlo simulations.}\label{fig:theory_vs_simulation}
\end{subfigure}
\caption{Validation of sum-modular algorithm and the derived expressions of sum UL-DL SE under MRC/MFP.}
\vspace*{-0.8cm}
\end{figure}

\section{Performance Analysis: MMSE \& RZF}
\textcolor{black}{It is known that the performance of CF-mMIMO can be  improved with centralized MMSE combining in the UL and RZF precoding in the DL~\cite{making_cf_emil,net_work}. In this section we briefly analyze the performance of our system model under these combining and precoding schemes.}

\textcolor{black}{Let $\mathcal{A}_{u}(m)$ and $\mathcal{U}_{u}(k)$ denote the $m$th UL AP and the $k$th UL UE  in $\mathcal{A}_{u}$ and $\mathcal{U}_{u}$, respectively. Let $\mathcal{A}_{d}(j)$ and $\mathcal{U}_{d}(n)$ denote the $j$th DL AP and the $n$th DL UE  in $\mathcal{A}_{d}$ and $\mathcal{U}_{d}$, respectively. Let $\hat{\mathbf{f}}_{u,k}\in\mathbb{C}^{N|\mathcal{A}_{u}|}$ denote the estimated channel matrix of the $k$th UL UE to all the UL APs, i.e.,  $\hat{\mathbf{f}}_{u,k}=\left[\hat{\mathbf{f}}_{\mathcal{A}_{u}(1)k}^T,\ldots,\hat{\mathbf{f}}_{\mathcal{A}_{u}(|\mathcal{A}_{u}|)k}^T\right]^T, \forall k\in\mathcal{U}_{u}$, and let the estimated UL channel matrix available at the CPU be denoted by $\hat{\mathbf{F}}_{u}\triangleq\left[\hat{\mathbf{f}}_{u,\mathcal{U}_{u}(1)},\ldots,\hat{\mathbf{f}}_{u,\mathcal{U}_{u}(|\mathcal{U}_{u}|)}\right]\in\mathbb{C}^{N|\mathcal{A}_{u}|\times|\mathcal{U}_{u}|}$. Similarly, we can express the estimated channel of the DL UEs as $\hat{\mathbf{F}}_{d}\triangleq\left[\hat{\mathbf{f}}_{d,\mathcal{U}_{d}(1)},\ldots,\hat{\mathbf{f}}_{d,\mathcal{U}_{d}(|\mathcal{U}_{d}|)}\right]\in\mathbb{C}^{N|\mathcal{A}_{d}|\times|\mathcal{U}_{d}|}$ with
$\hat{\mathbf{f}}_{d,n}=\left[\hat{\mathbf{f}}_{\mathcal{A}_{d}(1)n}^T,\ldots,\hat{\mathbf{f}}_{\mathcal{A}_{d}(|\mathcal{A}_{d}|)n}^T\right]^T\in\mathbb{C}^{N|\mathcal{A}_{d}|}, \forall n\in\mathcal{U}_{d}$. Now, the concatenated UL  signal received at the CPU can be expressed as
\begin{align}
	\mathbf{y}_{u}=\sum\nolimits_{k\in\mathcal{U}_{u}}\sqrt{\mathcal{E}_{u,k}}\left(\hat{\mathbf{f}}_{u,k}+\tilde{\mathbf{f}}_{u,k}\right)s_{u,k}+\sum\nolimits_{n\in\mathcal{U}_{d}}\mathbf{G}\mathbf{p}_{n}s_{d,n}+\mathbf{w}_{ul},\tag{18}
\end{align}
where $\mathbf{G}\in\mathbb{C}^{N|\mathcal{A}_{u}|\times N|\mathcal{A}_{d}|}$ denotes the residual interference channel between DL APs and UL APs, and  $\mathbf{p}_{n}\in\mathbb{C}^{N|\mathcal{A}_{d}|}$ is the $n$th column of the DL precoder $\mathbf{P}=\left[\mathbf{p}_{\mathcal{U}_{d}(1)},\ldots,\mathbf{p}_{\mathcal{U}_{d}(|\mathcal{U}_{d}|)}\right]\in\mathbb{C}^{N|\mathcal{A}_{d}|\times |\mathcal{U}_{d}|}$.  With a slight abuse of notation, let  $\mathbf{P}_{j}\in\mathbb{C}^{N\times|\mathcal{U}_{d}|}$ denote the precoding matrix for the $j$th DL AP, and let $\mathcal{E}_{d,j}$ denote the power budget per antenna at the $j$th DL AP, so that the power constraint becomes $\text{tr}(\mathbf{P}_{j}\mathbf{P}_{j}^{H})\leq N\mathcal{E}_{d,j}$~\cite{net_work}. Finally, $\mathbf{w}_{ul}\sim\mathcal{CN}(\mathbf{0},\mathbf{I}_{N|\mathcal{A}_{u}|})$ is the additive noise. Then, $\mathbf{V}=\mathbf{Q}_{u}^{-1}\hat{\mathbf{F}}_{u}\in\mathbb{C}^{N|\mathcal{A}_{u}|\times |\mathcal{U}_{u}|}$ is the joint MMSE combiner, with $	\mathbf{Q}_{u}=\left(\sum_{k\in\mathcal{U}_{u}}\mathcal{E}_{u,k}\hat{\mathbf{f}}_{u,k}\hat{\mathbf{f}}_{u,k}^{H}+\mathbf{R}_{u}+N_{0}\mathbf{I}_{N|\mathcal{A}_{u}|}\right)$, where
\begin{align}
	\mathbf{R}_{u}=\left(\sum\nolimits_{k\in\mathcal{U}_{u}}\mathcal{E}_{u,k}\mathbb{E}\left[\tilde{\mathbf{f}}_{u,k}\tilde{\mathbf{f}}_{u,k}^{H}\right]+\sum\nolimits_{i\in\mathcal{U}_{d}}\mathbb{E}\left[\mathbf{G}\mathbf{p}_{n}\mathbf{p}_{n}^H\mathbf{G}^H\right]\right).\tag{19}
\end{align}
Then, the UL sum SE becomes~\cite{making_cf_emil} $\mathcal{R}_{u}=\sum_{k\in\mathcal{U}_{u}}\mathbb{E}\log_{2}\left(1+\eta_{u,k}\right)$, where
\begin{equation}\label{eq:UL_MMSE_SINR}
	\eta_{u,k}=\frac{\mathcal{E}_{u,k}\left|\hat{\mathbf{f}}_{u,k}^H\mathbf{Q}_{u}^{-1}\hat{\mathbf{f}}_{u,k}\right|^2}{\sum\nolimits_{k'\in\mathcal{U}_{u}\setminus k}\mathcal{E}_{u,k'}\left|\hat{\mathbf{f}}_{u,k}^{H}\mathbf{Q}_{u}^{-1}\hat{\mathbf{f}}_{u,k'}\right|^2+\hat{\mathbf{f}}_{u,k}^H\mathbf{Q}_{u}^{-1}\left(\mathbf{R}_{u}+N_{0}\mathbf{I}_{N|\mathcal{A}_{u}|}\right)\mathbf{Q}_{u}^{-1}\hat{\mathbf{f}}_{u,k}},\tag{20}
\end{equation}
with the expectation being taken over the channel realizations. The  MMSE combiner presented here maximizes the $k$th UL UE's instantaneous SINR~\cite{massivemimobook, making_cf_emil}.} 

\textcolor{black}{In the DL, the RZF precoder is a commonly used linear  precoding scheme to control inter-UE interference~\cite{net_work}. It is designed as $\mathbf{P}=\kappa\mathbf{Q}_{d}^{-1}\hat{\mathbf{F}}_{d}$, where $\mathbf{Q}_{d}=\left(\hat{\mathbf{F}}_{d}\hat{\mathbf{F}}_{d}^{H}+\xi\mathbf{I}_{N|\mathcal{A}_{d}|}\right)$, $\kappa$ is the power normalization factor, and $\xi>0$ is a regularization parameter~\cite{Sir_RZF, RZF_vector}.   The DL sum SE can be increased by appropriately selecting $\xi$~\cite{Sir_RZF}, and the DL power control parameter $\kappa$ is  evaluated at the CPU based on the estimated channel statistics. Considering an equal power budget at each DL AP, i.e. $\mathcal{E}_{d,j}=\mathcal{E}_{d}$, it is easy to show $\kappa_{j}^2=N\mathcal{E}_{d}/\text{tr}(\mathbf{P}_{j}\mathbf{P}_{j}^H)$ satisfies the DL power constraint. We set $\kappa^2=\min_{j}\kappa_{j}^2$, for all $j\in\mathcal{A}_{d}$, an approach previously used in~\cite{RZF}. We consider that the DL UEs know the mean of the precoded signal, and therefore, applying the use-and-then forget bound, we can write the DL SE as $\mathcal{R}_{d}=\sum_{n\in\mathcal{U}_{d}}\log_{2}\left(1+\eta_{d,n}\right)$, with 
\begin{align}\label{eq:DL_RZF_SINR}
\eta_{d,n}=\dfrac{\kappa^2\left|\mathbb{E}\left[{\mathbf{f}}_{n}^H\mathbf{Q}_{d}^{-1}\hat{\mathbf{f}}_{n}\right]\right|^2}{\kappa^2\sum\nolimits_{n'\in\mathcal{U}_{d}\backslash n}\mathbb{E}\left[\left|{\mathbf{f}}_{n}^H\mathbf{Q}_{d}^{-1}\hat{\mathbf{f}}_{n'}\right|^2\right]+{\tt var}\left({\mathbf{f}}_{n}^H\mathbf{Q}_{d}^{-1}\hat{\mathbf{f}}_{n}\right)+\sum\nolimits_{k\in\mathcal{U}_{u}}\mathcal{E}_{u,k}\mathbb{E}\big|\mathtt{g}_{nk}\big|^2+N_{0}}\tag{21}
\end{align}
being the DL SINR of the $n$th UE, where the expectations are taken over the channel realizations.}

\textcolor{black}{With the above UL and DL SE expressions in hand, we can compare the performance of MRC/MFP based combiner/precoding and the MMSE-type combiner/precoder. The APs are scheduled according to  Algorithm~\ref{algo:GA_submod}, with the sum rate computed using the UL and DL SINRs evaluated according to~\eqref{eq:UL_MMSE_SINR} and ~\eqref{eq:DL_RZF_SINR}, respectively.} 
\textcolor{black}{In Fig.~\ref{fig:perfoamnce_comp_MMSE_MRC_RZF_MFP},  
we see that, with $(M=64, N=4)$, the $90\%$-sum UL-DL SE achieved via MMSE/RZF is double the sum UL-DL SE achieved via MRC/MFP under similar settings. This shows the interference suppression capability of MMSE-based combiner and precoder, as well as the benefits of the centralized MMSE-processing scheme. However, the complexity of these schemes increase significantly with system dimension, i.e., number of UEs and number of APs. Also, when we increase the number of APs from $8$ to $64$, we observe a substantial performance improvement irrespective of the processing scheme. There are two contributing factors to this improvement: First, as we increase the number of APs, the flexibility to schedule the APs either in UL or in DL mode also increases, and therefore, the sum UL-DL SE improves considerably. Second, with more APs, the probability that an UE finds an AP (or APs) in its proximity also increases, and which in turn improves the rate achieved by that UE, leading to an improvement in sum UL-DL SE. 
 }

\begin{figure}
\centering
\includegraphics[width=0.4\textwidth]{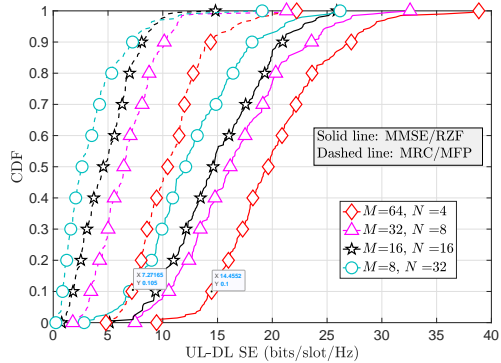}
\caption{\textcolor{black}{Performance comparison of MMSE-type precoder/combiner with MRC/MFP with $K=32$.
	}}\label{fig:perfoamnce_comp_MMSE_MRC_RZF_MFP}
\vspace*{-0.8cm}
\end{figure}

\section{Pilot Allocation}\label{sec: pilot_allocation}
The problem of pilot allocation in a CF-mMIMO system is fundamentally different from similar problems encountered in the cellular version. In cellular mMIMO systems, we can avoid intra-cell pilot contamination by allotting orthogonal pilots to all the UEs within each cell~\cite{my_MMSE_SPAWC}. Also, in cellular systems, only the serving BS estimates the channel for each UE. However, in a CF system, it is not just the nearest AP that is interested in estimating the channel of a given UE; all the (nearby) APs need to estimate the channel in order to correctly combine the signals from all the UEs. Moreover, allotting orthogonal pilots to all UEs in the entire geographical zone could lead to a very high pilot overhead as the length of orthogonal pilots scales linearly with the total number of UEs~\cite{Nayebi,random_vs_structured,Clustered_CF,Pilot_sum_rate}. On the other hand, reusing a set of orthogonal pilots can result in pilot-sharing UEs in close proximity, leading to high pilot contamination. Hence, there is a need to revisit the problem of pilot allocation across UEs in the context of CF systems.  

\textcolor{black}{The authors in~\cite{Clustered_CF,random_vs_structured} present a joint AP-UE clustering based pilot allocation algorithm that maximizes the sum SE. However, we consider the canonical CF-architecture where every UE can potentially be served by all the APs~\cite{cell_small,Nayebi}, and hence, the clustering based algorithms cannot be directly applied to our model. In~\cite{Clustered_CF,random_vs_structured}, TDD is considered, where, after clustering, all clusters operate either in UL or in DL at a given point in time. However, in a DTDD based system, the randomly distributed UEs have different UL/DL data demands. Due to this, jointly clustering the AP-UEs in DTDD is a more involved problem. The authors in~\cite{Pilot_sum_rate} develop a UE-centric dynamic clustering based pilot allocation scheme to maximize the UL SE. The solution requires the knowledge of the SINR at each UE under all possible pilot allocations to obtain the optimal pilot-UE pair, which in turn significantly increases the signaling overhead when the number of UEs is large. 
To summarize, the existing approaches to pilot allocation require significant signaling overhead, and also cannot be directly applied to the settings in our problem.
The key feature of our pilot allocation strategy is that by employing pilots before the AP scheduling and data transmission phase, we 
can decouple the problems of pilot allocation and AP-scheduling, in a DTDD based system, which makes our solution attractive for  implementation.}

We now develop an iterative algorithm to allocate the pilots to different UEs. Recall from Sec.~\ref{sec: channel_estimation} that a UE with a good channel estimate at the $m$th AP will have a higher value of $\alpha_{mk}^2=c_{mk}\tau_{p}\mathcal{E}_{p,k}\beta_{mk}^2$, where
the term $c_{mk}$ accounts for pilot contamination. As the distance between $k$th and $n$th $(k,n\in \mathcal{I}_p)$ UEs decreases, the values of $\alpha_{mk}^2$ and $\alpha_{mn}^2$ decrease, resulting in worsening of the channel estimates for both UEs.
Hence, we first arbitrarily allocate pilots to all the $K$ UEs, and then we find the UE $k^{\star}$ with the least value of $\alpha_{mk}$ to its nearest AP,  that is,  $k^{\star}=\arg \min _{k}\alpha_{mk}$, where $m$ is the index of the AP closest to UE $k$. If $\boldsymbol{\phi}_{k^{\star}}$ is the associated pilot for this UE,  we reallocate a new pilot sequence to this UE from $\{\boldsymbol{\phi}_{1},\ldots,\boldsymbol{\phi}_{K}\}\backslash\boldsymbol{\phi}_{k^{\star}}$ so that $\alpha_{mk^{\star}}$ is maximized. We repeat this iterative process either up to a predetermined number of iterations, or if no other pilot sequence from  $\{\boldsymbol{\phi}_{1},\ldots,\boldsymbol{\phi}_{K}\}\backslash\boldsymbol{\phi}_{k^{\star}}$ improves  $\alpha_{mk^{\star}}$, or if $\alpha_{mk^{\star}}$ exceeds a certain threshold for all UEs. The overall recipe is presented in Algorithm~\ref{GA_pilot}.  

We illustrate the effectiveness of the proposed algorithm in Fig.~\ref{fig:algo_pilot_alloc_verify}. In the cell-based allocation scheme, since $\tau_p = 25$, we consider $4$ equal sized cells in the system, assign each UE to its nearest cell center, allot orthogonal pilots to the UEs within each cell. In case the number of UEs in any cell exceeds the pilot length, we set the pilot length to equal the maximum group size, thus maintaining orthogonality of the pilots within each cell. This reduces pilot contamination within each cluster of UEs, and therefore outperforms random pilot allocation. However, we see that pilot allocation according to Algorithm~\ref{GA_pilot} significantly improves the overall SE compared to both cell-based grouping and random pilot allocation schemes. 

\begin{figure}[!t]
	\centering
	\includegraphics[width=0.43\textwidth]{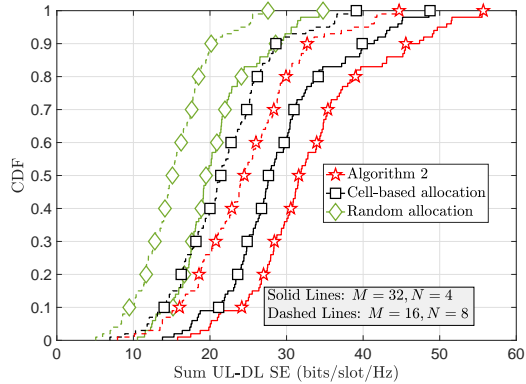}
	\caption{Cumulative distribution function~(CDF) of the achievable sum UL-DL SE under the pilot allocation obtained from Algorithm~\ref{GA_pilot}. Here, $K=100, \tau_{p}=25, N_{\text{iter}}=1000$, and pilot SNR $=20$~dB.}
	\label{fig:algo_pilot_alloc_verify}
	\vspace*{-0.4cm}
\end{figure}

\begin{algorithm}[!t]
	\SetAlgoLined
	\textbf{Initialize}: $\mathcal{I}_{p}$ for $1\le p \le \tau_p$,  Number of iterations=$N_{{iter}}$\;% Number of iterations=$N$\;
	\textbf{Calculate} $d_{mk}$ for all $k\in\mathcal{U}$, $m\in \mathcal{A}$\;
	\textbf{Define}: $m^{\star}_k=\arg\min\limits_{m} d_{mk}$\;
	\textbf{Calculate} $\alpha_{m_k^{\star} k}$ for all $k\in\mathcal{U}$\;
	\textbf{Initialize}: $\alpha_{{q}}=\max\limits_{k} \alpha_{m_k^{\star}k}$\;
	$\alpha_{\text{min}}=\min\limits_{k} \alpha_{m_k^{\star}k}$\; 
	\While{($\alpha_{\text{min}}<\alpha_{o}$)~\&\&~($\alpha_{\text{min}}<\alpha_{q}$)~\&\&~$({\tt i}\leq N_{\text{iter}})$}{
		$k^{\star}=\arg \min\limits_k \alpha_{mk}$\;
		$\alpha_q=\alpha_{\text{min}}$\;
		\For{$1\le p \le \tau_p$} 
		{
			$\mathcal{I}_p=\mathcal{I}_p \cup \{k^{\star}\} $ \;
			$a_p=\alpha_{m_k^{\star}k}$\;
	
	}
     $p^{\star}=\arg\max\limits_{p}a_p$\;
     Reallocate $k$ to $\mathcal{I}_{p^{\star}}$\;
     \textbf{Update:} $\alpha_{\text{min}}=\min\limits_{k} \alpha_{m_k^{\star}k}$ \;
     \textbf{Set:} ${\tt i}={\tt i}+1$\;
	}
	\caption{Iterative Pilot Allocation}\label{GA_pilot}
\end{algorithm}
\vspace*{-.3cm}
\section{Full-Duplex Multi-cell Systems}
\textcolor{black}{In this section, we briefly present the sum UL-DL SE achieved by an FD-enabled multi-cell mMIMO system, based on~\cite{Asymptotic},  to enable fair comparison with the DTDD based CF-mMIMO system.} 
We assume that each cell has one FD BS with $N_{t}$ transmit and $N_{r}$ receive antennas. To maintain the consistency with our previous analysis, we assume that the total number of UEs across all cells is same as the total number of UEs~($K$) in the CF system. Let $\mathcal{U}_{l,u}$ and $\mathcal{U}_{l,d}$ denote the index sets of HD UL and DL UEs within the $l$th cell, such that $\sum_{l=1}^{L}(|\mathcal{U}_{l,u}|+|\mathcal{U}_{l,d}|)=K$. We also assume that each FD BS can perfectly cancel out its self-interference. However, we do not assume any inter-BS cooperation for interference management.  Therefore, each BS experiences interference from neighboring cells. Let the UL channel from $k$th UE of $l$th cell to the $j$th BS be denoted by $\mathbf{f}_{u,jlk}=\sqrt{\beta_{jlk}}\mathbf{h}_{u,jlk}\in\mathbb{C}^{N_{r}\times 1}$, with $\beta_{jlk}$ being the slow fading component that includes the path loss, and $\mathbf{h}_{u,jlk}\sim\mathcal{CN}(0,\mathbf{I}_{N_r})$ being the fast fading component. Similarly, the DL channel from the $j$th BS to the $n$th DL UE of the $l$th cell can be modeled as $\mathbf{f}_{d,jln}=\sqrt{\beta_{jln}}\mathbf{h}_{d,jln}\in\mathbb{C}^{N_{t}\times 1}$. The channel matrix from the DL antenna array of the $j$th BS to the UL antenna array of the $l$th BS  is denoted by $\mathbf{T}_{jl}\in\mathbb{C}^{N_{r}\times N_{t}}$, with each element modeled as $\mathcal{CN}(0,\rho_{ij})$. 
We model the channel between the $k$th UL UE of the $l$th cell and the $n$th DL UE of $l'$th cell as $\mathtt{g}_{lk,l'n}\sim\mathcal{CN}(0,\epsilon_{lk,l'n})$.%, 
 In the channel estimation phase, we assume that all the UL and DL UEs synchronously transmit orthogonal pilots for channel estimation~\cite{Asymptotic,my_FD_SPAWC}. 
The UL and the DL estimated channels $\hat{\mathbf{f}}_{u,jlk}$ and $\hat{\mathbf{f}}_{d,jlk}$ of ${\mathbf{f}}_{u,jlk}$ and ${\mathbf{f}}_{d,jlk}$, respectively,  can be expressed as ${\mathbf{f}}_{x,jlk}={\hat{\mathbf{f}}}_{x,jlk}+{\tilde{\mathbf{f}}}_{x,jlk}$, $x\in{u,d}$,
with $\tilde{\mathbf{f}}_{x,jlk}$ being the estimation error vector, consisting of i.i.d. entries such that
$\tilde{\mathbf{f}}_{x,jlk}\sim\mathcal{CN}(0,(\beta_{jlk}-{\sigma^2_{jlk}})\mathbf{I})$ with $\textstyle{\sigma_{jlk}=\sqrt{{\tau_{p}\mathcal{E}_{p,lk}\beta_{jlk}^{2}}\over{\tau_{p}\sum_{l'}\tau_{p}\mathcal{E}_{p,l'k}\beta_{jl'k}+N_{0}}}.}$
Here, for convenience, we assume that the UEs are numbered such that identically indexed UEs across different cells share the same pilot sequence.

 Following this, the UEs and the BSs simultaneously transmit their data. Let $\hat{\mathbf{v}}_{u,jk}\in\mathbb{C}^{N}$ be the UL combiner $k$th column of the UL combiner vector designed at the $j$th BS. Similarly, let $\hat{\mathbf{v}}_{d,jn}\in\mathbb{C}^{N}$ be the DL precoder designed at the $j$th BS and is intended for the $n$th DL UE. Let the $k$th UL UE of the $j$th cell transmit its symbol $s_{u,jk}$  with power $\mathcal{E}_{u,jk}$, and the $j$th BS transmit  the precoded DL data $\mathbf{v}_{d,jn}s_{u,jn}$. The total power expended by the $j$th BS is denoted by $\mathcal{E}_{d,j}$ and the power control coefficient for the corresponding $n$th UE is denoted by $\kappa_{jn}$. We present the sum UL-DL SE for a cellular FD-mMIMO system with MRC~(i.e. $\mathbf{v}_{u,lk}=\hat{\mathbf{f}}_{u,llk}$) in the UL and MFP~(i.e. $\mathbf{v}_{d,ln}=\hat{\mathbf{f}}_{d,lln}^*$) in the DL in the following Lemma \textcolor{black}{based on~\cite{Asymptotic}:}
 
 \begin{lem}
	\textcolor{black}{The achievable sum UL-DL SE of a cellular FD-mMIMO system with MRC/MFP is} 
	\begin{equation} \textcolor{black}{
	\mathcal{R}^{\text{FD}}_\text{sum}= \frac{\tau-\tau_p}{\tau}\sum\nolimits_{l=1}^{L} \left\{\sum\nolimits_{k\in \mathcal{U}_{u}} \log_{2}(1+\eta_{u,lk}^{\text{FD}})+\sum\nolimits_{n\in\mathcal{U}_{d}}\log_{2}(1+\eta_{d,ln}^{\text{FD}})\right\},}
	\end{equation}
\textcolor{black}{with the UL and DL SINRs being $ \eta_{u,lk}^{\text{FD}}=\frac{N_{r}\sigma_{llk}^{2}\mathcal{E}_{u,lk}}{{\tt IBS}_{jk}+{\tt MUI}_{u,jk}+N_{0}}$, and $\eta_{d,ln}^{\text{FD}}=\frac{N_{t}^2\kappa_{ln}^{ 2}{{\mathcal{E}_{d,l}}}{\sigma_{lln}^4}}{{\tt IUI}_{ln}+{\tt MUI}_{d, ln}+N_{0}}$, respectively, 
with $\textstyle{{\tt IBS}_{jk}\triangleq N_{t}\sum\nolimits_{{j=1, j\neq l}}^{L}\sum\nolimits_{n\in\mathcal{U}_{j,d}}\kappa_{jn}^{ 2}{\mathcal{E}_{d,j}}\rho_{lj}}$,  \\ $\textstyle{{\tt MUI}_{u,jk}\triangleq N_{r}\sum\nolimits_{{j=1, j\neq l}}^{L}\sigma_{ljk}^{2}{\mathcal{E}_{u,jk}}+\sum\nolimits_{j=1}^{L}\sum\nolimits_{k'\in \mathcal{U}_{j,u}}{\beta_{ljk'}\mathcal{E}_{u,jk'}}}$, ${\tt IUI}_{ln}\triangleq \sum\nolimits_{j=1}^{L}\sum\nolimits_{k'\in\mathcal{U}_{j,u}}\!\!\!{\mathcal{E}_{u,lk'}}\epsilon_{jk',ln}$, \\ and ${\tt MUI}_{d,ln} \triangleq N_{t}^2\sum\nolimits_{{j=1, j\neq l}}^{L}\sigma_{jln}^{2}\sigma_{lln}^{2}\mathcal{E}_{d,j}\kappa_{jn}^{2}+ N_{t}\sum\nolimits_{j=1}^{L}\sum\nolimits_{k'\in {\mathcal{U}_{j,d} }}\beta_{jln}\sigma_{jjk'}^{2}{{\mathcal{E}_{d,j}}}\kappa_{jk'}^{ 2}$.}
\end{lem}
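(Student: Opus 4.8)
The plan is to obtain each per-UE SINR in the sum via the same use-and-then-forget argument used for Theorems~\ref{thm: rate_UL_thm} and~\ref{thm:rate_final_dllink}, and then collect the per-UE SEs and prepend the pilot-overhead factor $(\tau-\tau_p)/\tau$. Since the individual SEs simply add across cells and UEs, it suffices to establish the two SINR formulas $\eta_{u,lk}^{\text{FD}}$ and $\eta_{d,ln}^{\text{FD}}$; everything else is bookkeeping.

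For the UL, I would first write the signal received at the $l$th BS as the superposition of the desired UL stream $\sqrt{\mathcal{E}_{u,lk}}\mathbf{f}_{u,llk}s_{u,lk}$, the co-channel UL streams from all cells, the residual inter-BS interference arriving through $\mathbf{T}_{lj}$ from every DL BS, and the receiver noise. Applying the MRC combiner $\mathbf{v}_{u,lk}=\hat{\mathbf{f}}_{u,llk}$ and splitting the desired term into its mean and its zero-mean fluctuation, the effective signal power is $\mathcal{E}_{u,lk}\big|\mathbb{E}[\hat{\mathbf{f}}_{u,llk}^H\mathbf{f}_{u,llk}]\big|^2=\mathcal{E}_{u,lk}\big(N_r\sigma_{llk}^2\big)^2$, using $\mathbb{E}[\hat{\mathbf{f}}_{u,llk}^H\mathbf{f}_{u,llk}]=\mathbb{E}\|\hat{\mathbf{f}}_{u,llk}\|^2=N_r\sigma_{llk}^2$ and the orthogonality between the LMMSE estimate and its error. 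The noise term contributes $N_0\mathbb{E}\|\hat{\mathbf{f}}_{u,llk}\|^2=N_0N_r\sigma_{llk}^2$. Normalizing numerator and denominator by the common factor $N_r\sigma_{llk}^2$ then produces the stated numerator $N_r\sigma_{llk}^2\mathcal{E}_{u,lk}$ and the additive $N_0$.

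The denominator decomposes into three blocks. The UL streams whose channels are independent of $\hat{\mathbf{f}}_{u,llk}$ each contribute $\mathcal{E}_{u,jk'}\mathbb{E}\|\hat{\mathbf{f}}_{u,llk}\|^2\beta_{ljk'}$, which after normalization is the non-coherent part of ${\tt MUI}_{u,jk}$; the inter-BS block is evaluated using the identity $\mathbb{E}[\mathbf{T}_{lj}\mathbf{A}\mathbf{T}_{lj}^H]=\rho_{lj}\,\mathrm{tr}(\mathbf{A})\mathbf{I}_{N_r}$ together with the DL power budget on $\mathbf{x}_{d,j}$, yielding ${\tt IBS}_{jk}$. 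The delicate block is the coherent interference from the co-indexed UEs that reuse the pilot of UE $k$: because identically indexed UEs across cells share a pilot, the estimate $\hat{\mathbf{f}}_{u,llk}$ and the interfering channel $\mathbf{f}_{u,ljk}$ are correlated through the common received pilot at BS $l$, so $\hat{\mathbf{f}}_{u,ljk}$ is a deterministic scalar multiple of $\hat{\mathbf{f}}_{u,llk}$. Exploiting this proportionality, I would show that $\mathbb{E}\big[|\hat{\mathbf{f}}_{u,llk}^H\mathbf{f}_{u,ljk}|^2\big]$ carries an $O(N_r^2)$ coherent term that, after normalization, gives the $N_r\sum_{j\ne l}\sigma_{ljk}^2\mathcal{E}_{u,jk}$ part of ${\tt MUI}_{u,jk}$. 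This pilot-contamination accounting --- isolating the $O(N_r^2)$ coherent contribution from the $O(N_r)$ non-coherent residue via the estimate proportionality --- is the main obstacle; it is the only place the derivation goes beyond second-moment calculations for independent Gaussians.

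The DL SINR follows from the identical template applied to the signal at the $n$th DL UE of cell $l$ under channel reciprocity and the MFP precoder $\mathbf{v}_{d,ln}=\hat{\mathbf{f}}_{d,lln}^*$. The desired-signal mean $\kappa_{ln}\sqrt{\mathcal{E}_{d,l}}\,\mathbb{E}[\mathbf{f}_{d,lln}^T\hat{\mathbf{f}}_{d,lln}^*]=\kappa_{ln}\sqrt{\mathcal{E}_{d,l}}\,N_t\sigma_{lln}^2$ gives the numerator $N_t^2\kappa_{ln}^2\mathcal{E}_{d,l}\sigma_{lln}^4$; the UE-to-UE CLI through $\mathtt{g}_{jk',ln}$ contributes ${\tt IUI}_{ln}$ directly from $\mathbb{E}|\mathtt{g}_{jk',ln}|^2=\epsilon_{jk',ln}$; and the precoded streams for the remaining DL UEs split, exactly as in the UL, into a pilot-coherent $O(N_t^2)$ term and a non-coherent $O(N_t)$ term, assembling into ${\tt MUI}_{d,ln}$, with the residual AWGN contributing $N_0$. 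Substituting the two SINRs into $\sum_{l=1}^{L}\big\{\sum_{k\in\mathcal{U}_u}\log_2(1+\eta_{u,lk}^{\text{FD}})+\sum_{n\in\mathcal{U}_d}\log_2(1+\eta_{d,ln}^{\text{FD}})\big\}$ and multiplying by $(\tau-\tau_p)/\tau$ gives the claim.
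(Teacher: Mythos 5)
Your derivation is correct and follows exactly the use-and-then-forget template the paper itself applies in Appendices~\ref{appen: rate_uplink_thm} and~\ref{appen: rate_dllink_thm} for the CF case; note that the paper supplies no proof of this Lemma at all, importing the expressions from~\cite{Asymptotic}, so your sketch is in effect the omitted argument, with the pilot-contamination bookkeeping (the proportionality of $\hat{\mathbf{f}}_{u,ljk}$ to $\hat{\mathbf{f}}_{u,llk}$ isolating the $O(N_r^2)$ coherent term) handled the same way the paper handles the set $\mathcal{I}_p\backslash k$ in~\eqref{eq:Coh_1}. One small point your calculation would actually surface: pushing the inter-BS block through $\mathbb{E}[\mathbf{T}_{lj}\mathbf{A}\mathbf{T}_{lj}^H]=\rho_{lj}\,\mathrm{tr}(\mathbf{A})\mathbf{I}_{N_r}$ with $\mathbf{A}=\sigma_{jjn}^2\mathbf{I}_{N_t}$ leaves an extra factor $\sigma_{jjn}^2$ in ${\tt IBS}_{jk}$ relative to the stated expression, which disappears only if the DL power-control coefficients are normalized against $\mathbb{E}\|\hat{\mathbf{f}}_{d,jjn}\|^2$ --- worth making that normalization explicit if you write the proof out in full.
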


\section{Numerical Results}\label{sec:simulation}
In this section, we use our derived results to obtain insights into the performance of DTDD enabled HD-CF mMIMO system. For our numerical experiments, the UEs are dropped uniformly at random locations over a $1$~km $\times \ 1$~km area and are served by $M$ HD-APs, depending on the AP-schedules obtained via Algorithm~\ref{algo:GA_submod}. \textcolor{black}{The APs are arranged in a grid for fair comparison and maximal coverage~\cite{making_cf_emil,EGL2}.}  The pathloss exponent and the reference distance from each AP are assumed to be~$-3.76$ and $10$~m, respectively~\cite{cell_small}.
The UL SNR is set by fixing the noise variance $N_0$ to unity and varying the UL powers $\mathcal{E}_{u,k}$ such that $\mathcal{E}_{u,k}/N_{0}$ equals the desired value. In the DL, we set ${\kappa_{jn}} = (N\sum\limits_{k' \in {{\mathcal{U}}_d}}\alpha _{jk'}^2)^{ - 1}$ in~\eqref{eq: user_DL_receieve}, as in~\cite{Nayebi,cell_small}.  For the cellular system, we partition the area into $L$ equal sized cells with an FD-mMIMO BS deployed at each of the cell centers, and each UE is served by its nearest BS. The numerical results of this section are obtained by considering Monte Carlo simulations over $10^{4}$ random UE location and channel instantiations. 

\vspace*{-0.5cm}
\subsection{Performance comparison with MRC \& MFP:}
\begin{figure*}
	\begin{subfigure}[b]{0.48\textwidth}
	\centering
\includegraphics[width=0.85\textwidth]{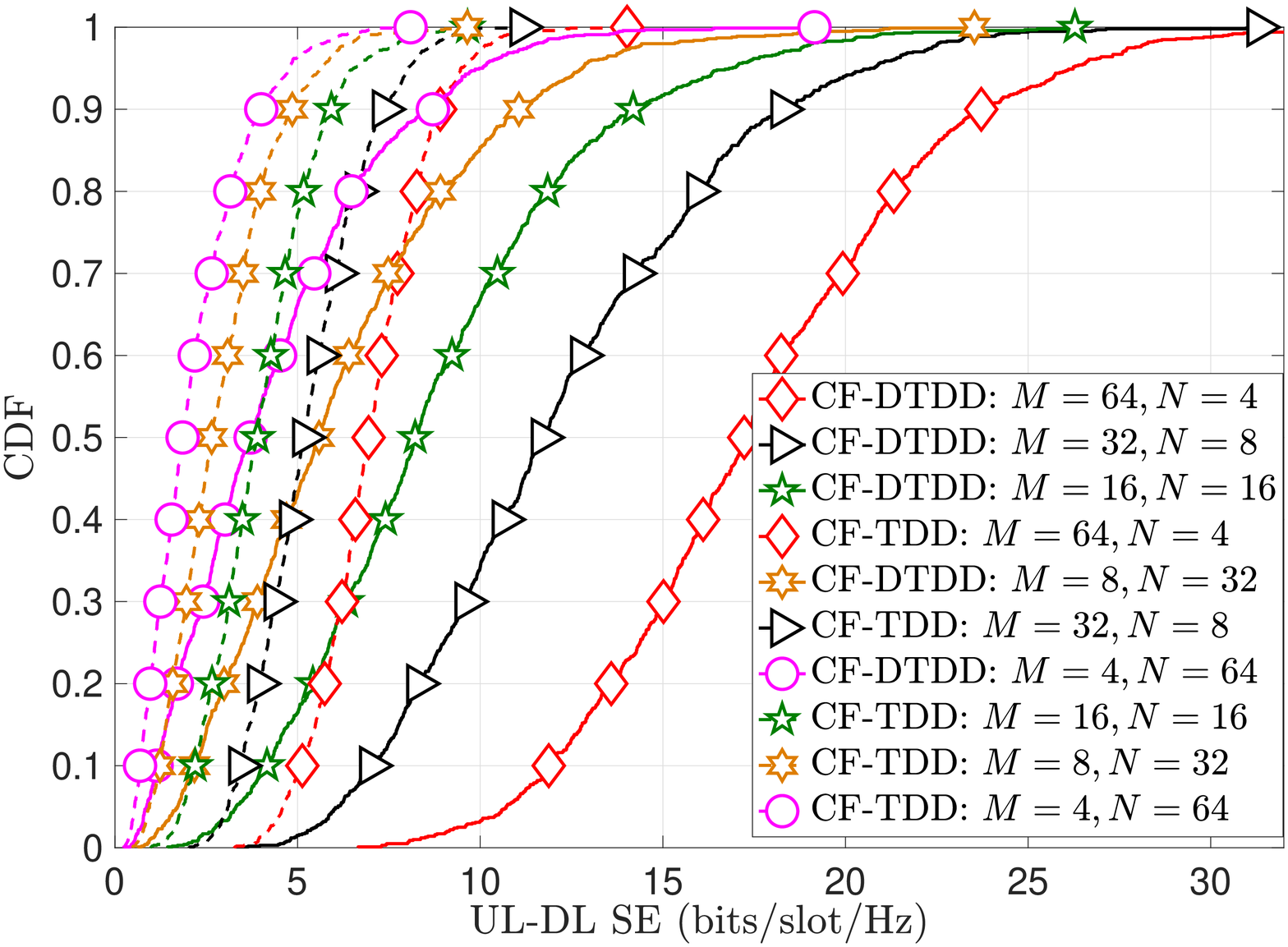}
\caption{CDF of the sum UL-DL SE of DTDD CF-mMIMO and TDD CF-mMIMO with different AP/antenna configurations.}\label{fig:TDD_DTDD_CF_K_32}
	\end{subfigure}
	\begin{subfigure}[b]{0.48\textwidth}
	\centering
\includegraphics[width=0.85\textwidth]{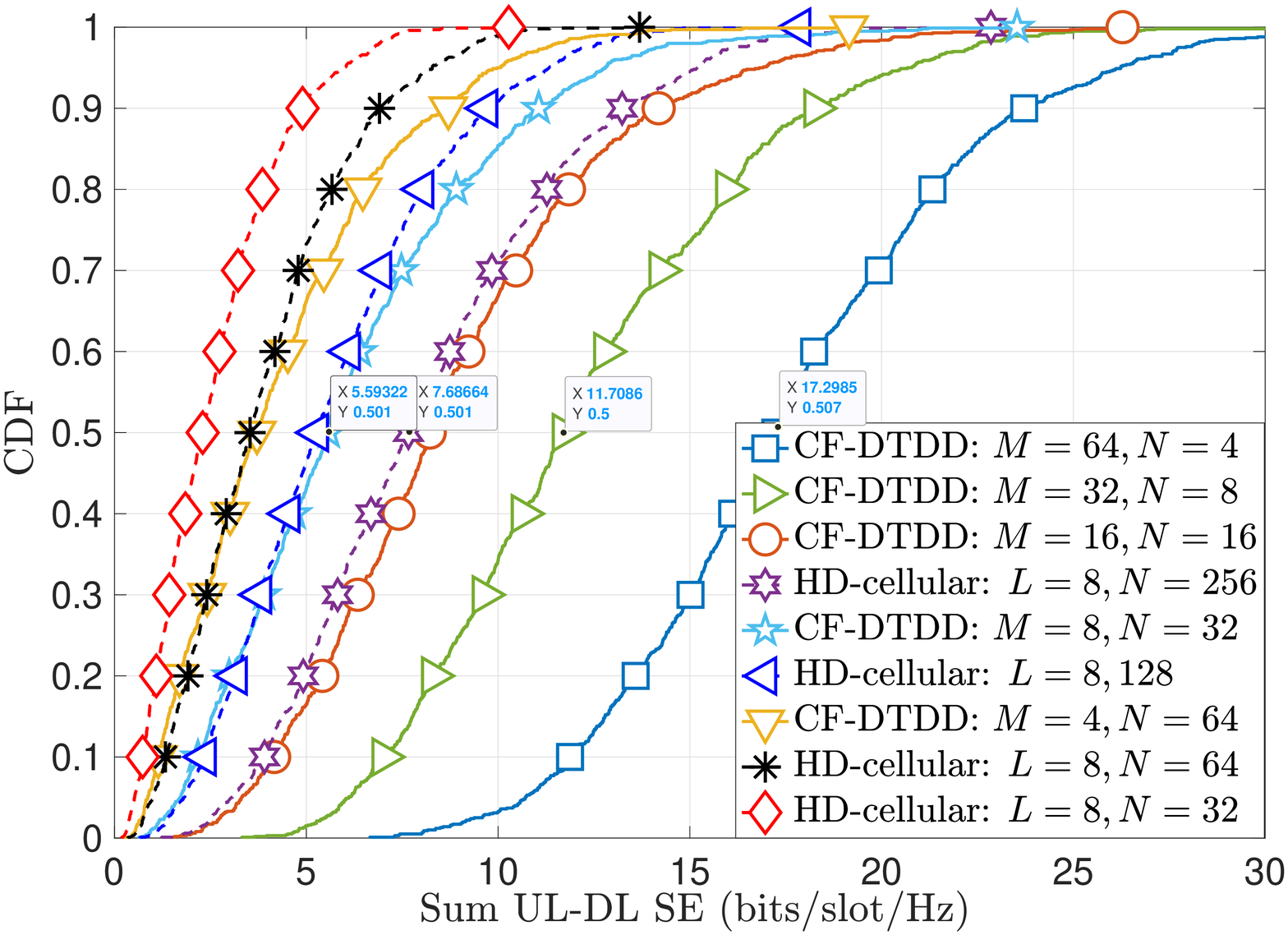}
\caption{CDF of the sum UL-DL SE of DTDD CF-mMIMO and TDD enabled cellular mMIMO.}\label{fig:cTDD_DTDD_CF_K_32}
\end{subfigure}
\caption{\textcolor{black}{Comparison of DTDD  CF-mMIMO with canonical TDD based CF and cellular mMIMO, with MRC/MFP employed at each of the APs/BSs.}
	}\label{fig:DTDD_TDD}
\vspace*{-.5cm}
\end{figure*}

\textcolor{black}{In Fig.~\ref{fig:DTDD_TDD}, we compare the performance of DTDD  CF-mMIMO with TDD based HD cellular and CF mMIMO, via the CDFs of the sum UL-DL SE. We consider $K=32$ UEs with $50\%$ of the UEs having UL data demand in each time slot. For each instantiation of UE positions, the APs are scheduled using the proposed greedy algorithm. In the cellular case, we consider $L=8$, with the BS in each cell equipped with $N$ antennas. For the HD TDD-CF and DTDD enabled CF, we consider multiple combinations $M$ and $N$. From  Fig.~\ref{fig:TDD_DTDD_CF_K_32}, we see that DTDD enabled CF-mMIMO system considerably improves the sum UL-DL SE compared to the other schemes. For example, CF-DTDD with $(M=64, N=4)$ offers a median sum UL-DL SE of $17$~bits/slot/Hz, whereas TDD CF offers only~$7$~bits/slot/Hz.   Next, in Fig.~\ref{fig:cTDD_DTDD_CF_K_32}, we compare DTDD CF-mMIMO with cellular TDD mMIMO. Cellular TDD with $(L=8, N= 256)$ performs similar to DTDD CF-mMIMO with $(M=16, N=16)$; note that the antenna density in the cellular system is $8$ times the antenna density of the CF system. DTDD schedules the APs based on the localized traffic demand and the UL-DL transmissions occur simultaneously, which results in the dramatic improvement in the system sum UL-DL SE compared to the cellular TDD case.
}

\begin{figure*}
	\begin{subfigure}[b]{0.48\textwidth}
		\centering
\includegraphics[width=0.85\textwidth]{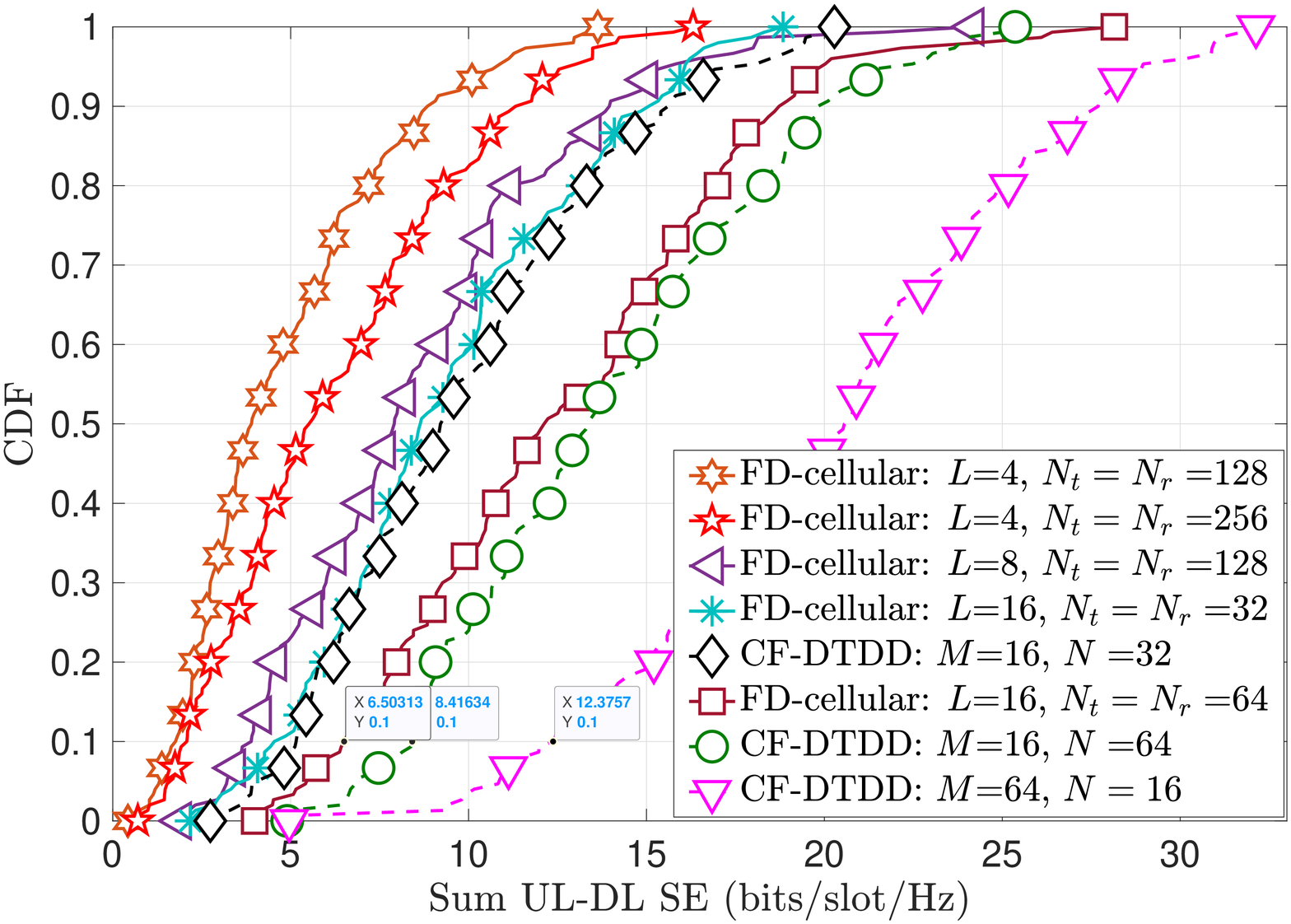}\caption{\textcolor{black}{CDF of the sum UL-DL SE of a DTDD CF-mMIMO and a cellular FD-mMIMO system with $K=32$ UEs.}}\label{fig:CDF_sum_SE}
	\end{subfigure}\hfill
	\begin{subfigure}[b]{0.48\textwidth}
		\centering
\includegraphics[width=0.85\textwidth]{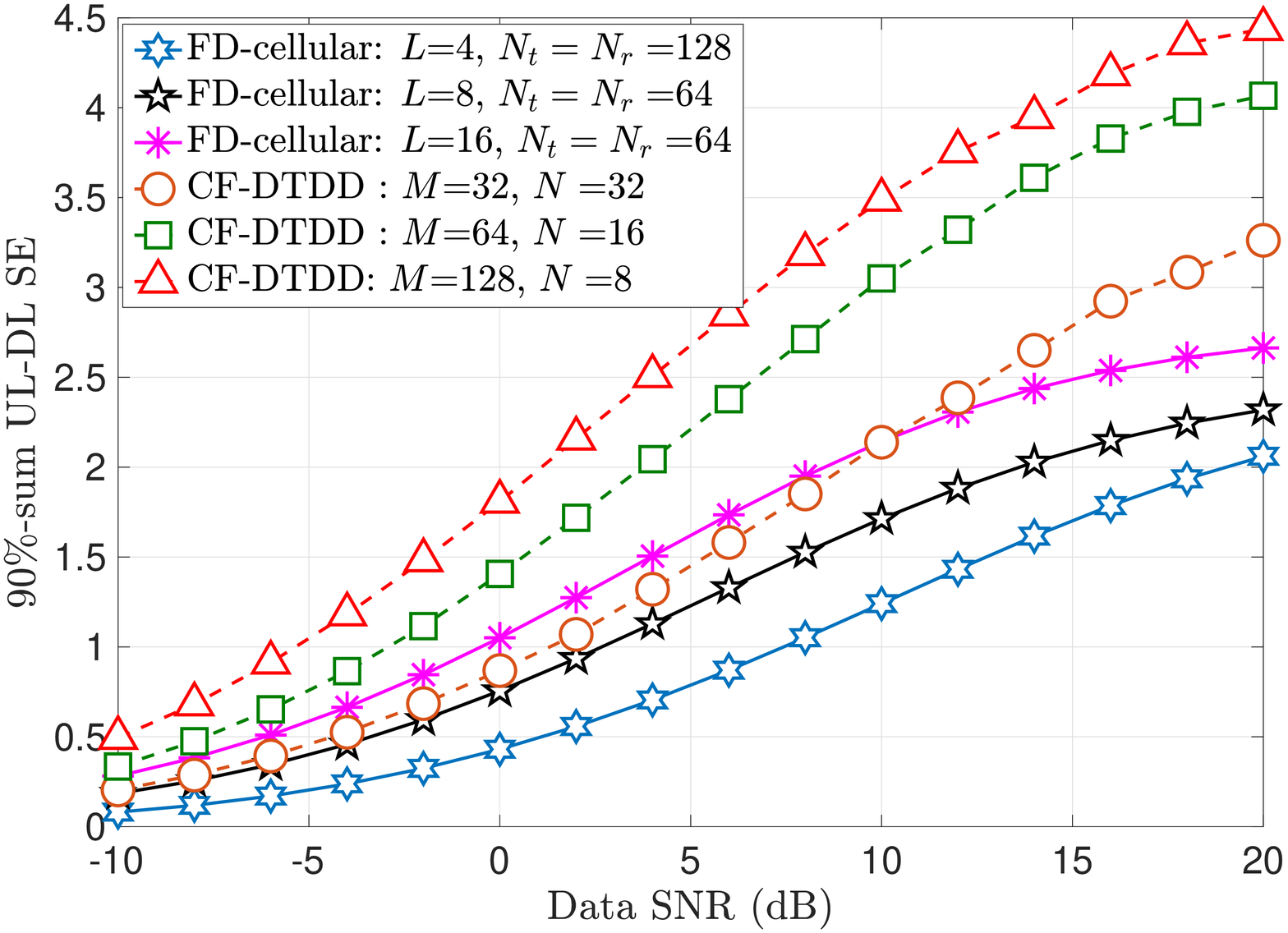}
\caption{$90\%$-likely sum UL-DL SE vs. UL and DL data SNR, with $K=60$ UEs. }\label{fig:antenna_denity}
\end{subfigure}\caption{\textcolor{black}{Comparison of DTDD CF-mMIMO  with an FD-cellular system.}}
\vspace*{-0.9cm}
\end{figure*}

\textcolor{black}{Next, in Fig.~\ref{fig:CDF_sum_SE}, we compare DTDD CF-mMIMO with an FD cellular system. CF-DTDD with HD APs and $(M=16,N=64)$ outperforms the cellular FD-system with double the antenna density, i.e., $(L=16, N_t=N_r=64)$. Increasing the number of APs, but still with half the antenna density compared to the FD (see the curve corresponding to $(M=64, N=16)$), results in significantly better sum UL-DL SE in HD CF-DTDD compared to the cellular FD system. 
Thus, although each BS in cellular system is equipped with simultaneous transmit and receive capability, the HD-distributed APs with dynamic scheduling and the joint processing benefits of a DTDD CF-mMIMO results in better sum UL-DL SE in the system.} 

\textcolor{black}{Next, we illustrate the dependence of the sum UL-DL SE on the data SNR.} In Fig.~\ref{fig:antenna_denity}, we plot the average $90\%$-likely sum UL-DL SE as a function of the UL and DL data SNR. We observe that at low data SNR regime~($-10$ to $10$ dB) an FD-cellular system with $(L=16, N_t=N_r=64)$ offers similar $90\%$-likely UL-DL SE compared to the CF-DTDD system with half the antenna density $(M=32, N=32)$. Moreover, if we increase the number of APs deployed, for example $(M=64, N=16), (M=128, N=8)$, CF-DTDD offers better performance throughout the entire range of data SNR. In both the cases, for a given antenna density, having a larger number of BS/APs is better: the beamforming gains are insufficient to offset the path loss and interference. 

\textcolor{black}{In Fig.~\ref{fig:data_load1}, we show the trade-off between the pilot length and the available data duration via plotting the $90\%$-likely sum UL-DL SE as a function of the ratio of the number of active UEs  to the coherence interval.} We consider two cases: $(i)$ $\tau_{p}=30$ irrespective of the number of UEs in the system~(Fig.~\ref{fig:data_load11}), $(ii)$ $\tau_{p}=K$, i.e., the pilot length is scaled linearly with the UE load~(Fig.~\ref{fig:data_load12}). We consider the overall fractional UL-DL data demands to be the same across the number of UEs. In case $(i)$, the sum UL-DL SE increases monotonically, even though there is pilot contamination in the system. This shows the effectiveness of the iterative pilot allocation algorithm presented in Sec.~\ref{sec: pilot_allocation}. However, in case $(ii)$, the duration available for data transmission reduces, leading to a decrease in the SE as the number of UEs increases. For instance, in Fig.~\ref{fig:data_load12}, with $(M=64, N=2)$,  the sum UL-DL SE decreases sharply when UE load goes beyond $55\%$ of the coherence interval. Thus, as the UE load increases, it is better to repeat shorter length pilots, along with a suitable algorithm to ensure minimal pilot contamination, to balance the errors introduced by pilot contamination with the data transmission duration.

\begin{figure}
	\begin{subfigure}[b]{0.49\textwidth}
		\centering
		\includegraphics[width=0.87\textwidth]{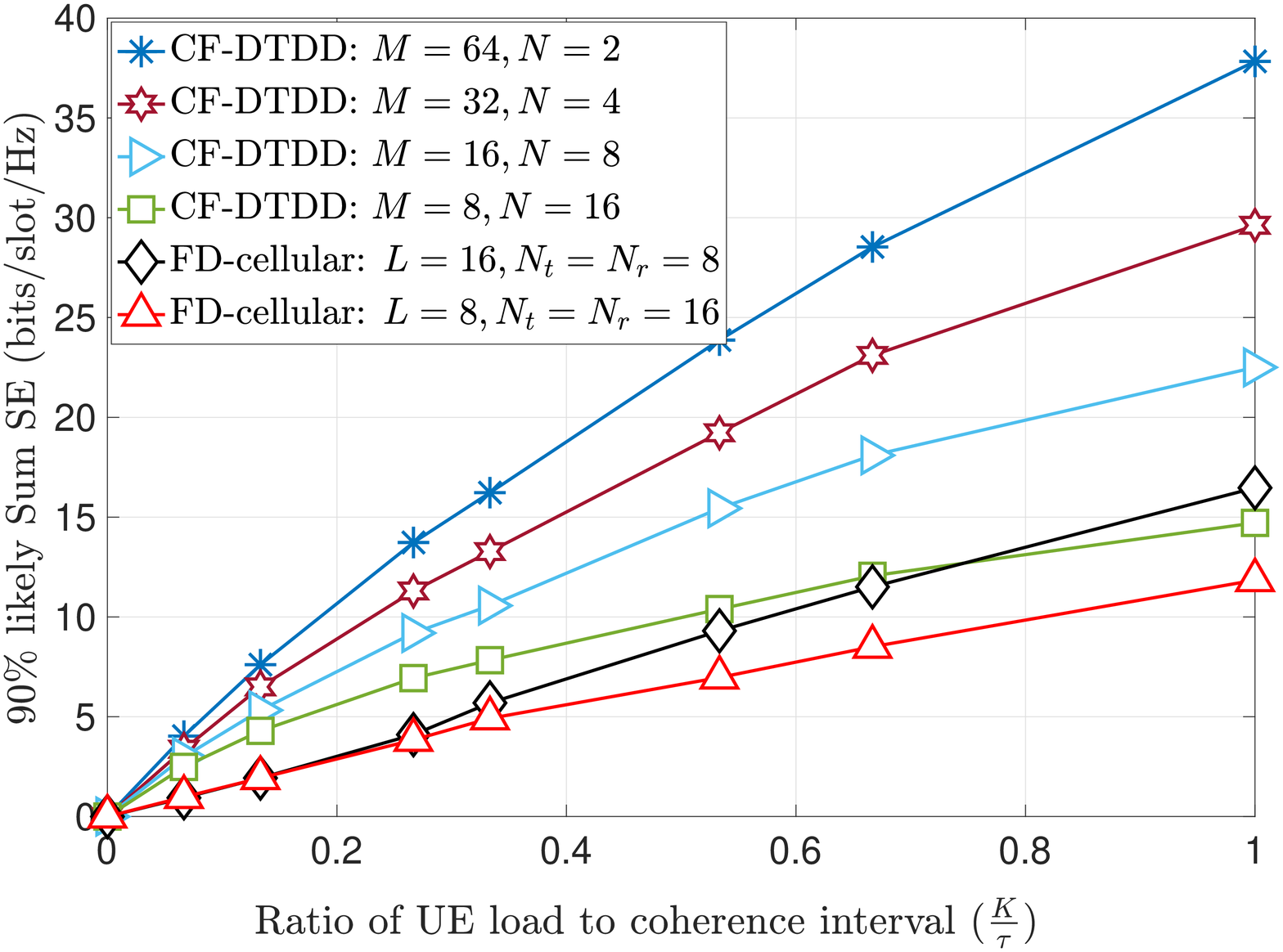}
		\caption{The $90\%$-likely sum UL-DL SE vs. the number of active UEs. with $\tau_p=30$,  $\tau=200$.}\label{fig:data_load11}
		\end{subfigure}
	\hfill
		\begin{subfigure}[b]{0.49\textwidth}
			\centering
		\includegraphics[width=0.87\textwidth]{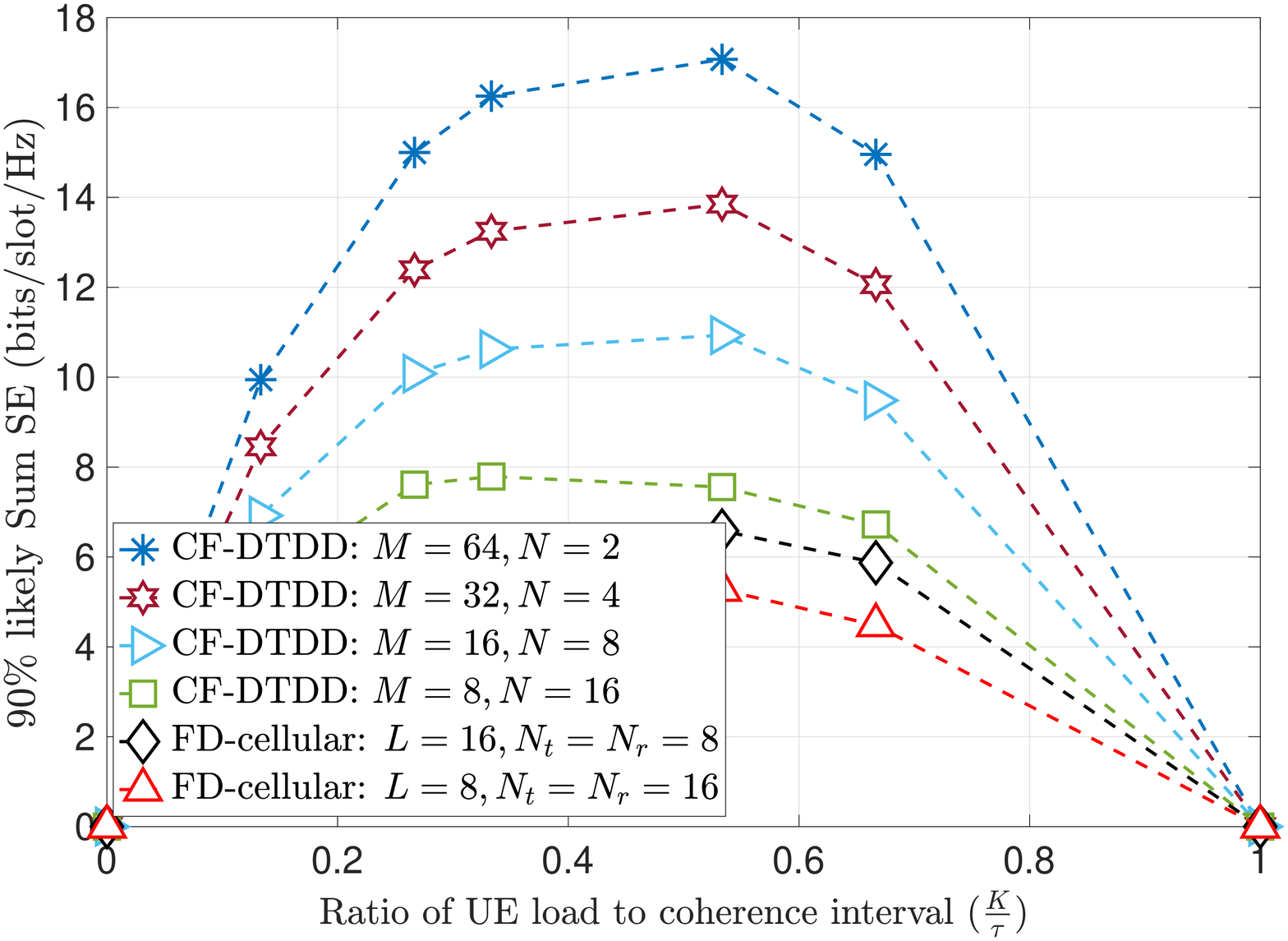}
		\caption{The $90\%$-likely sum UL-DL SE vs. the number of active UEs. with $\tau_p=K$, $\tau=200$.}\label{fig:data_load12}
	\end{subfigure}
\caption{The $90\%$-likely sum UL-DL SE vs. the number of active UEs. We see that scaling the pilot length with the number of UEs can be detrimental at high UE load; necessitating careful pilot allocation with limited length.}\label{fig:data_load1}
\vspace*{-0.7cm}
\end{figure}

Next in Fig.~\ref{fig:RSI}, we investigate the effect of AP-AP and inter-BS CLI on the UL sum SE in CF and cellular mMIMO systems, respectively. Here, we fix the DL SNR to $10$~dB for all UEs. We observe that in a cellular FD system, the UL sum SE reduces dramatically when the inter-BS CLI exceeds $-40$~dB. In contrast, in the DTDD enabled HD CF-mMIMO system, as the AP-AP CLI increases, the greedy algorithm ensures an AP-schedule that balances the UL and DL SE to maximize the overall sum SE. For instance, with $(M=32,N=4)$, as CLI increases from $-70$ dB to $-60$ dB, we observe a decrease in the UL SE, and beyond $-20$ dB, it saturates to about $5$ bits/slot/Hz. In contrast, in an FD cellular system with $(L=8, N_{t}=N_{r}=16)$ the UL SE reduces to nearly $0$ bits/slot/Hz at $-20$ dB of inter-BS CLI. Therefore, HD-APs with DTDD are more resilient to SI cancellation errors.  Also, the performance of the cellular FD-mMIMO shown in Fig.~\ref{fig:RSI} is an upper bound, since we consider perfect SI cancelation at the BSs.

\begin{figure*}
		\begin{subfigure}[b]{0.49\textwidth}
		\centering
		\includegraphics[width=0.87\textwidth]{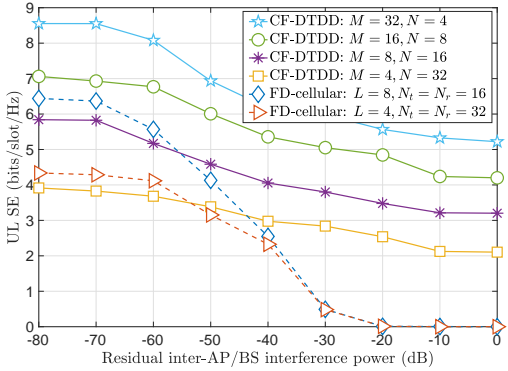}
		\caption{The average UL sum SE vs residual BS-BS/AP-AP cross link interference power. CF-mMIMO system with HD APs and DTDD is more resilient to CLI.}
		\label{fig:RSI}
	\end{subfigure}\hfill
	\begin{subfigure}[b]{0.49\textwidth}
\centering
\includegraphics[width=0.87\textwidth]{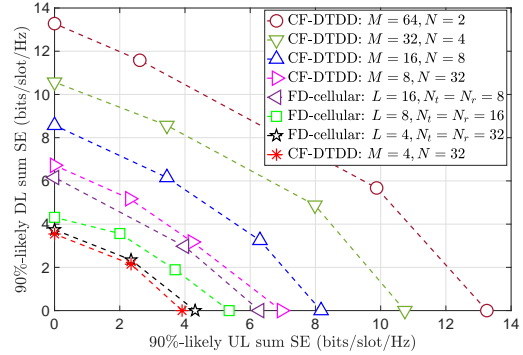}
\caption{Rate region between $90\%$-likely UL sum SE vs. $90\%$-likely DL sum SE. We observe that more APs with smaller antennas provides larger rate region compared to cellular FD system.}
\label{fig:rate_region}
	\end{subfigure}
\caption{Effect of SI on the UL SE and illustration of rate region under DTDD based CF-mMIMO system.}
\vspace*{-0.47cm}
\end{figure*}

In Fig.~\ref{fig:rate_region}, to illustrate the effect of the traffic demand on the UL and DL SE, we vary the fraction of UEs demanding UL data from $0$ to $1$, and plot the  $90\%$-likely UL sum SE against the $90\%$-likely DL sum SE obtained for each fractional UL-data demand. At each fractional UL-data demand, we use Algorithm~\ref{algo:GA_submod} to determine the mode of operation across the APs. We observe that the rate region attained by HD CF-DTDD is significantly larger than that of the cellular FD system, e.g., the HD CF-DTDD curve with $(M=16, N=8)$ and the FD-cellular curve with $(L=16, N_t=N_r=8)$. In fact, even with $100\%$ UL data demand (the points along the x-axis) or $100\%$ DL data demand (the points on the y-axis),  HD CF-DTDD outperforms FD-cellular by more than $2$~bits/slot/Hz. This is because the FD-cellular system has to contend with inter-cell interference, even if the SI cancelation is perfect. The joint data processing at the CPU and dynamic AP-scheduling based on the local data demands results in the higher  rate region obtained by the HD CF-DTDD mMIMO system.

\vspace*{-0.3cm}
\subsection{Performance comparison with MMSE \& RZF:}

\textcolor{black}{In Fig.~\ref{fig:perfoamnce_comp_MMSE_RZF_TDD_DTDD}, we compare the TDD based canonical CF-mMIMO system with our proposed DTDD enabled CF-mMIMO system. For both the schemes, we consider a centralized MMSE combiner in the UL and RZF in the DL.  DTDD CF-mMIMO with $(M=64, N=2)$ procures a sum UL-DL SE of $14.63$ bits/slot/Hz, while in similar settings, the TDD based CF-system only obtains $9.83$ bits/slot/Hz. Thus, although MMSE-type combiners and precoders improve the SE of both TDD and DTDD based system, the simultaneous UL-DL data traffic handling capabilities of DTDD based system help further enhance the achievable sum UL-DL SE.}
\begin{figure*}
\begin{subfigure}[b]{0.47\textwidth}
	\centering
	\includegraphics[width=0.87\textwidth]{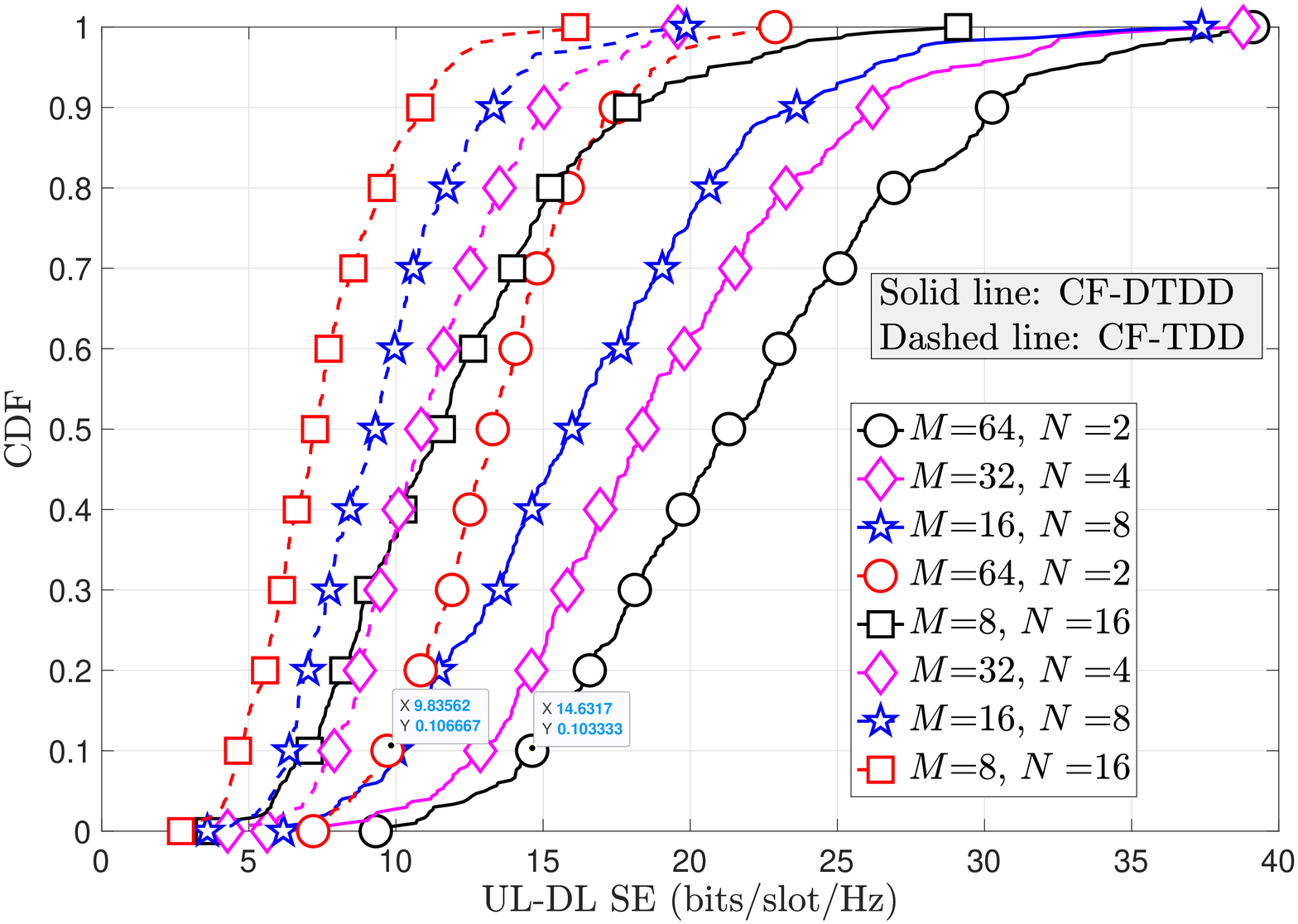}
	\caption{\textcolor{black}{Comparison of DTDD CF-mMIMO  and TDD based CF-mMIMO.}}\label{fig:perfoamnce_comp_MMSE_RZF_TDD_DTDD}
\end{subfigure}
\hfill
\begin{subfigure}[b]{0.47\textwidth}
	\centering
	\includegraphics[width=0.87\textwidth]{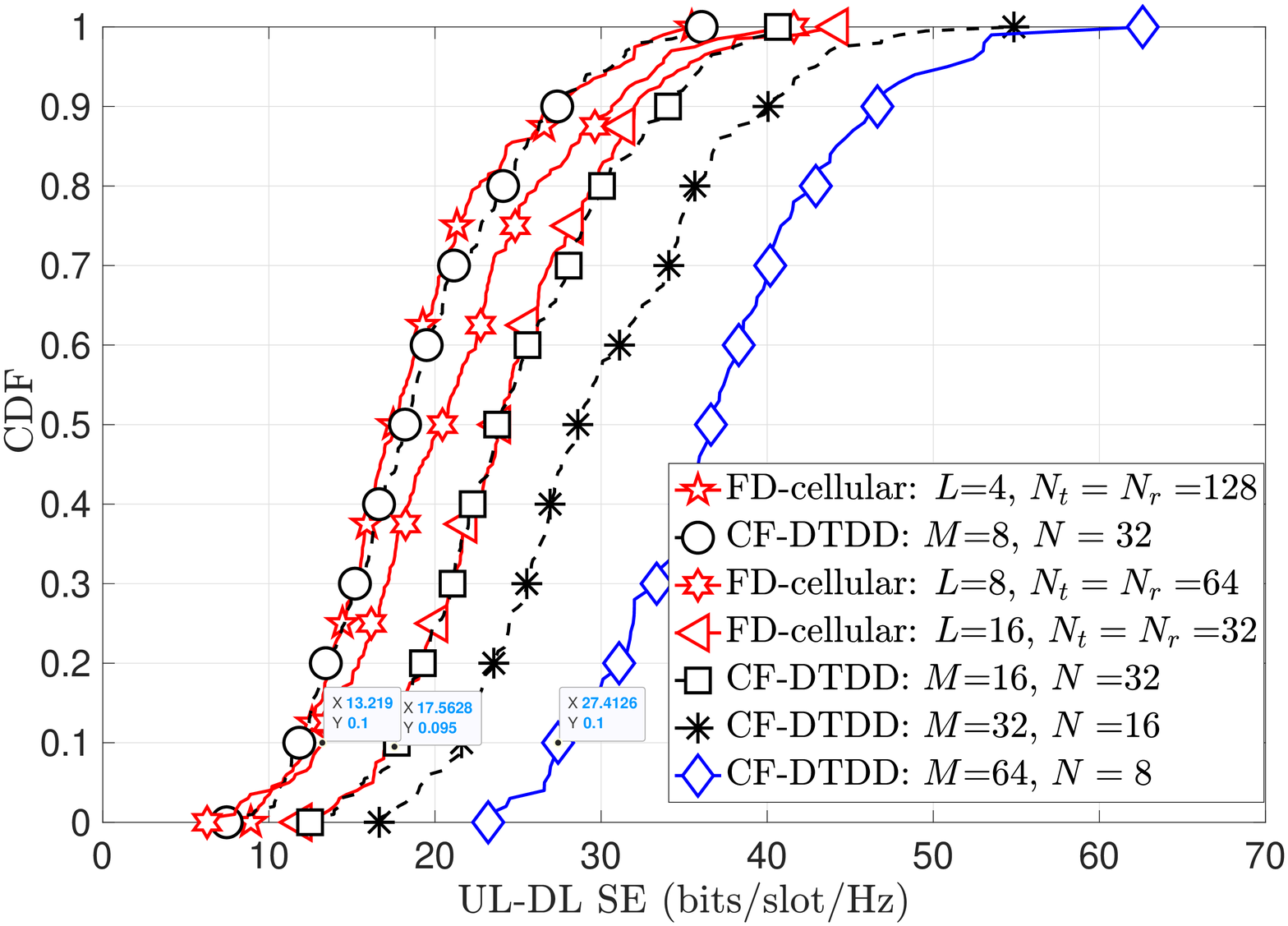}
	\caption{\textcolor{black}{CDF of the sum UL-DL SE achieved via DTDD CF-MIMO and FD mMIMO systems.}}\label{fig:MMSE_RZF_FD_CF}
\end{subfigure}\caption{\textcolor{black}{Comparison of DTDD CF-mMIMO  with CF-TDD and an FD cellular mMIMO systems.}}
\vspace*{-0.87cm}
\end{figure*}

\textcolor{black}{Next, we compare the performance of an FD-enabled cellular mMIMO system with the CF-DTDD based system. For the cellular case, we consider the multi-cell MMSE~(M-MMSE) combiner and precoder~\cite{massivemimobook}. From Fig~\ref{fig:MMSE_RZF_FD_CF},  we observe that an FD cellular system with $(L=16, N_{t}=N_{r}=32)$ and a CF-DTDD system with $(M=16, N=32)$ have a similar CDF of the sum UL-DL SE, in spite of the CF-system having half the antenna density as the FD system. Also, in the CF system, since the APs are HD, only a subset of the $16$ APs serve the UL UEs and its complement serves the DL APs. In contrast, all the $16$ FD BS can simultaneously serve both UL and DL UEs in their respective cells.  
We further see that an FD-system with $4$ BSs having $128$ transmit and receive antennas each offers a $90\%$ sum UL-DL SE of $13.2$ bits/slot/Hz, whereas the CF-DTDD based system with $(M=64, N=8)$ offers  $27$ bits/slot/Hz, a more than $100\%$ improvement. Although both FD cellular and CF-DTDD system support simultaneous UL and DL traffic load, the joint signal processing capability at the CPU in the CF architecture helps to substantially improve the sum UL-DL SE.}

\section{Conclusions}

In this paper, we analyzed the performance of an CF-mMIMO system under DTDD where the transmission mode at each HD AP is scheduled so that the sum UL-DL SE is maximized. The complexity of the brute-force search increases exponentially with the number of APs. To tackle this problem, we developed a sub-modularity based  greedy algorithm with associated optimality guarantees. Our numerical experiments revealed that a DTDD enabled CF-mMIMO system substantially improves the sum UL-DL SE compared to HD TDD CF and HD TDD cellular mMIMO system. \textcolor{black}{The key reason of the performance improvement in  DTDD compared to TDD based systems is that the former duplexing scheme can simultaneously serve the UL and the DL UEs in the system.} Furthermore, the HD DTDD CF-mMIMO can even outperform an FD-cellular mMIMO system. \textcolor{black}{Through extensive experiments, we illustrated the sum SE improvement of DTDD CF-mMIMO over the cellular FD system under different antenna densities, number of UEs, and fractional UL/DL data demands. We considered the MRC/MFP based scheme as well as the centralized MMSE/RZF based scheme, and in both cellular and CF systems. Under all these different system settings, we showed that CF-DTDD with a large number of APs can improve the sum UL-DL SE compared to an FD cellular system under similar antenna densities.} Essentially,  DTDD CF-mMIMO exploits the joint signal processing of a CF system coupled with the adaptive scheduling of UL-DL slots based on the localized traffic demands at the APs. We also presented an iterative pilot allocation algorithm which substantially reduces the effect of pilot contamination. A key advantage of DTDD enabled CF over the FD cellular system is that we no longer need additional hardware at each AP to cancel the SI. 
The system performance can be further improved by incorporating UL-to-DL UE interference cancellation techniques or power control strategies. Another aspect that is worth investigating further is a theoretical analysis of the latency performance of DTDD enabled CF-mMIMO systems, \textcolor{black}{and weighted sum SE maximization with UL/DL fairness constraints. As a final note, an FD-enabled CF-mMIMO system can potentially outperform the HD DTDD CF system, at the cost of SI cancelation hardware at the APs. It would be interesting to theoretically analyze their performance.} 
\vspace*{-0.3cm}
\appendices 
\begin{appendices}
\renewcommand{\thesectiondis}[2]{\Alph{section}:}
\vspace*{-0.27cm}
\section{Proof of Theorem~\ref{thm: rate_UL_thm}}\label{appen: rate_uplink_thm}
\begin{proof}
	The numerator of~\eqref{eq: MRC_uplink} can be written as $\mathbb{E}[\sum\nolimits_{m\in\mathcal{A}_{u}} \hat{\mathbf{f}}_{mk}^{H}{\hat{\mathbf{f}}_{mk}+\hat{\mathbf{f}}_{mk}^{H}\tilde{\mathbf{f}}_{mk}} ]=\sum\limits_{m\in\mathcal{A}_{u}}\mathbb{E}[\|\hat{\mathbf{f}}_{mk}\|^{2}]=N\sum\nolimits_{m\in\mathcal{A}_{u}}\alpha_{mk}^{2}.$ 
	Next, we can show ${\tt var}[\sum\nolimits_{m\in\mathcal{A}_{u}}\hat{\mathbf{f}}_{mk}^{H}{\mathbf{f}}_{mk}]=\sum\nolimits_{m\in\mathcal{A}_{u}}N\alpha_{mk}^2\beta_{mk}$.
	
	Now, for the UEs that share their pilot sequences  with the $k$th UE, i.e., for $n\in\mathcal{I}_{p}\backslash k$,
	\vspace*{-.4cm}
	\begin{align}\label{eq:MUI_interm_1}
	&\textstyle{\mathbb{E}\big[\big|\sum\limits_{m\in\mathcal{A}_{u}}\hat{\mathbf{f}}_{mk}^{H}\mathbf{f}_{mn}\big|^2\big]=\mathbb{E}\big[\big|\sum\limits_{m\in\mathcal{A}_{u}}c_{mk}\sqrt{\tau_{p}\mathcal{E}_{p,k}}\beta_{mk}\big(\hspace{-.1cm}\sum\limits_{{n'\in \mathcal{I}_{p}}}\sqrt{\mathcal{E}_{p,n'}\tau_{p}}\mathbf{f}_{mn'}+\dot{\mathbf{w}}_{p,m}\big)^{H}\mathbf{f}_{mn}\big|^2\big]}\notag\\
	&\textstyle{=N\sum\limits_{m\in\mathcal{A}_{u}}N_{0}c_{mk}^2{\tau_{p}\mathcal{E}_{p,k}}\beta_{mk}^2\beta_{mn}+\mathbb{E}\big[\big|\sum\limits_{m\in\mathcal{A}_{u}}c_{mk}\sqrt{\tau_{p}\mathcal{E}_{p,k}}\beta_{mk}\sum\limits_{n'\in \mathcal{{I}}_{p}}\sqrt{\mathcal{E}_{p,n'}\tau_{p}}\mathbf{f}_{mn'}^{H}\mathbf{f}_{mn}\big|^2\big]}.
	\end{align}
	The last term in the above can be simplified as follows:
	\vspace*{-.4cm}
	\begin{align*}
	&\textstyle{\mathbb{E}\big[\big|\sum\nolimits_{m\in\mathcal{A}_{u}}c_{mk}\sqrt{\tau_{p}\mathcal{E}_{p,k}}\beta_{mk}\big(\sum\nolimits_{n'\in \mathcal{{I}}_{p}\setminus n}\sqrt{\mathcal{E}_{p,n'}\tau_{p}}\mathbf{f}_{mn'}+\sqrt{\mathcal{E}_{p,n}\tau_{p}}\mathbf{f}_{mn}\big)^{H}\mathbf{f}_{mn}\big|^2\big]}
	\notag\\&=\textstyle{N \!\!\! \sum\limits_{m\in\mathcal{A}_{u}}\!\!\! c_{mk}^2{\tau_{p}^2\mathcal{E}_{p,k}}\beta_{mk}^{2}\mathcal{E}_{p,n}\beta_{mn}^{2}\!\!+\!\! N^{2}(\!\sum\limits_{m\in\mathcal{A}_{u}}\!\!\!\alpha_{mk}^2\frac{\beta_{mn}}{\beta_{mk}}\sqrt{\frac{\mathcal{E}_{p,n}}{\mathcal{E}_{p,k}}})^{2}}\!\!+\!\!\textstyle{N\!\!\!\sum\limits_{m\in\mathcal{A}_{u}}\!\sum\limits_{n'\in \mathcal{{I}}_{p}\setminus n}\!\!\!\!\! c_{mk}^2{\tau_{p}^2\mathcal{E}_{p,k}}\beta_{mk}^{2}\mathcal{E}_{p,n'}\beta_{mn'}\beta_{mn}}.
	\end{align*}
	Now $\textstyle{N\sum\nolimits_{m\in\mathcal{A}_{u}}\sum\nolimits_{n'\in \mathcal{{I}}_{p}\setminus n}c_{mk}^2{\tau_{p}^2\mathcal{E}_{p,k}}\beta_{mk}^{2}\mathcal{E}_{p,n'}\beta_{mn'}\beta_{mn}}$,
	%~\eqref{eq:MUI_interm_2} 
	can be  further expanded as \vspace{-0.2cm}
	\begin{align}\label{eq:MUI_interm_3}
	&
	\textstyle{N\sum\limits_{m\in\mathcal{A}_{u}}\hspace*{-.12cm}c_{mk}\sqrt{\tau_{p}\mathcal{E}_{p,k}}\beta_{mk}\beta_{mn}\big\{c_{mk}\sqrt{\tau_{p}\mathcal{E}_{p,k}}\beta_{mk}\sum\limits_{n'\in \mathcal{{I}}_{p}}\tau_{p}\mathcal{E}_{p,n'}\beta_{mn'}\big\}\hspace*{-.1cm}-N\hspace*{-.12cm}\sum\limits_{m\in\mathcal{A}_{u}}\hspace*{-.14cm}c_{mk}^2{\tau_{p}^2\mathcal{E}_{p,k}}\beta_{mk}^2\mathcal{E}_{p,n}\beta_{mn}^2}\notag\\&\textstyle{= N\hspace*{-.12cm}\sum\limits_{m\in\mathcal{A}_{u}}\hspace*{-.12cm}{\tau_p\mathcal{E}_{p,k}}c_{mk}\beta_{mk}^2\beta_{mn}-NN_{0}\hspace*{-.12cm}\sum\limits_{m\in\mathcal{A}_{u}}\hspace*{-.12cm}c_{mk}^2{\tau_{p}\mathcal{E}_{p,k}}\beta_{mk}^{2}\beta_{mn}-N\hspace*{-.12cm}\sum\limits_{m\in\mathcal{A}_{u}}\hspace*{-.12cm}c_{mk}^2{\tau_{p}^2\mathcal{E}_{p,k}}\beta_{mk}^2\mathcal{E}_{p,n}\beta_{mn}^2},
	\end{align}
	where, in the last step above, we used the fact $ c_{mk}\sqrt{\tau_{p}\mathcal{E}_{p,k}}\beta_{mk}\sum_{n'\in \mathcal{{I}}_{p}}\tau_{p}\mathcal{E}_{p,n'}\beta_{mn'}=\sqrt{\tau_{p}\mathcal{E}_{p,k}}\beta_{mk}-N_{0}c_{mk}\sqrt{\tau_{p}\mathcal{E}_{p,k}}\beta_{mk}$. Combining~\eqref{eq:MUI_interm_1} and~\eqref{eq:MUI_interm_3}, with $\alpha_{mk}^2={\tau_p\mathcal{E}_{p,k}}c_{mk}\beta_{mk}^2$, we obtain
	\vspace*{-.4cm}
	\begin{equation}\label{eq:Coh_1}
	\textstyle{\sum\limits_{\substack{n\in \mathcal{I}_{p}\backslash k}}{\mathcal{E}_{u,n}}\mathbb{E}\big[\big|\sum\limits_{m\in\mathcal{A}_{u}}\hat{\mathbf{f}}_{mk}^{H}\mathbf{f}_{mn}\big|^2\big]=\!\!\! \sum\limits_{\substack{n\in \mathcal{I}_{p}\backslash k}}\!\!\!{\mathcal{E}_{u,n}}\big(N^{2}\big(\sum\limits_{m\in\mathcal{A}_{u}}\alpha_{mk}^2\frac{\beta_{mn}}{\beta_{mk}}\sqrt{\frac{\mathcal{E}_{p,n}}{\mathcal{E}_{p,k}}}\big)^{2}\!\!\!+\!\! N\!\!\!\sum\limits_{m\in\mathcal{A}_{u}}\alpha_{mk}^2\beta_{mn}\big)}.
	\end{equation}
The first two terms above correspond to coherent interference~\eqref{eq:Coh} and  non-coherent interference, respectively, from UEs that share the $k$th UE's pilot. Next, considering the interference due to the UEs that do not share the $k$th UE's pilot, we obtain
\vspace*{-.4cm}
	\begin{equation}\label{eq:NCoh_1}
		\textstyle{\sum\nolimits_{q\in \mathcal{U}_{u}\backslash\mathcal{I}_{p} }{\mathcal{E}_{u,n}}\mathbb{E}\big[\big|\sum\nolimits_{m\in\mathcal{A}_{u}}\hat{\mathbf{f}}_{mk}^{H}\mathbf{f}_{mq}\big|^{2}\big]=N\sum\nolimits_{q\in \mathcal{U}_{u}\backslash\mathcal{I}_{p} }{\mathcal{E}_{u,n}}\sum\nolimits_{m\in\mathcal{A}_{u}}\tau_p\mathcal{E}_{p,k}c_{mk}\beta_{mk}^2\beta_{mq}}.
	\end{equation}
	Hence,~\eqref{eq:NCoh} follows via combining the beamforming uncertainty, the second term of~\eqref{eq:Coh_1}, and~\eqref{eq:NCoh_1}.
	Finally, we can derive the inter-AP interference as 
	\vspace*{-.4cm}
	\begin{align*}
		&\textstyle{\sum\limits_{n\in\mathcal{U}_{d}}\mathbb{E}\big[\big|\sum\limits_{m\in\mathcal{A}_{u}}\sum\limits_{j\in\mathcal{A}_{d}}\sqrt{\mathcal{E}_{d,j}}\kappa_{jn} \hat{\mathbf{f}}_{mk}^{H}\mathbf{G}_{mj}\hat{\mathbf{f}}_{jn}^{*}\big|^{2}\big]}
		\textstyle{=\sum\limits_{n\in\mathcal{U}_{d}}\sum\limits_{m\in\mathcal{A}_{u}}\sum_{j\in \mathcal{A}_{d}}{\mathcal{E}_{d,j}}\kappa_{jn}^2\mathbb{E}\big[\text{tr}\big(\mathbf{G}_{mj}\hat{\mathbf{f}}_{jn}^{*}\hat{\mathbf{f}}_{jn}^{T}\mathbf{G}_{mj}^{H}\hat{\mathbf{f}}_{mk}\hat{\mathbf{f}}_{mk}^{H}\big)\big]}\notag\\&\textstyle{{(a)\atop=}\sum\limits_{n\in\mathcal{U}_{d}}\sum\limits_{m\in\mathcal{A}_{u}}\sum\limits_{j\in \mathcal{A}_{d}}{\mathcal{E}_{d,j}}\kappa_{jn}^2\text{tr}\big(\mathbb{E}\big[\mathbf{G}_{mj}\hat{\mathbf{f}}_{jn}^{*}\hat{\mathbf{f}}_{jn}^{T}\mathbf{G}_{mj}^{H}\big]\mathbb{E}\big[\hat{\mathbf{f}}_{mk}\hat{\mathbf{f}}_{mk}^{H}\big]\big)\!\!=\!\!\! N^2\!\!\!\!\sum\limits_{m\in\mathcal{A}_{u}}\sum\limits_{j\in\mathcal{A}_{d}}\sum\limits_{n\in\mathcal{U}_{d}}\kappa_{jn}^2\zeta_{mj}\alpha_{mk}^2\alpha_{jn}^2\mathcal{E}_{d,j}}.
	\end{align*}  
	In $(a)$, we apply the linearity of trace. Then we use the fact that $\mathbf{G}_{mj}$  and $\mathbf{f}_{jn}$ are independent, and therefore, $\mathbb{E}[\mathbf{G}_{mj}\hat{\mathbf{f}}_{jn}^{*}\hat{\mathbf{f}}_{jn}^{T}\mathbf{G}_{mj}^{H}]=\zeta_{mj}\mathbb{E}[\text{tr}(\hat{\mathbf{f}}_{jn}^{*}\hat{\mathbf{f}}_{jn}^{T})]\mathbf{I}_{N}=N\alpha_{jn}^2\zeta_{mj}\mathbf{I}_{N}$, and $\mathbb{E}[\hat{\mathbf{f}}_{mk}\hat{\mathbf{f}}_{mk}^{H}]=\alpha_{mk}^2\mathbf{I}_{N}$, which yields the final result in~\eqref{eq:IIAP}.
\end{proof}
	\vspace*{-.47cm}
\section{Proof of Theorem~\ref{thm:rate_final_dllink}}\label{appen: rate_dllink_thm}
\begin{proof}
	We note that from~\eqref{eq: user_DL_receieve}, we can write the effective DL SINR as
		\begin{align}\label{eq:R_dl_ergodic}
			&\textstyle{\eta_{d,n}\hspace*{-.1cm}=\hspace*{-.1cm}\big[{\big|\sum\nolimits_{j\in\mathcal{A}_{d}}\kappa_{jn}\sqrt{\mathcal{E}_{d,j}}\mathbb{E}\big[\mathbf{f}_{jn}^{T}\hat{\mathbf{f}}_{jn}^{*}\big]\big|^2}\big]\hspace*{-.1cm}\times\hspace*{-.1cm}\big(\hspace*{-.1cm}{\tt var}\big\lbrace\hspace*{-.1cm}\sum\nolimits_{~j\in\mathcal{A}_{d}}\hspace*{-.15cm}\kappa_{jn}\sqrt{\mathcal{E}_{d,j}}\mathbf{f}_{jn}^{T}\hat{\mathbf{f}}_{jn}^{*}\hspace*{-.1cm}\big\rbrace\hspace*{-.1cm}}\notag\\&+\textstyle{\hspace{-.3cm}\sum\limits_{\substack{q\in \mathcal{I}_{p}\backslash n}}\hspace{-.15cm}\mathbb{E}\big[\big|\sum\limits_{j\in\mathcal{A}_{d}}\hspace*{-.1cm}\kappa_{jq}\sqrt{\mathcal{E}_{d,j}}\mathbf{f}_{jn}^{T}\hat{\mathbf{f}}_{jq}^{*}\big|^{2}\big]\hspace{-0.1cm}+\hspace{-0.3cm}\sum\limits_{q\in\mathcal{U}_d\backslash \mathcal{I}_{p}}\!\!\!\mathbb{E}\big[\big|\sum\limits_{j\in\mathcal{A}_{d}}\kappa_{jq}\sqrt{\mathcal{E}_{d,q}}\mathbf{f}_{jn}^{T}\hat{\mathbf{f}}_{jq}^{*}\big|^{2}\big]\hspace{-0.2cm}+\hspace{-0.2cm}\sum\limits_{k\in\mathcal{U}_{u}}\mathcal{E}_{u,n}\mathbb{E}\big|\mathtt{g}_{nk}\big|^2+N_{0}\big)^{-1}\hspace{-.49cm}}.
			\end{align}
	The gain and the variance of the beamforming uncertainty related terms, i.e., the numerator term and the first term in the denominator of~\eqref{eq:R_dl_ergodic}, can be obtained via steps similar to those in the UL case. The second term in the denominator of~\eqref{eq:R_dl_ergodic}, which is the inter-UE  interference due to data streams of the UEs that share pilots with the $n$th UE, i.e., $q\in\mathcal{I}_{n}\backslash n$, can be expressed as
	\begin{align}\label{eq:MUI_interm_4}
	\textstyle{\mathbb{E}\big[\big|\sum\nolimits_{j\in\mathcal{A}_{d}}\kappa_{jq}\sqrt{\mathcal{E}_{d,j}}\mathbf{f}_{jn}^{T}}&\textstyle{\hat{\mathbf{f}}_{jq}^{*}\big|^{2}\big]=\mathbb{E}\big[\big|\sum\nolimits_{j\in\mathcal{A}_{d}}\kappa_{jq}\sqrt{\mathcal{E}_{d,j}}\tau_p\sqrt{\mathcal{E}_{p,n}\mathcal{E}_{p,q}}{c}_{jq}\beta_{jq}\|\mathbf{f}_{jn}\|^2\big|^{2}\big]}\notag\\&\hspace*{-4cm}\textstyle{+\mathbb{E}\big[\big|\sum\nolimits_{j\in\mathcal{A}_{d}}\kappa_{jq}\sqrt{\mathcal{E}_{d,j}}{c}_{jq}\sqrt{\tau_p\mathcal{E}_{p,q}}\beta_{jq}\mathbf{f}_{jn}^{T}\big(\sum\nolimits_{\substack{q'\in\mathcal{{I}}_{p}\backslash n}}\sqrt{\tau_{p}\mathcal{E}_{p,q'}}\mathbf{f}_{jq'}+\dot{\mathbf{w}}_{p,j}\big)\big|^2\big]}
	\notag\\&\hspace*{-4cm}\textstyle{=N(N+1)\tau_p^{2}\mathcal{E}_{p,n}\mathcal{E}_{p,q}\sum\nolimits_{j\in\mathcal{A}_{d}}\kappa_{jq}^2\mathcal{E}_{d,j}{c}_{jq}^2\beta_{jq}^{2}\beta_{jn}^2+N^2\tau_{p}^2\mathcal{E}_{p,n}\mathcal{E}_{p,q}\big(\sum\nolimits_{j\in\mathcal{A}_{d}}\sqrt{\mathcal{E}_{d,j}}\kappa_{jq}c_{jq}\beta_{jq}\beta_{jn}\big)}\notag\\&\hspace*{-4cm}\textstyle{\times\big(\!\!\!\!\!\!\sum\limits_{\substack{j'\in\mathcal{A}_{d},  j'\neq j}}\!\!\!\!\!\!\sqrt{\mathcal{E}_{d,j'}}\kappa_{j'q}c_{j'q}\beta_{j'q}\beta_{j'n}\big)+\!\! N\tau_{p}\mathcal{E}_{p,q}\sum\limits_{j\in\mathcal{A}_{d}}\kappa_{jq}^2\mathcal{E}_{d,j}{c}_{jq}^2\beta_{jq}^2\big(\!\!\sum_{q'\in\mathcal{{I}}_{p}\backslash n}{\tau_{p}\mathcal{E}_{p,q'}}\beta_{jq'}+N_{0}\big)\beta_{jn}}.
	\end{align}
	Further algebraic manipulations yield
	\begin{align}\label{eq:dl_sinr_coh}
	&\textstyle{\sum\nolimits_{\substack{q\in \mathcal{I}_{p}\backslash n}}\mathbb{E}\big[\big|\sum\nolimits_{j\in\mathcal{A}_{d}}\kappa_{jq}\sqrt{\mathcal{E}_{d,j}}\mathbf{f}_{jn}^{T}\hat{\mathbf{f}}_{jq}^{*}\big|^{2}\big]+\sum\nolimits_{{q\in \mathcal{U}_{d}\backslash \mathcal{I}_{p}}}\mathbb{E}\big[\big|\sum\nolimits_{j\in\mathcal{A}_{d}}\kappa_{jq}\sqrt{\mathcal{E}_{d,q}}\mathbf{f}_{jn}^{T}\hat{\mathbf{f}}_{jq}^{*}\big|^{2}\big]}\notag\\&\textstyle{=N^{2}\sum\nolimits_{\substack{q\in \mathcal{I}_{p}\backslash n}}\big(\sum\nolimits_{j\in\mathcal{A}_{d}}\sqrt{\mathcal{E}_{d,j}}\kappa_{jq}\alpha_{jq}^2\textstyle{\sqrt{\frac{\mathcal{E}_{p,n}}{\mathcal{E}_{p,q}}}}\frac{\beta_{jn}}{\beta_{jq}}\big)^{2}+N\sum\nolimits_{q\in \mathcal{U}_{d}\backslash n}\sum\nolimits_{j\in\mathcal{A}_{d}}\mathcal{E}_{d,j}\kappa_{jq}^2\beta_{jn}\alpha_{jq}^2.}
	\end{align}
	The first term of~\eqref{eq:dl_sinr_coh} equates to~\eqref{eq:dl_coh}. The second term of~\eqref{eq:dl_sinr_coh} together with the $n$th UE's beamforming uncertainty corresponds to~\eqref{eq:dl_ncoh}. Finally,~\eqref{eq:dl_UEUE} follows because $\mathbb{E}\big|\mathtt{g}_{nk}\big|^2=\epsilon_{nk}$. 
\end{proof}
	\vspace*{-.4cm}
\section{Proof of Theorem~\ref{thm:submod}}\label{proof:submod_thm}
\begin{proof}
	We present an inductive proof. Let us assume we schedule the APs in $\mathcal{A}_{s}$ such that $f_{mk}(\mathcal{A}_{s})$ is maximized. Now consider the set $\mathcal{A}_{t}$, such that $\mathcal{A}_{s}\subseteq\mathcal{A}_{t}$. We need to prove that, if we schedule any AP $\{j\}\notin\mathcal{A}_{t}$ next, the incremental gain attained by adding $\{j\}$ to $\mathcal{A}_{t}$ is smaller than the incremental gain achieved by adding $\{j\}$ to $\mathcal{A}_{s}$. Now, by our hypothesis, the set $\mathcal{A}_{s}$ is determined first to maximize $f_{mk}(.)$. Therefore, since AP $\{j\}$ is not part of $\mathcal{A}_{s}$, the product SINR under $\mathcal{A}_{s}$ is greater than that attained via only activating the $\{j\}$th AP in either of the mode of transmissions, that is $\prod_{k=1}^{K}\frac{\sum_{m\in\mathcal{A}_{s}}{\tt G}_{mk}(\mathcal{A}_{s})}{\sum_{m\in\mathcal{A}_{s}}{\tt I}_{mk}(\mathcal{A}_{s})}\geq \prod_{k=1}^{K}\frac{{\tt G}_{jk}(\{j\})}{{\tt I}_{jk}(\{j\})}$.
	Now using the monotonic nondecreasing property in Definition~\ref{defn:sub_mod}, we can write
	\begin{align}
	&\textstyle{{\prod_{k=1}^{K}\frac{\sum_{m\in\mathcal{A}_{t}}{\tt G}_{mk}(\mathcal{A}_{t})}{\sum_{m\in\mathcal{A}_{t}}{\tt I}_{mk}(\mathcal{A}_{t})}\geq \prod_{k=1}^{K}\frac{\sum_{m\in\mathcal{A}_{s}}{\tt G}_{mk}(\mathcal{A}_{s})}{\sum_{m\in\mathcal{A}_{s}}{\tt I}_{mk}(\mathcal{A}_{s})}}}\notag\\&\textstyle{\Rightarrow\frac{\prod_{k=1}^{K}\sum_{m\in\mathcal{A}_{t}}{\tt G}_{mk}(\mathcal{A}_{t})-\prod_{k=1}^{K}\sum_{m\in\mathcal{A}_{t}}{\tt G}_{mk}(\mathcal{A}_{s})}{\prod_{k=1}^{K}\sum_{m\in\mathcal{A}_{t}}{\tt I}_{mk}(\mathcal{A}_{t})-\prod_{k=1}^{K}\sum_{m\in\mathcal{A}_{s}}{\tt I}_{mk}(\mathcal{A}_{s})}\geq \frac{\prod_{k=1}^{K}\sum_{m\in\mathcal{A}_{s}}{\tt G}_{mk}(\mathcal{A}_{s})}{\prod_{k=1}^{K}\sum_{m\in\mathcal{A}_{s}}{\tt I}_{mk}(\mathcal{A}_{s})}\geq\prod_{k=1}^{K}\frac{{\tt G}_{jk}(\{j\})}{{\tt I}_{jk}(\{j\})}}
	\notag\\&\textstyle{\Rightarrow -\prod_{k=1}^{K}{{\tt I}_{jk}(\{j\})}\prod_{k=1}^{K}\sum_{m\in\mathcal{A}_{s}}{\tt G}_{mk}(\mathcal{A}_{s})+\prod_{k=1}^{K}{\tt G}_{jk}(\{j\})\prod_{k=1}^{K}\sum_{m\in\mathcal{A}_{s}}{\tt I}_{mk}(\mathcal{A}_{s})}\notag\\&\textstyle{\geq-\prod_{k=1}^{K}{{\tt I}_{jk}(\{j\})}\prod_{k=1}^{K}\sum_{m\in\mathcal{A}_{t}}{\tt G}_{mk}(\mathcal{A}_{t})+\prod_{k=1}^{K}{\tt G}_{jk}(\{j\})\prod_{k=1}^{K}\sum_{m\in\mathcal{A}_{t}}{\tt I}_{mk}(\mathcal{A}_{t})}\label{eq:sub_mod_addsub}
	\end{align}
	Next, adding and subtracting  $ \prod_{k=1}^{K}\sum_{m\in\mathcal{A}_{s}}{\tt G}_{mk}(\mathcal{A}_{s})\prod_{k=1}^{K}\sum_{m\in\mathcal{A}_{s}}{\tt I}_{mk}(\mathcal{A}_{s})$  on the left hand side and 
	$\prod_{k=1}^{K}\sum_{m\in\mathcal{A}_{t}}{\tt G}_{mk}(\mathcal{A}_{t})\prod_{k=1}^{K}\sum_{m\in\mathcal{A}_{t}}{\tt I}_{mk}(\mathcal{A}_{t})$ on the right hand side of~\eqref{eq:sub_mod_addsub}, we get
	\vspace*{-0.2cm}
	\begin{align}
	&\hspace{-1cm}\textstyle{-\prod_{k=1}^{K}\big({{\tt I}_{jk}(\{j\})}+\sum_{m\in\mathcal{A}_{s}}{\tt I}_{mk}(\mathcal{A}_{s})\big)\prod_{k=1}^{K}\sum_{m\in\mathcal{A}_{s}}{\tt G}_{mk}(\mathcal{A}_{s})}\notag\\&\textstyle{+\prod_{k=1}^{K}\big({\tt G}_{jk}(\{j\})+\sum_{m\in\mathcal{A}_{s}}{\tt G}_{mk}(\mathcal{A}_{s})\big)\prod_{k=1}^{K}\sum_{m\in\mathcal{A}_{s}}{\tt I}_{mk}(\mathcal{A}_{s})}\notag\\&\hspace{-1cm}\textstyle{\geq-\prod_{k=1}^{K}\big({{\tt I}_{jk}(\{j\})}+\sum_{m\in\mathcal{A}_{t}}{\tt I}_{mk}(\mathcal{A}_{t})\big)\prod_{k=1}^{K}\sum_{m\in\mathcal{A}_{t}}{\tt G}_{mk}(\mathcal{A}_{t})}\notag\\&\textstyle{+\prod_{k=1}^{K}\big({\tt G}_{jk}(\{j\})+\sum_{m\in\mathcal{A}_{t}}{\tt G}_{mk}(\mathcal{A}_{t})\big)\prod_{k=1}^{K}\sum_{m\in\mathcal{A}_{t}}{\tt I}_{mk}(\mathcal{A}_{t})},
	\end{align}
	\vspace*{-.5cm}
	which equivalently becomes
	\begin{align*}
	&\textstyle{-\!\!\prod\limits_{k=1}^{K}\sum\limits_{m\in\mathcal{A}_{s}\cup\{j\}}{\tt I}_{mk}(\mathcal{A}_{s}\cup\{j\})\prod\limits_{k=1}^{K}\sum\limits_{m\in\mathcal{A}_{s}}{\tt G}_{mk}(\mathcal{A}_{s})\!\!+\!\!\prod\limits_{k=1}^{K}\sum_{m\in\mathcal{A}_{s}\cup\{j\}}{\tt G}_{mk}(\mathcal{A}_{s}\cup\{j\})\prod\limits_{k=1}^{K}\sum\limits_{m\in\mathcal{A}_{s}}{\tt I}_{mk}(\mathcal{A}_{s})}\notag\\&\textstyle{\geq\!\!-\prod\limits_{k=1}^{K}\sum\limits_{m\in\mathcal{A}_{t}\cup\{j\}}{\tt I}_{mk}(\mathcal{A}_{t}\cup\{j\})\prod\limits_{k=1}^{K}\sum\limits_{m\in\mathcal{A}_{t}}{\tt G}_{mk}(\mathcal{A}_{t})\!\!+\!\!\prod\limits_{k=1}^{K}\sum\limits_{m\in\mathcal{A}_{t}\cup\{j\}}{\tt G}_{mk}(\mathcal{A}_{t}\cup\{j\})\prod\limits_{k=1}^{K}\sum\limits_{m\in\mathcal{A}_{t}}{\tt I}_{mk}(\mathcal{A}_{t})}.
	\end{align*}
	Using the fact that
		\vspace*{-.4cm} $$\textstyle{\prod\limits_{k=1}^{K}\sum\limits_{m\in\mathcal{A}_{s}\cup\{j\}}{\tt I}_{mk}(\mathcal{A}_{s}\cup\{j\})\prod\limits_{k=1}^{K}\sum\limits_{m\in\mathcal{A}_{s}}{\tt I}_{mk}(\mathcal{A}_{s})\leq\prod\limits_{k=1}^{K}\sum\limits_{m\in\mathcal{A}_{t}\cup\{j\}}{\tt I}_{mk}(\mathcal{A}_{t}\cup\{j\})\prod\limits_{k=1}^{K}\sum\limits_{m\in\mathcal{A}_{t}}{\tt I}_{mk}(\mathcal{A}_{t})},$$
	we can finally write,
		\vspace*{-.4cm}
	\begin{align*}
	&{\frac{\prod_{k=1}^{K}\sum_{m\in\mathcal{A}_{s}\cup\{j\}}{\tt G}_{mk}(\mathcal{A}_{s}\cup\{j\})}{\prod_{k=1}^{K}\sum_{m\in\mathcal{A}_{s}\cup\{j\}}{\tt I}_{mk}(\mathcal{A}_{s}\cup\{j\})}-\frac{\prod_{k=1}^{K}\sum_{m\in\mathcal{A}_{s}}{\tt G}_{mk}(\mathcal{A}_{s})}{\prod_{k=1}^{K}\sum_{m\in\mathcal{A}_{s}}{\tt I}_{mk}(\mathcal{A}_{s})}} \nonumber \\ &\hspace{1cm} {\geq\frac{\prod_{k=1}^{K}\sum_{m\in\mathcal{A}_{t}\cup\{j\}}{\tt G}_{mk}(\mathcal{A}_{t}\cup\{j\})}{\prod_{k=1}^{K}\sum_{m\in\mathcal{A}_{t}\cup\{j\}}{\tt I}_{mk}(\mathcal{A}_{t}\cup\{j\})}-\frac{\prod_{k=1}^{K}\sum_{m\in\mathcal{A}_{t}}{\tt G}_{mk}(\mathcal{A}_{t})}{\prod_{k=1}^{K}\sum_{m\in\mathcal{A}_{t}}{\tt I}_{mk}(\mathcal{A}_{t})}},
	\end{align*}
	which reduces to Theorem~\ref{thm:submod}.
\end{proof}
\end{appendices}
	\ifCLASSOPTIONcaptionsoff
\newpage
\fi
%\nocite{*}
\bibliographystyle{IEEEtran}
\bibliography{bib_cell_free}
\end{document}